\documentclass{article}
\usepackage[utf8]{inputenc}

\usepackage[english]{babel}
\usepackage{graphicx}
\usepackage{subcaption}
\usepackage{algorithm}
\usepackage{algorithmic}
\usepackage[a4paper,top=3cm,bottom=2cm,left=3cm,right=3cm,marginparwidth=1.75cm]{geometry}

\usepackage{amsmath}
\usepackage{ amssymb }
\usepackage{graphicx}
\usepackage[colorlinks=true, allcolors=blue]{hyperref}
\usepackage{amsfonts}
\usepackage{amsthm}
\usepackage{hyperref}
\usepackage{natbib}

\usepackage{tikz}
\usetikzlibrary{shapes.geometric, arrows.meta, positioning}

\newtheorem{theorem}{Theorem}[section]
\newtheorem{lemma}{Lemma}[section]
\newtheorem{corollary}{Corollary}[section]
\newtheorem{definition}{Definition}[section]

\newcommand{\cX}{{\cal X}}

\newcommand{\cC}{{\cal C}}
\newcommand{\cA}{{\cal A}}

\newcommand{\cF}{{\cal F}}

\newcommand{\cV}{{\cal V}}

\newcommand\F{\mathcal{F}}
\renewcommand\S{\mathcal{S}}

\renewcommand\a{\mathbf{a}}
\renewcommand\c{\mathbf{c}}

\newcommand\y{\mathbf{y}}
\newcommand\x{\mathbf{x}}
\renewcommand\v{\mathbf{v}}
\renewcommand\u{\mathbf{u}}
\newcommand\f{\mathbf{f}}
\newcommand\z{\mathbf{z}}

\newcommand\A{\mathbf{A}}
\newcommand\U{\mathbf{U}}

\newcommand\X{\mathbf{X}}

\newcommand\I{\mathbb{I}}
\newcommand\E{\mathbb{E}}
\renewcommand\P{\mathbb{P}}
\newcommand\R{\mathbb{R}}

\title{Towards Automated Confidence Bound Provers and Searchers}
\author{My Phan \and Erik Learned-Miller\thanks{University of Massachusetts Amherst }}

\begin{document}

\maketitle
\begin{abstract}
In this work we lay the groundwork for automating the process of finding and proving the validity of lower confidence bounds of the mean. Our key finding is based on the observation that finding an optimal confidence bound under certain conditions can be formulated as an optimization problem. We use this observation to show that any valid confidence bound (such as Hoeffding's) must be a relaxation of a certain optimization problem. To automatically find and prove confidence bounds, we need to automate the process of defining and finding such a relaxation. We define a family of relaxations parameterized by a function called the order function. This family of relaxations can approximate any other target relaxation such as Hoeffding's by using the target as the order function. When the order function is linear (such as the sample mean in the case of Hoeffding's), our relaxation is a linear-size mixed-integer linear program.
\end{abstract}
\section{Introduction} 
\subsection{Goals}

In this paper we study valid lower confidence bounds (Definition~\ref{def:bound}) for parameters of the distribution (such as the mean or the quantile) from $n$ independently and identically distributed (i.i.d.) samples with replacement.

Let $n$ be the sample size. We use bold-faced letters to denote a vector of size $n$ and normal letters to denote a scalar. Uppercase letters denote random variables and lowercase letters denote values taken by them. For example, $X_i \in \R$ and $\X = (X_1, ..., X_n) \in \R^{n}$ are random variables. $x_i \in \R$ is a value of $X_i$, and $\x = (x_1, ..., x_n) \in \R^{n}$ is a value of $\X$.
\begin{definition}[Valid $(1-\alpha)$ lower confidence bound (LCB)]
\label{def:bound}
Let $b: \R^n \rightarrow \R$ be a function of the sample. Let $\mathcal{F}$ be a set of distributions, $\theta: \mathcal{F} \rightarrow \R$ be a parameter of the distribution and $\alpha \in (0,1)$ be the significance level. Let $\X$ denote $n$ i.i.d. samples from $F$. 

A function $b$ is called {\em a valid lower confidence bound of parameter $\theta$ on a set of probability distributions $\mathcal{F}$} if for all distributions $F\in \mathcal{F}$,  
\begin{equation}
\label{eq:valid}
\P_F( b(\X) \le \theta(F)) \geq 1-\alpha,
\end{equation}
where $\P_F$ denotes the probability under distribution $F$. 
\end{definition}
For example, the Hoeffding's bound $h(\x) = \frac{1}{n} (\sum_{i=1}^n x_i) - \sqrt{\frac{\log(1/\alpha)}{2n}}$ is a valid lower confidence bound for the set $\mathcal{F}$ of all distributions on $[0,1]$ because $\P_F(h(\X) \le \mu(F)) \ge 1 - \alpha$. For brevity we use the terms valid lower confidence bound, valid confidence bound and valid bound interchangeably. 

We say that a function $f$ {\em is ordered by $T$} (or \emph{$T$-order}) if, for any two samples $\x$ and $\y$,  $T(\x) \le T(\y)$ implies $f(\x) \le f(\y)$. For example, Hoeffding's bound \citep{Hoeffding1963} is a function ordered by the sample mean. We call $T$ an \emph{order function}. Let $f(\x)$ be any valid lower confidence bound ordered by $T$ on a set of distributions $\F$. We show that we can define a valid lower confidence bound $b^{\F}_T(\x)$ from $\F$ and $T$ such that $b^{\F}_T(\x)$ is ordered by $T$ and $\forall \x \in \R^n, ~ b^{\F}_T(\x) \ge f(\x)$ for any $f$ ordered by $T$ (Theorem~\ref{thm:optimal}). Therefore $b^{\F}_T(\x)$ is optimal among all valid lower confidence bounds ordered by $T$, called the \textbf{$T$-order optimal bound} or \textbf{$T$-optimal bound}. The analysis of bounds ordered by a given function goes back at least to~\cite{Fienberg77}. More recently, \cite{Learned-Miller2025} did an extensive analysis of multinomial bounds (over known support) conditioned on orderings.

\emph{The key observation is that we could compute the $T$-order optimal bound for any sample $\x$ by solving an optimization problem (Definition~\ref{def:optimal_bound}). Therefore any valid bound such as Hoeffding's is a relaxation of this optimization problem (manually derived and proven by the scientist). By automating the relaxation, we can automate the process of finding and proving valid bounds.}

The above result can potentially be applied to answer the following questions: 
\begin{enumerate}

    \item \textbf{Automated Bound Tuning.}  Given a valid lower confidence bound $f(\x)$, does there exist valid lower confidence bound $b(\x)$ such that $b(\x) \ge f(\x) ~\forall \x \in \R^n$? Can we compute it?

Surprisingly, in this work, we find the answer to be positive: If $f(\x)$ is a valid lower confidence bound, then $\forall \x~\in \R^n, f(\x) \le b^{\F}_f(\x)$ and therefore $b^{\F}_f(\x)$ is a better {or equal} valid lower confidence bound. We show a way to compute a lower bound of $b_f$ when $\F$ is the set of all distributions on $[0,1]$. 
    
        \item \textbf{Automated Bound Conversion.}  Given a function $f(\x)$ that is not a valid bound, does there exist a valid bound $b(\x)$ that is ordered by $f$? Can we compute it? 

In this work, we find the answer to be positive: If $f(\x)$ is not a valid lower confidence bound, then  $b^{\F}_f(\x)$ is a valid lower confidence bound ordered by $f$. We show a way to compute a lower bound of $b_f$ when $\F$ is the set of all distributions on $[0,1]$.

    \item \textbf{Automated Bound Proving.} Given a function $f: \R^n \rightarrow \R$, is it a valid lower confidence bound of parameter $\theta$ on distribution set $\F$?

    \begin{itemize}
\item Proving that $f$ is a valid lower confidence bound on $\F$ is equal to proving that $\forall \x, ~f(\x) \le b^{\F}_f(\x)$. For examples, the proofs of Hoeffding's valid lower confidence bound $h(\x)$ is equal to the proof that $\forall \x, ~h(\x) \le b^{\F}_h(\x)$ where $\F$ is the set of all distributions on $[0,1]$.

\item Proving  that $f$ is not a valid lower confidence bound is equal to proving  that $\exists \x, f(\x) > b^{\F}_T(\x)$. 
\end{itemize}
\item \textbf{Automated Bound Search.}  We can envision an agent that automatically searches for valid lower confidence bounds according to an objective. 

 This problem is an interesting future application of our result. In this work, we do not focus on this problem. The agent search for the bound from a function class (such as polynomials or neural networks) to satisfy the objective 
where $L$ is a loss function: 
\begin{align}
\min_f L(f) \text{ such that $f$ is a valid bound}. 
\end{align}

Using our method, we can covert the problem
to
\begin{align}
\min_T L(b_T)
\end{align}
because $b_T$ is guaranteed to be a valid bound. 

Existing works in statistics define many ways to evaluate confidence bounds, such as Uniformly Most Powerful test and Uniformly Most Accurate interval \citep{CaseBerg:01}, or minimum deterministic width \citep{shekhar2023near}. Existing papers also evaluate confidence bounds' performance by doing experiments on a set of distributions \citep{waudbysmith2021estimating}. These objectives can be formalized as loss functions and constraints. For examples, the loss functions can be the sum of the expected performances on uniform distribution and beta distribution.

\end{enumerate}

\subsection{Results}
$b^{\F}_T(\x)$ is the result of an optimization problem which could be {computationally} difficult to solve. When  $\F$ is the set of all distributions on $[0,1]$ and $T$ is monotonic, we present a lower confidence bound $\ell^{\text{MILP}}_T(\x)$ of $b^{\F}_T(\x)$, computed from a linear-sized Mixed-Integer Linear Program. For the applications above:

\begin{enumerate}
\item \textbf{Automated Bound Tuning.} Given a sample $\x$ and a valid bound $T(\x)$, we use $T(\x)$ as the ordering function and derive a valid bound $\ell^{\text{MILP}}_T(\x)$, which is a lower bound of the $T$-optimal bound $b_T(\x)$.  Therefore it is sometimes better than $T(\x)$. We leave the derivation of $\ell^{\text{MILP}}_T(\x)$ that can be arbitrarily close to $b_T(\x)$ and therefore always equal to or better than $T(\x)$ to future works. 

\item \textbf{Automated Bound Conversion.} Given a sample $\x$ and a function $T(\x)$ (not necessarily a valid bound), we can derive a valid bound $\ell^{\text{MILP}}_T(\x)$ ordered by $T$. 

\item \textbf{Automated Bound Proving.} To prove that $T$ is a valid lower confidence bound on $\F$, we can prove that $\forall \x, ~T(\x) \le \ell^{\text{MILP}}_T(\x)$.

In this paper we show how to compute $\ell^{\text{MILP}}_T( \x)$ for a single sample $\x$. To prove that $\forall \x, ~T(\x) \le \ell^{\text{MILP}}_T( \x)$, we need to make additional assumptions. For examples, to show that $\forall \x,  \ell^{\text{MILP}}_T(\x) - T(\x) \ge 0$ using the oracle that computes $\Delta(\x):= \ell^{\text{MILP}}_T( \x) - T(\x)$ for a given $\x$, we need to make assumptions about $\Delta(\x)$ such as Lipschitz continuity. We leave this for future works.
\end{enumerate}

\textbf{Additional Potential Benefit.} In addition to the listed potential applications, there is another advantages of our method: If $\cF$ is the set of all distributions on a support $\cX$ that is more fine-grained then $[0,1]$ (such as $[0,0.2] \cup [0.5, 0.6]$) our bound $\ell^{\text{MILP}}_T( \x)$ computed from $\cX$ can potentially make use of this information to get tighter bound. Bounds such as Hoeffding's can not take advantage of this information.

\textbf{Drawbacks. } There are also several drawbacks of our method: 
\begin{itemize}
    \item $\ell^{\text{MILP}}_T(\x)$ can be expensive to computed. However the bound for each value of $T(\x)$ needs to be computed only once and can be saved in a table indexed by $T(\x)$. An user with a new sample $\x$ only needs to compute $T(\x)$ and use the table to get the value of $\ell^{\text{MILP}}_T(\x)$. 
    \item The bound $\ell^{\text{MILP}}_T(\x)$ can only be computed for the set of distributions on bounded support $[0,1]$. We leave the case of support $[0, \infty)$ to future works. 
    \item Our bounds do not have a closed-form formula so they cannot be directly used in mathematical proofs. However they can potentially be used to prove other bounds that have closed-form formulae. 
    \item We use samples of uniform i.i.d. random variables to compute  $\ell^{\text{MILP}}_T(\x)$. We need to account for the extra randomness from the samples of uniform random variables (Section~\ref{sec:analysis}). 
\end{itemize} 
The structure of the paper is as follows. In Section~\ref{sec:def} we define the optimal bound $b^{\F}_T(\x)$ and shows that it is optimal. In Section~\ref{sec:MILP} we convert $b^{\F}_T(\x)$ to chance-constrained programs and then show how to compute its lower bound using Mixed-Integer Linear Programs. 
In Section~\ref{sec:exp} we perform experiments to compute the bound when the order function $T$ is the sample mean, Anderson's bound (a proven valid bound) and Gaffke's bound (an unproven bound with good performance). 

\section{ A \texorpdfstring{$T$}{T}-Optimal Confidence Bound}
\label{sec:def}

We adopt the notations and build some of our results from \citep{phan2021practical}. 
 Let $X_{(1)}\leq X_{(2)}\leq ...\leq X_{(n)}$ denote the order statistics of a random variable and  $x_{(1)}\leq x_{(2)}\leq ...\leq x_{(n)}$ the order statistics of a specific sample. For any $2$ vectors $\x, \y \in \R^n$, we write $\x \leq \y$ or $\x \preceq \y$ if $x_{(i)} \leq y_{(i)}~\forall i, 1 \le i \le n$.

We consider sampling with replacement. Given a sample $\X = \x$ of size $n$ and a confidence level $1-\alpha$, we would like to calculate a valid lower confidence bound for a parameter $\theta$ of the distribution. Let $F$ be the cumulative distribution function (CDF) of $X_i$, i.e., the true distribution. Let $\mathcal{F}$ be the set of distributions that we know $F$ belongs to. Given a sample $\x$, $\alpha$, $\mathcal{F}$ and  any function $T: \R^n \rightarrow \R$ we will calculate an valid lower confidence bound $b^{\F}_T(\x, \alpha) $ for $\theta(F)$. 
Throughout the paper we will omit the superscripts and subscripts when the context is clear.

In Section~\ref{sec:sub_def} we define the bound (Definition~\ref{def:optimal_bound}) and show that it is valid. In Section~\ref{sec:opt} we show that it is optimal among all $T$-order bounds (Theorem~\ref{thm:optimal}). 

In Section~\ref{sec:sample_mean_order} we calculate the sample-mean-optimal bound, which has a closed-form formula. This bound is always better than Hoeffding's since Hoeffding's is ordered by the sample mean. 

\subsection{Definition}
\label{sec:sub_def}
The following definition gives a valid lower confidence bound of a parameter $\theta$ (such as the mean or the quantile) of the distribution for a sample $\x$. The definition is a minor modification of pivoting the CDF and inverting a hypothesis test (Appendix~\ref{app:generalization}). Instead of constructing a confidence set of the parameter $\theta$, we construct a confidence set of the CDF and then find the lower bound of the parameter $\theta$ in the confidence set of the CDF. (\cite{Learned-Miller2025} refers to this as the \textit{central optimization problem}.) Let $F_X$ denote the CDF of $X$. Let $\P_F$ denote the probability with respect to CDF $F$ and $\P_X$ denote the probability with respect to random variable $X$. Given a CDF $F_X$ of random variable $X$, let $F_X^{\ge}(x):= \P_F(X \ge x)$.  Let $F_{T(\X)}$ denote the CDF of $T(\X)$, and therefore $F^{\ge}_{T(\X)}(y) := \P_{F_{T(\X)}}(T(\X) \ge y)$. Let $\alpha \in (0,1)$ be the significance level ($1-\alpha$ is the confidence level).

 \begin{definition}
 \label{def:optimal_bound}
 Let $\F$ be a set of distributions. Let $\theta$ be a parameter of distributions and suppose that $\theta(F)$ exists for all $F \in \F$. Let $T: \R^n \rightarrow \R$ be a statistic of a sample of size $n$. 
Given a sample $\x \in \R^n$,  define
$$
b^{\mathcal{F}}_T(\x, \alpha) :=\inf_F \; \theta(F)
$$
subject to 
\begin{enumerate}
\item $F \in \mathcal{F}$
    \item $\P_F(T(\X) \ge T(\x)) > \alpha$
\end{enumerate}
and let $b^{\mathcal{F}}_T(\x, \alpha) := \infty$ if the problem is infeasible.
\end{definition}

We omit the superscripts and subscript and the parameter $\alpha$ when the context is clear. 
We will show in Theorem~\ref{thm:correct} that $b_T(\X)$ is smaller than $\theta(F)$ with probability at least $1-\alpha$.

\begin{theorem}[Validity]
\label{thm:correct}
Let $\F$ be a set of distributions and suppose that $\theta(F)$ exists for all $F \in \F$. Let $T: \R^n \rightarrow \R$ be a statistic of a sample of size $n$. For any $F \in \mathcal{F}$, the bound produced by Definition~\ref{def:optimal_bound} satisfies
\begin{align*}
\P_F(b^{\F}_T(\X, \alpha) \le \theta(F)) \ge 1-\alpha. 
\end{align*}
\end{theorem}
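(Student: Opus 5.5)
Fix the true distribution $F \in \F$ and write $G(t) := \P_F(T(\X) \ge t)$ for the survival function of the statistic $T(\X)$ under $F$. The plan is to reduce the event $\{b_T(\X) > \theta(F)\}$ to an event about where $T(\X)$ falls in its own distribution, and then bound that by $\alpha$ in the style of a probability-integral-transform argument.

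\textbf{Step 1: reduction.} Suppose a sample $\x$ satisfies $G(T(\x)) > \alpha$. Then $F$ itself satisfies both constraints of Definition~\ref{def:optimal_bound}: $F \in \F$, and $\P_F(T(\X) \ge T(\x)) > \alpha$. Hence $F$ is feasible, so the infimum gives $b_T(\x) \le \theta(F)$. Taking the contrapositive,
\begin{align*}
\{b_T(\X) > \theta(F)\} \subseteq \{G(T(\X)) \le \alpha\}.
\end{align*}
This inclusion also covers the infeasible case, where $b_T = \infty$: infeasibility forces $G(T(\x)) \le \alpha$. It therefore suffices to show $\P_F(G(T(\X)) \le \alpha) \le \alpha$.

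\textbf{Step 2: the tail bound.} Let $Y = T(\X)$, and note that $G$ is non-increasing and left-continuous. Define $t^* := \inf\{t : G(t) \le \alpha\}$. If this set is empty, the probability in question is $0$. Otherwise, monotonicity gives
\begin{align*}
\{G(Y) \le \alpha\} \subseteq \{Y \ge t^*\}.
\end{align*}
There are two cases:
\begin{itemize}
\item If the infimum is attained, i.e.\ $G(t^*) \le \alpha$, then $\P(Y \ge t^*) = G(t^*) \le \alpha$.
\item If it is not attained, then $\{G(Y) \le \alpha\} \subseteq \{Y > t^*\}$. Writing $\P(Y > t^*) = \lim_{t \downarrow t^*} G(t)$, each $t > t^*$ has $G(t) \le \alpha$ by monotonicity, so the limit is at most $\alpha$.
\end{itemize}
In both cases $\P_F(G(Y) \le \alpha) \le \alpha$. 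Combined with Step 1, this gives $\P_F(b_T(\X) \le \theta(F)) \ge 1 - \alpha$.

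\textbf{Main obstacle.} The delicate part is Step 2 when $T(\X)$ has atoms, since then $G(Y)$ is not uniform. The strict inequality $>\alpha$ in Definition~\ref{def:optimal_bound} is exactly what makes the complementary event a "$\le \alpha$" super-level set of $Y$, whose probability is controlled by the boundary-case analysis above. A secondary concern is measurability of $b_T(\X)$. I would sidestep it by phrasing the conclusion as: the event $\{b_T(\X) \le \theta(F)\}$ contains the measurable event $\{G(T(\X)) > \alpha\}$, which has probability at least $1-\alpha$. This assumes $T$ is measurable.
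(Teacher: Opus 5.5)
Your proof is correct and is essentially the paper's argument. You use the same feasibility reduction, $\{b^{\F}_T(\X,\alpha) > \theta(F)\} \subseteq \{F^{\ge}_{T(\X)}(T(\X)) \le \alpha\}$; the paper then cites the probability-integral-transform lemma $\P(F_Z(Z) \le y) \le y$ with $Z = -T(\X)$, whereas you prove that super-uniformity directly through the threshold $t^*$ (which is finite because $G(t) \to 1 > \alpha$ as $t \to -\infty$) and add a measurability remark the paper omits.
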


Note that the theorem can be applied to any set $\mathcal{F}$ such as: $\F$ is the set of all distributions on $[0,0.5] \cup [0.6,1]$, $\F$ is the set of all discrete distributions on $\{0, 0.2, 1\}$, $\F$ is the set of all continuous distributions on $[0,1]$, $\F$ is the set of all distributions on $[0, \infty)$ and $\F$ is the set of all Gaussian distributions $\mathcal{N}(\mu, 1)$ with unknown mean $\mu$.

\subsection{\texorpdfstring{$T$}{T}-Optimality}
\label{sec:opt}
First we show that the bound in Definition~\ref{def:optimal_bound} is ordered by $T$. 
\begin{lemma}
\label{lem:helper}
Let $\F$ be a set of distributions and suppose that $\theta(F)$ exists for all $F \in \F$. Let $T: \R^n \rightarrow \R$ be a statistic of a sample of size $n$.
For any $2$ samples $\x, \y \in \R^n$, if $T(\x) \le T(\y)$ then $b^{\F}_T(\x, \alpha) \le b^{\F}_T(\y, \alpha)$.
\end{lemma}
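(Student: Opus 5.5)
The plan is to compare the feasible sets of the two optimization problems in Definition~\ref{def:optimal_bound}. I will show that every distribution feasible for $\y$ is also feasible for $\x$. Since both problems minimize the same objective $\theta(F)$, an infimum over a larger set can only be smaller, which gives $b^{\F}_T(\x,\alpha) \le b^{\F}_T(\y,\alpha)$.

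Concretely, write $S(\z) := \{F \in \F : \P_F(T(\X) \ge T(\z)) > \alpha\}$ for the feasible set at a sample $\z$, so that $b^{\F}_T(\z,\alpha) = \inf_{F \in S(\z)} \theta(F)$. Assume $T(\x) \le T(\y)$. The key step is the event inclusion $\{T(\X) \ge T(\y)\} \subseteq \{T(\X) \ge T(\x)\}$: if $T(\X) \ge T(\y)$, then $T(\X) \ge T(\y) \ge T(\x)$. By monotonicity of probability, for every $F \in \F$,
$$\P_F(T(\X) \ge T(\x)) \ge \P_F(T(\X) \ge T(\y)).$$
So if $F \in S(\y)$, meaning the right-hand side exceeds $\alpha$, then the left-hand side also exceeds $\alpha$ and $F \in S(\x)$. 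This proves $S(\y) \subseteq S(\x)$.

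Next, I take infima. Because $S(\y) \subseteq S(\x)$, we get $\inf_{S(\x)} \theta \le \inf_{S(\y)} \theta$. The only case needing separate care is infeasibility, where the paper sets the bound to $\infty$. If $S(\y) = \emptyset$, then $b^{\F}_T(\y,\alpha) = \infty$ and the inequality holds trivially. If $S(\y) \neq \emptyset$, then $S(\x) \supseteq S(\y)$ is nonempty, so both values are genuine infima (possibly $-\infty$), and the set-inclusion argument applies directly.

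I do not expect a real obstacle here. The one point to check is measurability: $\{T(\X) \ge t\}$ must be an event, so that $\P_F$ of it is defined. The definition already presupposes this, since it uses $\P_F(T(\X) \ge T(\x))$, so I will assume $T$ is measurable, as the paper implicitly does. The remaining bookkeeping is the $\infty$ convention for infeasible problems, which is handled by the case split above.
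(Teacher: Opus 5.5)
Your proposal is correct and follows the same route as the paper's proof. Both show that any distribution feasible for $\y$ is also feasible for $\x$, because $\P_F(T(\X) \ge T(\x)) \ge \P_F(T(\X) \ge T(\y))$, so the infimum over the larger feasible set can only be smaller. Your explicit treatment of the infeasible ($\infty$) case and of measurability is bookkeeping the paper leaves implicit.
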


Now we show that for any sample, the bound in Definition~\ref{def:optimal_bound} is the tightest with respect to all bounds ordering the samples by $T$. 

\begin{theorem}[$T$-order Optimality]
\label{thm:optimal}
Let $\F$ be a set of distributions and suppose that $\theta(F)$ exists for all $F \in \F$. Let $T: \R^n \rightarrow \R$ be a statistic of a sample of size $n$. Let $g:\R^n \rightarrow \R$  be a $(1-\alpha)$-valid lower confidence bound on $\F$ that is order by $T$. Then 
\begin{align}
~\forall \x \in \R^n, g(\x) \le b^{\F}_T(\x, \alpha) . 
\end{align}

We call $b^{\F}_T(\x, \alpha) $ the \emph{$T$-order optimal bound} or \emph{$T$-optimal bound}. 
\end{theorem}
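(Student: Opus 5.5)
We argue by contradiction, fixing a sample and a distribution that violate the claim. Suppose there is $\x \in \R^n$ with $g(\x) > b^{\F}_T(\x,\alpha)$. Then $b^{\F}_T(\x,\alpha) < \infty$, so the optimization problem in Definition~\ref{def:optimal_bound} is feasible. Since $b^{\F}_T(\x,\alpha)$ is an infimum strictly below $g(\x)$, there is a feasible $F \in \F$ with
\[
\theta(F) < g(\x) \quad\text{and}\quad \P_F(T(\X) \ge T(\x)) > \alpha .
\]
This $F$ is the distribution under which we will exhibit a failure of validity of $g$.

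The key step uses the ordering of $g$ by $T$. For any sample $\y$ with $T(\y) \ge T(\x)$, i.e.\ $T(\x) \le T(\y)$, the definition of $T$-order gives $g(\x) \le g(\y)$. Hence on the event $\{T(\X) \ge T(\x)\}$ we have
\[
g(\X) \ge g(\x) > \theta(F),
\]
which yields the event inclusion $\{T(\X) \ge T(\x)\} \subseteq \{g(\X) > \theta(F)\}$. Taking probabilities under $F$,
\[
\P_F(g(\X) > \theta(F)) \ge \P_F(T(\X) \ge T(\x)) > \alpha,
\]
so $\P_F(g(\X) \le \theta(F)) < 1-\alpha$. This contradicts the $(1-\alpha)$-validity of $g$ on $\F$, since $F \in \F$. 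Therefore $g(\x) \le b^{\F}_T(\x,\alpha)$ for all $\x$.

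The only delicate point is extracting the witness $F$. We need a feasible $F$ with $\theta(F)$ \emph{strictly} less than $g(\x)$, not merely $\theta(F)$ close to the infimum. This follows directly from the definition of infimum together with the strict inequality $g(\x) > b$, including the case $b = -\infty$. The strict inequality $> \alpha$ in the feasibility constraint is exactly what makes the final probability comparison strict, so no boundary issue arises. Measurability of the event $\{g(\X) > \theta(F)\}$ is implicitly assumed by validity of $g$, and Lemma~\ref{lem:helper} is not needed here, though together with this result it shows that $b^{\F}_T$ is itself the maximal $T$-ordered valid bound.
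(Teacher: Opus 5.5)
Your proof is correct and follows the paper's argument. Both argue by contradiction: take a feasible witness $F$ with $\theta(F) < g(\x)$, then use the $T$-ordering of $g$ to get $\P_F(g(\X) > \theta(F)) \ge \P_F(T(\X) \ge T(\x)) > \alpha$, which contradicts validity. Your extraction of the witness directly from the strict inequality is slightly cleaner than the paper's $\epsilon$-approximation of the infimum, because it also covers the case $b^{\F}_T(\x,\alpha) = -\infty$, where the paper's step $b^{\F}_T(\x,\alpha) \ge \theta(G) - \epsilon$ cannot hold literally.
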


If the order function $T$ is a valid lower confidence bound (such as Hoeffding's \citep{Hoeffding1963} when $\mathcal{F}$ is the set of all distributions on $[0,1]$), our bound $b^{\F}_T$ is an equal or better valid lower confidence bound. If $T$ is not a valid lower confidence bound (such as Student's $t$ \citep{student1908probable} when $\mathcal{F}$ is the set of all distributions on $[0,1]$), our bound $b^{\F}_T$ is a valid lower confidence bound. Therefore $b^{\F}_T$ is always an improvement over $T$.

\begin{corollary}
\label{cor:proof_equiv}
Let $\F$ be a set of distributions and suppose that $\theta(F)$ exists for all $F \in \F$. Let $T: \R^n \rightarrow \R$ be a statistic of a sample of size $n$.
$T$ is a $1-\alpha$-valid lower confidence bound on $\F$ if and only if $\forall \x \in \R^n, T(\x) \le b^{\F}_T(\x, \alpha)$. 
\end{corollary}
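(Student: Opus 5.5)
We prove the two directions of Corollary~\ref{cor:proof_equiv} separately. Both reduce to results already stated, once we note that $T$ is trivially ordered by itself: if $T(\x)\le T(\y)$ then $T(\x)\le T(\y)$.

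\emph{($\Rightarrow$)} Suppose $T$ is a $(1-\alpha)$-valid lower confidence bound on $\F$. Since $T$ is ordered by $T$, we apply Theorem~\ref{thm:optimal} with $g=T$. This gives $T(\x)\le b^{\F}_T(\x,\alpha)$ for all $\x\in\R^n$, which is the claim. No further work is needed here.

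\emph{($\Leftarrow$)} Suppose $T(\x)\le b^{\F}_T(\x,\alpha)$ for every $\x$. Fix $F\in\F$. The pointwise inequality gives the event inclusion
\begin{align*}
\{\x : b^{\F}_T(\x,\alpha)\le\theta(F)\}\subseteq\{\x : T(\x)\le\theta(F)\}.
\end{align*}
Monotonicity of probability and Theorem~\ref{thm:correct} then give
\begin{align*}
\P_F(T(\X)\le\theta(F))\ge\P_F(b^{\F}_T(\X,\alpha)\le\theta(F))\ge 1-\alpha.
\end{align*}
Since $F\in\F$ was arbitrary, $T$ is a valid $(1-\alpha)$ lower confidence bound on $\F$ in the sense of Definition~\ref{def:bound}.

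The only point needing care is measurability. Events such as $\{T(\X)\le\theta(F)\}$ must be measurable for the probabilities to make sense. This is implicitly assumed whenever one speaks of $T$ being a valid bound, and also in Theorem~\ref{thm:correct}, so we take it as part of the standing assumptions. The value $b=\infty$ on infeasible samples causes no trouble: on those samples the inequality $T\le b$ holds automatically, and such samples lie outside the event in Theorem~\ref{thm:correct} anyway. With that caveat, the proof is a direct combination of the two earlier results.
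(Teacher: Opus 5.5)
Your proof is correct and is the same argument the paper gives. The forward direction is Theorem~\ref{thm:optimal} applied with $g=T$; you make explicit that $T$ is trivially ordered by itself, which the paper leaves implicit. The reverse direction transfers the validity of $b^{\F}_T$ from Theorem~\ref{thm:correct} to $T$ through the pointwise inequality $T\le b^{\F}_T$.
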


Any valid confidence bound $T$ is a relaxation of the optimization problem in Def.~\ref{def:optimal_bound}. Therefore we can perform bound tuning: given a valid confidence bound $T$, we can output $b_T$, which is an equal or better valid bound.

\subsection{Sample Mean as the Order Function}
\label{sec:sample_mean_order}
In this section we derive the optimal bound where the parameter $\theta$ is the distribution's mean, the order function $T$ is the sample mean and $\F$ is the set of all distributions on $[0,\infty)$ or $[0,1]$. We omit the subscript $T$ in $b^{\F}_T$ and use $b^{[a,b]}$ to denote the bound in Definition~\ref{def:optimal_bound} when $\F$ is the set of distributions on the interval $[a,b]$. The following bounds are derived from the tail inequality in \citep{luczak2016maximaltailprobabilitysums}: 
\begin{theorem}[Derived from \citep{luczak2016maximaltailprobabilitysums}]
\label{thm:luczak_positive}
Let $T$ be the sample mean. Suppose (i) $n=2$ and $\alpha < 16/25$ or (ii) $n=3$ and $ \alpha < 0.63$ or (iii) $5 \le n \le 10,000 $ and $\alpha \le 0.3$.

Let $\F$ be the set of all distributions on $[0,\infty)$. Then the $T$-optimal bound is 
\begin{align}
    b^{[0, \infty)}(\x) =
    (\sum_{i=1}^n x_i)(1 - (1-\alpha)^{1/n}).
\end{align}

Let $\F$ be the set of all distributions on $[0,1]$ and suppose that $\frac{\sum_{i=1}^n x_i}{n} \le 1/n$. Then the $T$-optimal bound is
\begin{align}
    b^{[0, 1]}(\x) =
    (\sum_{i=1}^n x_i)(1 - (1-\alpha)^{1/n}). 
\end{align}

\end{theorem}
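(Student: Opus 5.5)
The plan is to reduce the optimization problem in Definition~\ref{def:optimal_bound} to a maximal tail probability problem and then invoke the tail inequality of \citep{luczak2016maximaltailprobabilitysums}. Write $s := \sum_{i=1}^n x_i$ and $S_n := \sum_{i=1}^n X_i$. Since $T$ is the sample mean, the constraint $\P_F(T(\X)\ge T(\x))>\alpha$ is the same as $\P_F(S_n \ge s) > \alpha$. So
\[
b^{\F}(\x) = \inf\{\mu(F) : F\in\F,\ \P_F(S_n\ge s)>\alpha\}.
\]
Equivalently, $b^{\F}(\x)$ is the infimum of those $\mu$ for which $M_\F(\mu,s) := \sup\{\P_F(S_n\ge s): F\in\F,\ \mu(F)=\mu\}$ exceeds $\alpha$, provided the supremum is attained where needed. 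I would first dispose of the trivial case $s=0$. Every nonnegative distribution satisfies $\P(S_n\ge 0)=1>\alpha$, and the point mass at $0$ has mean $0$, so $b=0$, which matches the formula.

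For $s>0$, scale by $s$ so that the threshold is $1$. The key input is the result of \citep{luczak2016maximaltailprobabilitysums}. For i.i.d.\ nonnegative variables with mean $\mu\le s/n$ (and in the regimes listed), the maximal tail $\sup \P(S_n\ge s)$ equals $1-(1-\mu/s)^n$. This value is attained by the two-point distribution putting mass $\mu/s$ on $s$ and the rest on $0$, for which $\P(S_n\ge s)=\P(\text{at least one } X_i=s)$. I would check that hypotheses (i)--(iii) on $(n,\alpha)$ are exactly what makes the relevant $\mu$ lie in the range where Luczak et al.\ prove this extremal form. The relevant $\mu$ are those with $1-(1-\mu/s)^n$ near $\alpha$, i.e.\ $\mu/s \approx 1-(1-\alpha)^{1/n}$. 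Concretely, translating their condition on the tail level into a condition on $\alpha$ should yield $16/25$ for $n=2$, $0.63$ for $n=3$, and $0.3$ for $5\le n\le 10{,}000$. This translation step is the main obstacle: one must match their parametrization carefully and confirm monotonicity of the maximal tail in $\mu$ over the range used.

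Given that, the feasibility condition is, for $\mu$ in the covered range, $1-(1-\mu/s)^n>\alpha$, that is, $\mu > s\,(1-(1-\alpha)^{1/n})$. Two directions then give the value. For the upper bound on $b$, take the two-point distributions with $\mu \downarrow s(1-(1-\alpha)^{1/n})$; these are feasible, so $b\le s(1-(1-\alpha)^{1/n})$. For the lower bound, any feasible $F$ with smaller mean would have $\P_F(S_n\ge s)\le 1-(1-\mu/s)^n\le\alpha$, which contradicts feasibility. Here I use that the maximal tail is monotone in $\mu$, so means below the covered range are also infeasible. This gives the $[0,\infty)$ formula.

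For $\F$ the distributions on $[0,1]$ with $s\le 1$, the lower bound transfers immediately. Distributions on $[0,1]$ form a subset of those on $[0,\infty)$, so the maximal tail can only shrink and the infimum can only grow: $b^{[0,1]}\ge b^{[0,\infty)}$. For the upper bound, the extremal two-point distributions are supported on $\{0,s\}\subseteq[0,1]$ exactly because $s\le 1$, i.e.\ $\frac{1}{n}\sum_i x_i\le 1/n$. Hence they remain feasible in $\F$, so the infimum is the same, which gives the $[0,1]$ formula.
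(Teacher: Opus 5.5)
Your proposal is essentially the paper's proof. Both dispose of $s=0$, then apply the Luczak et al.\ maximal-tail result at $v_\alpha=s(1-(1-\alpha)^{1/n})$ to show that every $G$ with $\mu(G)\le v_\alpha$ has $\P_G(S_n\ge s)\le\alpha$. Both then approach $v_\alpha$ from above with two-point distributions on $\{0,s\}$, and reuse those same distributions for $[0,1]$ when $s\le 1$.

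The step you flag as the main obstacle is light. The lemma is stated as a supremum over all $G$ with $\mu(G)\le v$ under the downward-closed condition $v/c<x_0(n)$, plus $v/c<\frac{1}{2n-1}$ for $n\ge 5$, which is where $n\le 10{,}000$ comes from. So no separate monotonicity argument is needed. The hypotheses on $\alpha$ follow from $\alpha<1-(1-x_0(n))^n$, using $x_0(2)=2/5$, $x_0(3)=\frac{-3+\sqrt{321}}{52}$, and $x_0(n)>\frac1n-\frac{1}{2n^2}$. These thresholds are exact only for $n=2$ and merely sufficient otherwise, so ``exactly what makes $\mu$ lie in the range'' overstates it.
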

\textbf{Note. }Recall that Hoeffding's is ordered by the sample mean, and therefore is always worse than or equal to this bound when the condition of the theorem is satisfied.

\section{A Lower Bound of the \texorpdfstring{$T$}{T}-Optimal Bound}
\label{sec:MILP}
In the previous section we derive a sample-mean-order bound by using the calculation of the tail $\sup_{F: \mu_F \le v} \P_F(\sum_i X_i \ge c)$ from \citep{luczak2016maximaltailprobabilitysums}.  The Hoeffding's bound is also derived from Hoeffding's inequality. Such method is not general because it requires us to manually calculate and prove the result. We aim to automate this process by transforming the problem of calculating $b_T(x)$ to an optimization problem where $T$ appears in the constraint. Then we can solve the problem for a large class of $T$ including linear functions. 

In this section we derive a lower bound of $b_T(\x)$
that can be computed efficiently as mixed-integer linear programs (MILP) for a large class of order function $T$. We define the following property of $T$ that will be required in several steps. 
\begin{definition}[Monotonic Statistics]
    A statistics $T: \R^n \rightarrow \R$ is called \emph{monotonic} if $T$ satisfies the condition: for any $\y, \z \in \R^n$, if $\y \leq\z$ then $T(\y) \le T(\z)$. 
\end{definition}
 In Section~\ref{sec:discrete_bound} we derive a lower bound and an upper bound of the $T$-optimal bound  by searching through discrete distributions. In Section~\ref{sec:chance_constraint} we convert the lower bound to a chance-constrained program. In Section~\ref{sec:sample_approximation} we show how to compute the lower bound using sample approximation, turning the problem into a MILP. The entire process is in Figure~\ref{fig:process}\footnote{Gemini generated the code template for this figure.}. 


\begin{figure}[htbp]
    \centering 
    
    \begin{tikzpicture}[
        block/.style={
            rectangle, 
            draw=blue!70!black, 
            fill=blue!10, 
            thick, 
            text width=2.2cm, 
            align=center, 
            minimum height=1.2cm,
            rounded corners=2pt
        },
        arrow/.style={
            -{Stealth[scale=1.2]}, 
            thick,
            draw=gray!80!black
        },
        label/.style={
            font=\small\sffamily,
            text=black!80
        }
    ]

        \node (block1) [block] {$T$-optimal bound $b_T({\alpha- \delta})$ (Thm.~\ref{thm:optimal})};
        \node (block2) [block, right=2cm of block1] {Discrete bound $\ell^{\text{Discrete}}_T({ \alpha- \delta})$ (Lemma~\ref{lem:discrete_bound})};
        \node (block3) [block, right=4cm of block2] {Chance-constrained bound $\ell^{\text{Uniform}}_T({ \alpha- \delta})$ (Lemma~\ref{lem:chance_constrained_bound})};
        \node (block4) [block, below=1.5 cm of block3] {Sample-approximation bound $\ell^{\text{Sample}}_T({ \epsilon})$ (Lemma~\ref{lem:sample_approximation_bound})};
        \node (block5) [block, left=4 cm of block4]  {MILP bound $\ell^{\text{MILP}}_T({ \epsilon})$ (Lemma~\ref{lem:milp_bound})};

        \draw [arrow] (block1) -- node[label, above] {discretization} node[label, below] {param: $m$}(block2);
        \draw [arrow] (block2) -- node[label, above] {constraint conversion } node[label, below] {Lemma~\ref{lem:constraint_equal}} (block3);
        \draw [arrow] (block3) -- node[label, left] {sample approximation }  node[label, right] {param: $N$} (block4);
        \draw [arrow] (block4) -- node[label, above] {MILP gap} node[label, below] {param: $\gamma$}(block5);
    \end{tikzpicture}

    \caption{ Converting the $T$-optimal bound to a linear-sized MILP.
    \label{fig:process}. We modify the significance level $\alpha$ to smaller values $\alpha - \delta$ and $\epsilon \le \alpha - \delta$ to ensure that the overall confidence level is still $1 - \alpha$ (Section~\ref{sec:analysis}). }
    
\end{figure}

\subsection{Discretization}
\label{sec:discrete_bound}

In this section we assume that the support of distributions in $\F$ is a subset of $ [0,1]$. The bound $b_T(\x)$ in Definition~\ref{def:optimal_bound} requires optimizing over all distributions $F$. To make the problem tractable, we will convert it to optimizing over all distributions on $m$ points  where $m \in \mathbb{N}$ is a parameter of our choice. For any value of $m$, we derive a  lower bound and upper bound of $b_T(\x)$. As $m$ increases, the lower bound and the upper bound converges to $b_T(\x)$, but their computations are more expensive. Therefore $m$ should be chosen to balance the trade-off between computation and performance.

 First we introduce some notations. For a vector $\x \in [0,1]^{m}$, let $x_{(1)}\leq x_{(2)}\leq ...\leq x_{(m)}$ denote the ordered vector. Given a vector $\x \in [0,1]^{m}$ and $0 \le f_{(1)}\leq f_{(2)}\leq ...\leq f_{(m)}=1$, define the following CDF that has CDF $f_{(i)}$ at $x_{(i)}$: 
\begin{align}
    H_{\x, \f}(y) := 
    \begin{cases}
    0 &\text{ if } y \le x_{(1)}\\
    f_{(i)} &\text{ if } x_{(i)} \le y < x_{(i+1)} , 1 \le i \le m-1\\
    1 &\text{ if } y \ge x_{(m)}.
    \end{cases}
\end{align}

We pick a set $\v =(\frac{1}{m}, \cdots, \frac{m-1}{m})$ as the support. Let  $\v \oplus 0 := ( 0, v_1, \cdots, v_{m-1})$ and $\v \oplus 1 := ( v_1, \cdots, v_{m-1}, 1)$. Note that the CDF $H_{\v \oplus 1, F(\v)}$ is always below $F$ and $H_{\v \oplus 0, F(\v)}$ is always above $F$. The discrete distribution $F$ on $m$ points $\v \oplus 1$ is parameterized by $\f \in [0,1]^{m-1}$. We are going to search for $\f$\footnote{Alternatively, we could search for $\v$ instead of searching for $\f$. If we search for $\v$ we do not have to assume that the support is in $[0,1]$ but the MILP will be slightly more complicated (discussed in Section~\ref{sec:MILP}). We leave this to future works.}.

\begin{lemma}[Discretized Bound]
    \label{lem:discrete_bound}
    Let $\F$ be the set of all distributions on support $ [0,1]$. Let $\theta$ be a parameter of distribution.  Suppose that $\theta(F)$ exists for all $F \in \F$. Suppose that $\theta$ satisfies: for any $2$ CDFs $F$ and $G$, if $~\forall x \in \R, F(x) \le G(x)$ then $\theta(F) \ge \theta(G)$. Let $T: \R^n \rightarrow \R$ satisfy the following condition: if CDF $G$ first-order stochastically dominates CDF $F$, then $\P_F(T(\X) \ge T(\x)) > \alpha$ implies $\P_{G}(T(\X) \ge T(\x)) > \alpha$. Let $m \in \mathbb{N}, m \ge 1$. 
    
 Let $C(F, \alpha)$ denote the constraint: $\P_{F} ( T(\X) \ge T(\x)) > \alpha$. 
 
Let $\Delta_m:=\sup_{\f \in [0,1]^m, f_m = 1} (\theta(H_{\v \oplus 1, \f})  - \theta(H_{\v \oplus 0, \f})).$ 

For any $\x \in \R^n$,  the upper bound and lower bound are defined as: 
\begin{align}
   & u^{\text{Discrete}}_T( \x, \alpha, m) \\
   &~~~~~~~~:= \begin{cases} \inf_{\f \in [0,1]^m, f_m = 1} \theta(H_{\v \oplus 1 , \f}) \text{ subject to } C(H_{\v \oplus 1 , \f} , \alpha) &\text{ if the problem is feasible} \\
    \infty &\text{ if the problem is infeasible} \end{cases}\\
    &\ell^{\text{Discrete}}_T(\x, \alpha, m) := u^{\text{Discrete}}_T( \x, \alpha, m) - \Delta_m. 
\end{align}

Let $b^{\F}_T(\x)$ the $T$-optimal bound of $\theta$. Then
\begin{align}
&\ell^{\text{Discrete}}_T(\x, m) \le b^{\F}_T(\x) \le u^{\text{Discrete}}_T(\x, m )\\
&u^{\text{Discrete}}_T(\x, m ) - b^{\F}_T(\x) \le \Delta_m \\
&b^{\F}_T(\x) - \ell^{\text{Discrete}}_T(\x, m) \le \Delta_m
\end{align}
and $\Delta_m = \frac{1}{m}$ if $\theta$ is the mean of the distribution. 
\end{lemma}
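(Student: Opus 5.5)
The plan is a sandwich argument built on two discretizations of each feasible $F$: the ``rounded up'' distribution $H_{\v \oplus 1, F(\v)}$ and the ``rounded down'' distribution $H_{\v \oplus 0, F(\v)}$. As noted before the lemma, $H_{\v \oplus 1, F(\v)}(y) \le F(y) \le H_{\v \oplus 0, F(\v)}(y)$ for all $y$. So $H_{\v \oplus 1, F(\v)}$ first-order stochastically dominates $F$, and $F$ dominates $H_{\v \oplus 0, F(\v)}$. The two hypotheses on $\theta$ and $T$ are tailored to exploit exactly these dominance relations.

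\textbf{Upper bound.} Every $H_{\v \oplus 1,\f}$ with $\f\in[0,1]^m$, $f_m=1$, is a distribution on $[0,1]$, so it lies in $\F$. Hence any feasible point of the problem defining $u^{\text{Discrete}}_T$ is feasible for Definition~\ref{def:optimal_bound}. Taking infima gives $b^{\F}_T(\x) \le u^{\text{Discrete}}_T(\x,m)$, and this also covers the infeasible case, where $u=\infty$.

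\textbf{Lower bound.} If $b^{\F}_T(\x)=\infty$ there is nothing to prove. Otherwise take any feasible $F$ and set $\f = (F(v_1),\dots,F(v_{m-1}),1)$. Since $H_{\v\oplus 1,\f}$ dominates $F$, the hypothesis on $T$ carries $C(F,\alpha)$ over to $C(H_{\v\oplus 1,\f},\alpha)$. Therefore $\f$ is feasible for the discrete problem, and $u^{\text{Discrete}}_T(\x,m) \le \theta(H_{\v\oplus 1,\f})$.

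Next, $H_{\v\oplus 0,\f} \ge F$ pointwise, so the monotonicity of $\theta$ gives $\theta(H_{\v\oplus 0,\f}) \le \theta(F)$. Combining,
\[
u^{\text{Discrete}}_T(\x,m) \le \theta(H_{\v\oplus 0,\f}) + \bigl(\theta(H_{\v\oplus 1,\f})-\theta(H_{\v\oplus 0,\f})\bigr) \le \theta(F)+\Delta_m .
\]
Taking the infimum over feasible $F$ yields $u^{\text{Discrete}}_T - \Delta_m \le b^{\F}_T$, which is $\ell^{\text{Discrete}}_T \le b^{\F}_T$. The two gap inequalities then follow immediately: each is a rearrangement of the sandwich $u-\Delta_m = \ell \le b \le u$.

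\textbf{Computing $\Delta_m$ for the mean.} Both distributions put mass $f_{(i)}-f_{(i-1)}$ on the $i$-th atom, with $f_{(0)}=0$. In $\v\oplus1$ the atoms are $i/m$, and in $\v\oplus0$ they are $(i-1)/m$. Hence the difference of the means is $\sum_i (f_{(i)}-f_{(i-1)})\cdot\frac1m = \frac{f_m}{m} = \frac1m$ for every admissible $\f$, so the supremum equals $1/m$.

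The main obstacle is bookkeeping with the CDF conventions. The definition of $H$ uses ``$y\le x_{(1)} \Rightarrow 0$'', which clashes with $f_{(1)}$ possibly being positive at $y=x_{(1)}$, and for $\v\oplus 0$ the first atom is $0$. I would first fix right-continuous conventions and verify the pointwise inequalities $H_{\v\oplus1,F(\v)}\le F\le H_{\v\oplus0,F(\v)}$ interval by interval, using the monotonicity of $F$ on $[v_{i},v_{i+1})$. Everything else is a direct substitution.
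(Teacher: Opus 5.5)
Your proposal is correct and follows the paper's proof essentially step for step. $b^{\F}_T \le u^{\text{Discrete}}_T$ because the discrete distributions form a subset of $\F$. For each feasible $F$, the rounded-up $H_{\v\oplus 1,F(\v)}$ is feasible by the hypothesis on $T$, and the rounded-down $H_{\v\oplus 0,F(\v)}$ together with the monotonicity of $\theta$ gives $\theta(H_{\v\oplus 1,F(\v)}) \le \theta(F)+\Delta_m$. The $1/m$ value for the mean comes from the same telescoping computation. Your remark about fixing right-continuous conventions is well taken: the definition of $H$ assigns conflicting values at $y=x_{(1)}$, and under its first case one would get $H_{\v\oplus 0,F(\v)}(0)=0<F(0)$ whenever $F$ has an atom at $0$, a point the paper's proof passes over.
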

We will omit some of the inputs and subscript of $ u^{\text{Discrete}}_T( \x, \alpha, m)$ and $ \ell^{\text{Discrete}}_T( \x, \alpha, m)$ when the context is clear.

If $T$ is monotonic then $T$ satisfies the condition in Lemma~\ref{lem:discrete_bound}. If $T$ does not satisfy the condition in Lemma~\ref{lem:discrete_bound}, as $m$ increases the discrete CDF on $\v$ might converge to the true CDF $F$ and therefore the discrete bound $\ell$ might still approximate $b$ well. We leave this extension to future works. 

\subsection{Conversion to Chance-constrained Programs}
\label{sec:chance_constraint}
Consider the problem of computing $\ell^{\text{Discrete}}_T(\x, m )$ in Lemma~\ref{lem:discrete_bound}. Let $H:= H_{\v\oplus 1, \f}$, which is a discrete distribution on support $\v\oplus 1$. Then 
\begin{align}
P_H(T(\X) \ge T(\x)) = \sum_{\y \in (\v\oplus 1)^n, T(\y) \ge T(\x)} \Pi_{i=1}^n \P_H(X_i = y_i).
\end{align}

The sum could potentially has $\Omega(m^n)$ elements. If we assume that the support $\v$ is known and want to find $f_k = \P(X_i = v_k)$, then the term on the right-hand side (RHS) has degree $n$ in term of $\f$. If we assume that $\f$ is known and want to find the quantiles $\v$, then the term on the RHS contains integer constraints. In both cases the RHS is a term with exponential size and is non-linear or has integer constraints, which makes the optimization problem difficult to solve. 

To resolve this problem, we will  convert the constraint $\P(T(\X) \ge T(\x))$ to chance constraints \citep{Kim2015, nemirovski2006convex, Pagnoncelli2009Sample, luedtke2007sample} so that we can  obtain a MILP with size linear in $m$ and $n$.

\begin{lemma}
\label{lem:constraint_equal}
 Let $T: \R^n \rightarrow \R$ be a statistic of a sample of size $n$. Let $F$ satisfy $\P_F(X \le 1) = 1$. Let $\U$ be $n$ i.i.d samples from the half-open interval $(0,1]$. Then for any $\x \in \R^n$:
\begin{align}
\P_{F}(T(\X) \ge T(\x)) \le \P_{\U} (\exists \y \in (-\infty, 1]^n: T(\y) \ge T(\x),  \U \leq F^{\geq}(\y)) .
\end{align}

If $T$ is monotonic, then for any $\x \in \R^n$:
\begin{align}
\P_{F}(T(\X) \ge T(\x)) =\P_{\U} (\exists \y \in (-\infty, 1]^n: T(\y) \ge T(\x),  \U \leq F^{\geq}(\y)) .
\end{align}
\end{lemma}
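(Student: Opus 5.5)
The plan is to couple $\X$ with $\U$ through the inverse of the survival function. For $u\in(0,1]$ define the generalized inverse
\[
Q(u):=\sup\{y\le 1 : F^{\ge}(y)\ge u\}.
\]
The set is nonempty: $F^{\ge}(y)\to 1$ as $y\to-\infty$ and $u\le 1$. Because $F^{\ge}$ is non-increasing and left-continuous (it is $y\mapsto 1-\P_F(X<y)$), the supremum is attained. This gives the Galois-type equivalence: for $y\le 1$, $u\le F^{\ge}(y)$ if and only if $y\le Q(u)$.

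The first step is to show that, with $U$ uniform on $(0,1]$, the variable $Q(U)$ has distribution $F$. Indeed $\P(Q(U)\ge y)=\P(U\le F^{\ge}(y))=F^{\ge}(y)$ for $y\le 1$. For $y>1$ both sides are $0$, since $Q\le 1$ and $\P_F(X\le 1)=1$. Setting $\X:=(Q(U_1),\dots,Q(U_n))$ coordinatewise, we get $n$ i.i.d.\ samples from $F$. Note that $\X\in(-\infty,1]^n$ and $\U\le F^{\ge}(\X)$ coordinatewise by attainment.

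For the inequality: on the event $T(\X)\ge T(\x)$, the point $\y:=\X$ itself witnesses the existential event $\{\exists \y\in(-\infty,1]^n: T(\y)\ge T(\x),\ \U\le F^{\ge}(\y)\}$. Taking probabilities gives $\P_F(T(\X)\ge T(\x))\le \P_\U(\cdot)$. Here I read $\U\le F^{\ge}(\y)$ as the coordinatewise inequality $U_i\le F^{\ge}(y_i)$, which is what the coupling naturally produces. I would remark on this reading and also check it under the order-statistic meaning of $\le$ in the paper; permutation invariance handles the latter if $T$ is symmetric.

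For equality under monotonic $T$, take any witness $\y$. From $U_i\le F^{\ge}(y_i)$, the equivalence gives $y_i\le Q(U_i)=X_i$ for each $i$, so $\y\le\X$ coordinatewise, hence also in sorted order. Monotonicity then yields $T(\X)\ge T(\y)\ge T(\x)$. So the existential event is contained in $\{T(\X)\ge T(\x)\}$, and the two events coincide.

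The main obstacle is the careful handling of the generalized inverse: attainment of the supremum (left-continuity of $F^{\ge}$), the edge case $u$ near $0$ where $Q(u)=1$ is forced by the restriction $y\le 1$, and the measurability of the existential event. The last is sidestepped by the coupling, since the event equals (or contains) the measurable event $\{T(\X)\ge T(\x)\}$. If measurability of the event is a concern in the first inequality, I would interpret the right-hand side as an outer probability.
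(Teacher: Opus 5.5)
Your proof is correct. For the equality it uses the same coupling as the paper, but for the inequality it takes a different and simpler route.

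\textbf{Comparison with the paper.} The paper first proves a helper version with $\mathbf{U}\in[0,1]^n$ and $\mathbf{y}\in\mathbb{R}^n$, then transfers it to $(0,1]$ and $(-\infty,1]^n$. Its inequality comes from a chain of event inclusions $\mathbf{X}=\mathbf{y}\Rightarrow -\mathbf{X}\preceq-\mathbf{y}\Rightarrow G(-\mathbf{X})\preceq F^{\geq}(\mathbf{y})$, where $G$ is the CDF of $-X$. It then replaces the probability integral transform $G(-\mathbf{X})$ by $\mathbf{U}$, citing Lemma~\ref{lem:cdf_vs_U}. Only for the equality does the paper use a quantile coupling, $G^{-1}(\mathbf{U})$. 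That coupling is exactly your $Q$: $-G^{-1}(u)=\sup\{y : F^{\geq}(y)\ge u\}$, and the cap $y\le 1$ is automatic for $u>0$.

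You use this one coupling for both directions. The inequality then reduces to noting that $\mathbf{X}=Q(\mathbf{U})$ is itself a witness, and the boundary restrictions are built into $Q$ instead of needing a separate transfer step. This is shorter and self-contained. It also avoids the last step of the paper's chain, which needs more than the one-dimensional Lemma~\ref{lem:cdf_vs_U}: it implicitly requires a stochastic-dominance argument for the whole vector, using independence of coordinates and the fact that the event is a lower set in its first argument. The paper's route, in exchange, stays inside the probability-integral-transform framework it inherits from earlier work and reuses the same lemma as the validity theorem.

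\textbf{Two small technical points.}

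First, the set defining $Q(1)$ is empty when $X$ is unbounded below, because then $F^{\geq}<1$ everywhere. So your nonemptiness claim fails at $u=1$. Since $\{U_i=1\}$ is a null event, you can simply define $Q(1)$ arbitrarily.

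Second, you need no symmetry assumption to handle the paper's order-statistic reading of $\le$.
\begin{itemize}
\item For the inequality: coordinatewise $\le$ implies order-statistic $\le$, so the paper's existential event contains yours and the inequality holds for every $T$.
\item For the equality: $F^{\geq}$ and $Q$ are both non-increasing, so $(F^{\geq}(\mathbf{y}))_{(i)}=F^{\geq}(y_{(n+1-i)})$ and $Q(U_{(i)})=X_{(n+1-i)}$. Applying your Galois equivalence to sorted coordinates gives $y_{(k)}\le X_{(k)}$ for all $k$, that is, $\mathbf{y}\preceq\mathbf{X}$ directly.
\item In any case, a monotonic $T$ in the paper's sense is automatically permutation invariant, since a vector and any permutation of it are $\preceq$ each other.
\end{itemize}
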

We can sample $\U$ to approximate the RHS, where the unknown parameter $F$ appears in linear constraints $\U \leq F^{\geq}(\y)$.

From Lemma~\ref{lem:constraint_equal}, the lower bound in Lemma~\ref{lem:discrete_bound} is equivalent to the following bound:
\begin{lemma}[Chance-constrained Program]
\label{lem:chance_constrained_bound}
  Let $\F$ be the set of all distributions on support $ [0,1]$. Let $\theta$ be a parameter of distribution.  Suppose that $\theta(F)$ exists for all $F \in \F$. Let $T: \R^n \rightarrow \R$ be a statistic.  Let $m \in \mathbb{N}, m \ge 1$.

Let $\P_{\U(0,1]}$ denote the probability where $\U$ is sampled from the half-open interval $(0,1]$. Let $C^{\text{Uniform}}(F, \alpha)$ denote the constraint: 
\begin{align}
\P_{\U(0,1]} (\exists \y \in (-\infty, 1]^n: T(\y) \ge T(\x),  \U \leq F^{\geq}(\y))  > \alpha. 
\end{align}

For any $\x \in \R^n$, define:  
\begin{align}
&\ell^{\text{Uniform}}_T( \x,  \alpha, m) \\
&:= \begin{cases}\inf_{\f \in [0,1]^m, f_m = 1} \theta(H_{\v \oplus 1 , \f}) - \Delta_m \text{ subject to } C^{\text{Uniform}}(H_{\v \oplus 1 , \f} , \alpha)  &\text{ if the problem is feasible} \\
\infty &\text{ if the problem is infeasible.} \end{cases}
\end{align}

Let $\ell^{\text{Discrete}}_T(\x, \alpha, m)$ be the discretized bound defined in Lemma~\ref{lem:discrete_bound}.  Then
\begin{align}
 \ell^{\text{Discrete}}_T( \x,  \alpha, m) \le \ell^{\text{Uniform}}_T( \x,  \alpha, m).
\end{align}

If $T$ is monotonic, then
\begin{align}
    \ell^{\text{Discrete}}_T( \x, \alpha, m) = \ell^{\text{Uniform}}_T( \x,  \alpha, m).
    \end{align}
    \end{lemma}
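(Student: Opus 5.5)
The plan is to compare the two optimization problems directly. Both $\ell^{\text{Discrete}}_T(\x,\alpha,m)=u^{\text{Discrete}}_T(\x,\alpha,m)-\Delta_m$ and $\ell^{\text{Uniform}}_T(\x,\alpha,m)$ minimize the same objective $\theta(H_{\v\oplus 1,\f})-\Delta_m$ over the same parameter set $\{\f\in[0,1]^m : f_m=1\}$, and both use the value $\infty$ when infeasible. They differ only in the constraint: $C(H_{\v\oplus1,\f},\alpha)$ in one case and $C^{\text{Uniform}}(H_{\v\oplus1,\f},\alpha)$ in the other. So everything reduces to an inclusion between the feasible sets
\begin{align*}
S^{\text{D}} &:= \{\f : \P_{H_{\v\oplus1,\f}}(T(\X)\ge T(\x))>\alpha\},\\
S^{\text{U}} &:= \{\f : \P_{\U}(\exists \y\in(-\infty,1]^n: T(\y)\ge T(\x),\ \U\le H^{\ge}_{\v\oplus1,\f}(\y))>\alpha\}.
\end{align*}
An infimum over a smaller set is at least the infimum over a larger one, with $\infty$ covering the empty case. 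Hence the stated inequality $\ell^{\text{Discrete}}_T\le \ell^{\text{Uniform}}_T$ follows once we show $S^{\text{U}}\subseteq S^{\text{D}}$. The monotone equality would then follow from $S^{\text{U}}=S^{\text{D}}$.

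For the monotone case I would apply Lemma~\ref{lem:constraint_equal} with $F=H_{\v\oplus1,\f}$. This distribution is supported on $\v\oplus1\subset[0,1]$, so $\P_F(X\le1)=1$ holds. The lemma then says the two probabilities coincide for every $\f$. Thus $C$ and $C^{\text{Uniform}}$ are the same predicate, $S^{\text{U}}=S^{\text{D}}$, and the two infima agree, giving $\ell^{\text{Discrete}}_T=\ell^{\text{Uniform}}_T$.

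For general $T$, the inclusion $S^{\text{U}}\subseteq S^{\text{D}}$ requires $C^{\text{Uniform}}\Rightarrow C$. In other words, we need $\P_{\U}(\exists\y:\dots)\le \P_F(T(\X)\ge T(\x))$. This is the main obstacle, because the first part of Lemma~\ref{lem:constraint_equal} gives only the opposite inequality. To see why, use the quantile coupling $X_i:=\sup\{y: F^{\ge}(y)\ge U_i\}$. It satisfies $U_i\le F^{\ge}(X_i)$, so the event $T(\X)\ge T(\x)$ is contained in the existential event, witnessed by $\y=\X$. For the reverse containment I would take a witness $\y$ with $\U\le F^{\ge}(\y)$ and try to use $\y\le\X$ coordinatewise to pass from $T(\y)\ge T(\x)$ to $T(\X)\ge T(\x)$. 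That step is exactly monotonicity of $T$, so it is where a non-monotone $T$ can fail.

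I would first try to replace $\X$ by a better coupling, for example one selecting a witness $\y$ on the finite support $\v\oplus1$. If that fails, I expect that without monotonicity the general argument actually yields $S^{\text{D}}\subseteq S^{\text{U}}$. That would give $\ell^{\text{Uniform}}_T\le\ell^{\text{Discrete}}_T$, the direction that keeps $\ell^{\text{Uniform}}_T$ a valid lower bound on $b_T$ via Lemma~\ref{lem:discrete_bound}. In that case the inequality as worded should be checked against this orientation, while the equality for monotone $T$ stands in either case.
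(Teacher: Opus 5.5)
Your reduction to the feasible sets $S^{\text{D}}$ and $S^{\text{U}}$, and your monotone case, are exactly the paper's argument. The paper's whole proof is ``follows directly from Lemma~\ref{lem:constraint_equal}'', i.e.\ apply that lemma to $F=H_{\v\oplus1,\f}$, which is supported in $[0,1]$, so that $C$ and $C^{\text{Uniform}}$ are the same predicate and the infima coincide. Your coupling $X_i=\sup\{y: F^{\geq}(y)\ge U_i\}$ is a correct, self-contained re-derivation of both halves of that lemma. It is the paper's $-G^{-1}(\U)$ coupling written directly in terms of $F^{\geq}$.

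For general $T$, drop the plan of searching for a better coupling and commit to your suspicion: the inequality as printed is false, so no coupling can prove it. What Lemma~\ref{lem:constraint_equal} actually gives is $C\Rightarrow C^{\text{Uniform}}$, hence $S^{\text{D}}\subseteq S^{\text{U}}$ and $\ell^{\text{Uniform}}_T\le\ell^{\text{Discrete}}_T$. That is all the paper's one-line proof establishes, and it is the direction the paper itself asserts in Appendix~\ref{app:non_monotonic}.

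A counterexample to the printed direction: take $n=1$, $T(y)=-y$, $\x=0$, and $\theta$ the mean. Every $H_{\v\oplus1,\f}$ is supported on $\{1/m,\dots,1\}$, so $\P_H(X\le 0)=0$. Hence $C$ is never satisfied and $\ell^{\text{Discrete}}_T=\infty$. Yet $y=0$ witnesses the event in $C^{\text{Uniform}}$ for every $U\in(0,1]$, because $H^{\geq}_{\v\oplus1,\f}(0)=1$. So every $\f$ is feasible and $\ell^{\text{Uniform}}_T=1/m-\Delta_m=0$.

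For a version with both sides finite, take $T(y)=\I[y\le 0 \text{ or } y\ge 1]$ with $m\ge 2$. Then $\ell^{\text{Discrete}}_T=\alpha(1-1/m)>0=\ell^{\text{Uniform}}_T$.

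So the correct statement is $\ell^{\text{Uniform}}_T\le\ell^{\text{Discrete}}_T$ in general, with equality for monotone $T$. Once you remove the hedge, your proposal proves exactly that.
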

We will omit some of the inputs and subscript of  $ \ell^{\text{Uniform}}_T( \x, \alpha, m)$ when the context is clear.

\subsection{Sample Approximation}
\label{sec:sample_approximation}
We use the discretization from Section~\ref{sec:discrete_bound} and the chance constraints from Section~\ref{sec:chance_constraint} to obtain a MILP with size linear in $m$ and $n$. We compute a lower bound $\ell^{\text{Sample}}_T(\x, \epsilon, m,  N, \U)$ of the the chance-constrained program $\ell^{\text{Uniform}}_T$ using sample approximation \citep{luedtke2007sample}.

We first sample $N$ samples of $\U^{i} \in (0,1]^n$ so that the constraint $C^{\text{Uniform}}(H_{\v \oplus 1 , \f}, \alpha )$  is true becomes $C^{\text{Sample}}(H_{\v \oplus 1 , \f}, \epsilon, N)$. 
\begin{lemma}[informal]
\label{lem:sample_approximation_bound}
    Let $\U^{i} \in (0,1]^n, ~1 \le i \le N$ be $N$ i.i.d. samples of vectors of $n$ i.i.d. samples of uniform random variables. Let $C^{\text{Sample}}(F, \epsilon, N, \U )$ define the constraint:  
    \begin{align}
  \frac{1}{N}\sum_{i=1}^N \I\left[  \exists \y^i \in  (-\infty, 1]: T(\y^i) \ge T(\x),      \U^i \leq F^{\geq}(\y^i)  \right] \ge  \epsilon . 
\end{align}
For any $\x \in \R^n$, define
\begin{align}
 &\ell^{\text{Sample}}_T(\x, \epsilon, m, N, \U) \\ &:= \begin{cases}\inf_{\f \in [0,1]^m, f_m = 1} \theta(H_{\v \oplus 1 , \f}) \text{ subject to } C^{\text{Sample}}(H_{\v \oplus 1 , \f} , \epsilon, N, \U) &\text{ if the problem is feasible}\\
 \infty &\text{ if the problem is infeasible.}\end{cases} 
 \end{align}
 Let $\ell^{\text{Uniform}}_T(\x, \alpha, m)$ be the chance-constrained bound defined in Lemma~\ref{lem:chance_constrained_bound}. Then $\ell^{\text{Sample}}_T(\x, \epsilon, m , N, \U)$ with $\epsilon \le \alpha$ is a lower bound of ${\ell}^{\text{Uniform}}_T(\x, \alpha,m)$ with high probability. 
\end{lemma}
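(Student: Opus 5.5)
The approach follows the sample-approximation argument of \cite{luedtke2007sample}: show that, with high probability over the draw of $\U^1,\dots,\U^N$, every $\f$ that is feasible for the chance-constrained program of Lemma~\ref{lem:chance_constrained_bound} is also feasible for the sampled program. Once the sampled feasible set contains the true feasible set, the infimum of $\theta(H_{\v\oplus 1,\f})$ over the larger set is no larger, which gives $\ell^{\text{Sample}}_T \le \ell^{\text{Uniform}}_T$. The $-\Delta_m$ shift in $\ell^{\text{Uniform}}_T$ is handled by defining $\ell^{\text{Sample}}_T$ with the same shift, or by noting that it only helps the inequality in the correct direction. If the true problem is infeasible, $\ell^{\text{Uniform}}_T=\infty$ and there is nothing to prove.

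\textbf{Step 1 (fixed $\f$).} For a fixed feasible $\f$, set $p(\f):=\P_{\U}(\exists \y: T(\y)\ge T(\x), \U\le H^{\ge}(\y))>\alpha$. The indicators in $C^{\text{Sample}}$ are i.i.d.\ Bernoulli$(p(\f))$. Since $\epsilon\le\alpha<p(\f)$, Hoeffding's inequality (or the binomial tail) gives
\begin{align*}
\P\Big(\tfrac1N\textstyle\sum_i \I[\cdot] < \epsilon\Big)\le \exp\!\big(-2N(p(\f)-\epsilon)^2\big).
\end{align*}
This probability is small once $\epsilon<\alpha$ strictly, or once we restrict to $\f$ with $p(\f)\ge\alpha+\eta$ for some margin $\eta>0$.

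\textbf{Step 2 (uniformity over $\f$).} This is the main obstacle. The statement must hold simultaneously for all feasible $\f$, and there are uncountably many of them. I would first try a monotonicity reduction. Because $H^{\ge}_{\v\oplus1,\f}(\y)$ is determined by the $m-1$ tail masses, the event $\{\exists\y:\dots\}$ is monotone in the tail vector: enlarging tail masses enlarges the event. Moreover, $\theta(H_{\v\oplus1,\f})$ is monotone in the same direction under the stochastic-dominance assumption of Lemma~\ref{lem:discrete_bound}. So it suffices to control a minimal set of feasible tail vectors. I would discretize the tail vectors on a grid of mesh $\eta'$ in $[0,1]^{m-1}$. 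This gives at most $(1/\eta')^{m-1}$ points. I would then apply a union bound over the grid, rounding each feasible $\f$ down to a grid point $\f'$ that still satisfies $p(\f')>\alpha-\eta$, and whose $\theta$ differs by at most the mesh (at most $\eta'$ for the mean). The failure probability is then bounded by $(1/\eta')^{m-1}\exp(-2N(\alpha-\eta-\epsilon)^2)$, which yields the required $N$ as a function of $m$, $\epsilon$, $\alpha$ and the target confidence $\delta$, in the spirit of Luedtke--Ahmed.

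\textbf{Step 3 (conclude).} On the good event, every $\f$ feasible for $C^{\text{Uniform}}$ is, up to the rounding of Step 2, feasible for $C^{\text{Sample}}$. Taking infima then gives $\ell^{\text{Sample}}_T\le\ell^{\text{Uniform}}_T$, up to a slack controlled by the mesh. This slack can be absorbed by running the sampled program at level $\epsilon$ strictly below $\alpha$, which the informal statement permits ($\epsilon\le\alpha$, with $\epsilon=\alpha-\delta'$ in practice). The resulting high-probability guarantee is the $\delta$ that Figure~\ref{fig:process} subtracts from $\alpha$.
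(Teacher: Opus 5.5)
Your argument can be completed, but it takes a much heavier route than the paper, and the step you call the main obstacle is not needed. You aim for a uniform statement: after rounding, every $\f$ feasible for $C^{\text{Uniform}}$ is feasible for $C^{\text{Sample}}$. That costs you a grid on $[0,1]^{m-1}$, a union bound over roughly $(1/\eta')^{m-1}$ points, and a quantitative continuity estimate for $p(\f)$ that you assert without proof. The estimate does hold: the event is a union of boxes whose corners take only the values $1,1-f_1,\dots,1-f_{m-1}$ in each coordinate, so moving each $f_k$ by at most $\eta'$ changes $p$ by at most $n(m-1)\eta'$.

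The paper only needs to compare two infima. So it suffices that \emph{one} optimal point $\f^*_\alpha$ of the chance-constrained program lies in the sampled feasible set $\hat\cV_\epsilon$. Because $\f^*_\alpha$ is fixed before $\U$ is drawn, your Step~1 applied to this single point is the entire proof. Hoeffding's inequality for the i.i.d.\ indicators $\I[\text{Fail}(\f^*_\alpha,\U^i)]$ bounds the failure probability by $\exp(-2N(\alpha-\epsilon)^2)$. Hence $N\ge\ln(1/\delta)/(2(\alpha-\epsilon)^2)$ suffices, with no dependence on $m$ or $n$. Your route instead needs $N$ of order $\bigl((m-1)\ln(nm/\eta)+\ln(1/\delta)\bigr)/(\alpha-\eta-\epsilon)^2$.

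The fixed-point view also gives the paper's second option, which a union-bound argument cannot reach. If $\epsilon\le\alpha$ and $N\epsilon$ is an integer, the binomial mean--median fact gives $\P(\f^*_\alpha\in\hat\cV_\epsilon)\ge 1/2$ for each run. So the minimum over $M\ge\log_2(1/\delta)$ independent runs is below $\ell^{\text{Uniform}}_T$ with probability at least $1-\delta$, even when $\epsilon=\alpha$.

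What your uniform route buys is a conclusion valid simultaneously for every objective that is monotone under stochastic dominance, and one that needs no exact optimizer. The paper tacitly assumes $\f^*_\alpha$ exists, which can fail because the constraint $>\alpha$ is strict. This is easily repaired: take an $\eta$-optimal feasible point and let $\eta\downarrow 0$ along the nested events $\{\ell^{\text{Sample}}_T\le\ell^{\text{Uniform}}_T+\eta\}$.

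Two smaller corrections. First, rounding the tail masses down lowers $\theta(H_{\v\oplus 1,\f})$ as well as $p$, by the same dominance monotonicity you invoke. So Step~3 has no slack in $\theta$; the only slack is the probability margin $\eta$, which is what forces $\epsilon<\alpha-\eta$. A genuine slack in $\theta$ could not be absorbed by lowering $\epsilon$ anyway. Second, the $-\Delta_m$ in $\ell^{\text{Uniform}}_T$ lowers the right-hand side, so it hurts rather than helps. The consistent reading, which the paper's argument implicitly uses, is to compare the two infima with the same shift.
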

 We omit some of the inputs and subscripts of $\ell^{\text{Sample}}_T(\x, \epsilon, m,  N, \U)$ when the context is clear.

Using the result in \citep{{luedtke2007sample}} we discuss how to choose the value of $N$ and $\epsilon$  so that the sample-approximation result $\ell^{\text{Sample}}_T(\x, \epsilon, m,  N)$ is a lower bound of the chance-constrained program ${\ell}^{\text{Uniform}}_T(\x, \alpha - \delta, m)$ with high probability $1-\delta$ where $\alpha > \delta > 0$. We compute the chance-constrained program ${\ell}^{\text{Uniform}}_T(\x, \alpha - \delta, m)$ with significance level $\alpha - \delta$ instead of $\alpha$ to compensate for the $\delta$ probability that $\ell^{\text{Sample}}$ is not a lower bound of $\ell$. There are $2$ options with the proofs in Appendix~\ref{app:sample_approximation}:
\begin{itemize}
   \item Solve the problem in Lemma~\ref{lem:sample_approximation_bound} only $M=1$ time with $0 < \epsilon < \alpha - \delta$ and $N$ samples. If
\begin{align}
N \ge \frac{1}{2(\alpha- \delta -\epsilon)^2}\ln\frac{1}{\delta}, 
\end{align}
then \begin{align}
\P_{\U}(\ell^{\text{Sample}}_T(\x, \epsilon, m, N, \U) \le {\ell}^{\text{Uniform}}_T(\x, \alpha- \delta, m)) \ge 1- \delta. 
\end{align}
For example, if $\alpha = 0.05, \delta = 0.001, \epsilon = 0.03$ then $N = 9568$. If $\alpha = 0.05, \delta = 0.001, \epsilon = 0.048$ then $N = 3.5$ millions.  Note that $N$ is derived from Hoeffding's inequality, so it could be overly conservative. Using a tighter inequality can make it smaller. 
\item Choose $\epsilon$ such that $0 < \epsilon \le \alpha - \delta$ and $\epsilon N$ is an integer.  Generate and solve $M$ the problem in Lemma~\ref{lem:sample_approximation_bound} independently, each with with $N$ samples and $\epsilon$, and then take the smallest result among $M$ results, denoted $\underline\ell^{\text{Sample}}_T(\x, \epsilon,m, N, \U, M)$\footnote{Gemini contributed to the derivation of this formulation.}. Note that for each of the $M$ problems, we need to resample $N$ samples of $n$ i.i.d. uniform random variables separately (see Appendix~\ref{app:sample_approximation} for the details). Then if  $M \ge \log_2\left( \frac{1}{\delta}\right)$  and $\epsilon \le \alpha - \delta$ then
\begin{align}
\P_{\U}(\underline\ell^{\text{Sample}}_T(\x, \epsilon,m, N, \U, M) \le {\ell}^{\text{Uniform}}_T(\x, \alpha- \delta, m)) \ge 1 - \delta. 
\end{align}
In this option $N$ can have any value, but in practice we want $N$ to be large so that the bound is tighter. 
If $\delta = 0.001$ then $M =10$. 
\end{itemize}
To simplify notations, we use $\underline\ell^{\text{Sample}}_T(\x, \epsilon,m, N, \U, M)$ to denote the final result in either option, and $\ell^{\text{Sample}}_T(\x, \epsilon,m, N, \U(s))$ to refer to the result of solving a single problem $s$ among $M$ problems. 

 \subsection{Mixed-Integer Linear Program}
\label{sec:MILP_bound}
The following transformations turn the sample-approximation problem $\ell^{\text{Sample}}$ with the constraint $C^{\text{Sample}}(H_{\v \oplus 1 , \f} , \epsilon, N) $ into a MILP. The full algorithm is in Appendix~\ref{app:complete_algo}. To simplify notations, in this section we use $\ell^{\text{Sample}}_T(\x, \epsilon,m, N, \U)$ to refer to the result of solving a single problem in Lemma~\ref{lem:sample_approximation_bound} and $\U\in (0,1]^{N \times n}$ to be $N$ samples $\U^i \in (0,1]^n$ of $n$ i.i.d. uniform random variables.   The output of the MILP solver is denoted $\ell^{\text{MILP}}(\x, \epsilon, m,  N, \U, \gamma)$, which is an approximation of $\ell^{\text{Sample}}_T$ due to the MILP gap $\gamma$. 
\begin{itemize}
\item \textbf{Eliminating $\I$: }
Using big M method for counting, $\sum_{i=1}^N \I[x_i \ge y_i] \ge k$ is equivalent to 
\begin{align}
&\forall i, 1 \le i \le N,  z_i \in \{0,1\}; ~\sum_{i=1}^N z_i \ge k; ~~\forall i, 1 \le i \le N: x_i \ge z_i y_i. 
\end{align}
If $z_i = 1$, then $x_i \ge y_i$. Therefore if $\sum_{i=1}^N z_i \ge k$ then there are at least $k$ value of $x_i \ge y_i$. The constraint $C^{\text{Sample}}(H_{\v \oplus 1 , \f} , \epsilon, N) $ is equivalent to
\begin{align}
&\forall i, 1 \le i \le N,  z_i \in \{0,1\}\\
&\sum_{i=1}^N z_i \ge  \lceil N \epsilon \rceil\\
&\forall i, 1 \le i \le N, \exists \y^i \in (-\infty, 1]^n: T(\y^i) \ge z_i T(\x) ,  ~H^{\geq}_{\v \oplus 1, \f}(\y^i) \ge z_i \U^i . 
\end{align}
\item \textbf{Eliminating $\exists$: }
The constraint $\exists \y^i \in (-\infty, 1]^n: T(\y^i) \ge z_i T(\x) , ~ H^{\geq}_{\v \oplus 1, \f}(\y^i) \ge z_i \U^i $ is equivalent to the $3$ constraints: 
\begin{align}
\y^i \in (-\infty, 1]^n; ~~T(\y^i) \ge z_i T(\x); ~~H^{\geq}_{\v \oplus 1, \f}(\y^i)  \ge z_i \U^i .
\end{align}

\item \textbf{Reducing the scope of $\y$:} 
Let $v_m:=1$. By definition, 
\begin{align}
H^{\geq}_{\v \oplus 1, \f}(y) = \begin{cases}
1 &\text{ if } y \le v_1 \\
1- f_{k} &\text{ if } v_{k} < y \le v_{k+1}, ~\forall k, 1 \le k \le m-1. 
\end{cases}
\end{align}
From the property that if $\x \preceq \y$ then $T(\x) \le T(\y)$, the objective is optimal if $y^i_j \in \{v_1, \cdots, v_m\}$ because the solver will push $y^i_j$ in the interval $(v_k, v_{k+1}]$ with the same survival function to be as large as possible to satisfy $ T(\y^i) \ge T(\x) $. Therefore $\y^i \in (-\infty, 1]^n$ is equivalent to $\y^i \in \{v_1, \cdots, v_m\}^n$\footnote{Gemini contributed to the derivation of this formulation.}.

\item \textbf{Eliminating the survival function $H^{\geq}_{\v \oplus 1, \f}(\y^i)$: }
We now show how to express $H^{\geq}_{\v \oplus 1, \f}(\y^i) $ as a set of linear constraints, therefore the problem becomes a Mixed-Integer Linear Programming except for the constraint $T(\y^i) \ge T(\x)$.  Let $w_{i,j,k}$ be defined as: 
\begin{align}
w_{i,j,k} := \begin{cases}
0 &\text{ if } y^i_j > v_{k-1} \\
1 &\text{ if } y^i_j \le v_{k-1}. 
\end{cases}
\end{align}
Then if $y^i_j = v_k$, then $w_{i,j,1} = \cdots = w_{i,j, k} = 0$ and $w_{i,j,k+1} = \cdots = w_{i,j, m} = 1$. Therefore\footnote{The formula to compute $y^i_j$ includes the product $w_{i,j,k} v_k$. If we search for $\v$ instead of search for $\f$, we will need to use the big $M$ method to convert the product to linear functions.}
\begin{align}
 y^i_j = 1 - \frac{1}{m}\sum_{k=1}^{m} w_{i,j,k}. 
\end{align}

The constraint $ H^{\geq}_{\v \oplus 1, \f}(\y^i)  \ge z_i \U^i$ is equivalent to\footnote{Gemini contributed to the derivation of this formulation.}
\begin{align}
1 - f_{k-1} \ge  U^i_j(z_i - w_{i,j,k} ). 
\end{align}

\end{itemize}

In the analysis below we assume the case  where we use the MILP to maximize $1-\theta$ instead of minimizing $\theta$. Let $r$ be the upper bound of the objective $1-\theta$ that the MILP solver outputs (Appendix~\ref{app:MILP}). Then we calculate $u = 1- r$, and $ \ell^{\text{MILP}} = u - \Delta_m$. 
 
\begin{lemma}[MILP Bound]
\label{lem:milp_bound}

Let $\F$ be the set of all distributions on support $ [0,1]$. Let $\theta$ be a parameter of distribution. Suppose that $\theta(F)$ exists for all $F \in \F$. Let $T: \R^n \rightarrow \R$ be a monotonic statistic. Let $m \in \mathbb{N}, m \ge 1$. 

    Let $\ell^{\text{MILP}}_T(\x,  \epsilon, m, N,\U, \gamma)$ denote the bound computed by the MILP in Section~\ref{app:complete_algo} that maximizes $1-\theta$ with mixed-integer programming (MIP) gap $\gamma \ge 0$ if the MILP is feasible and $\infty$ if the MILP is infeasible.  Then
    \begin{align}
    \ell^{\text{MILP}}_T(\x, \epsilon,m, N, \U, \gamma) \le \ell^{\text{Sample}}_T(\x, \epsilon, m , N, \U) \le 1- \frac{1-\ell^{\text{MILP}}_T(\x, \epsilon,m, N,\U, \gamma)}{1+ \gamma} . 
    \end{align}
\end{lemma}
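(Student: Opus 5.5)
The plan has two parts. First, show that the MILP of Section~\ref{sec:MILP_bound} is an exact reformulation of the optimization defining $\ell^{\text{Sample}}_T$. Second, translate the solver's guarantee with MIP gap $\gamma$ into the two stated inequalities. Since the paper writes $\ell^{\text{Sample}}_T$ without the $-\Delta_m$ shift in Lemma~\ref{lem:sample_approximation_bound} but applies $\ell^{\text{MILP}} = u - \Delta_m$, I will work with the unshifted quantity $s := \inf \theta(H_{\v\oplus 1,\f})$ subject to $C^{\text{Sample}}$. I will track the $\Delta_m$ shift consistently on both sides, taking $\ell^{\text{Sample}} = s - \Delta_m$ as in the chance-constrained bound, so that the inequalities become statements about $r$ versus $r^* := 1 - s$.

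\textbf{Step 1 (exactness of the reformulation).} Go through the transformations in Section~\ref{sec:MILP_bound} in order and check that each preserves the feasible set in $\f$, or at least its projection.
\begin{itemize}
\item The big-$M$ counting with binaries $z_i$ and $\sum_i z_i \ge \lceil N\epsilon\rceil$ is equivalent to the empirical frequency being at least $\epsilon$. When $z_i = 0$, the constraints $T(\y^i)\ge 0\cdot T(\x)$ and $H^{\ge}(\y^i)\ge 0$ must be satisfiable. Here I may need $T(\y)\ge 0$ for some $\y\in\{v_k\}^n$, or else read the constraint as vacuous when $z_i=0$; I will state that interpretation.
\item Dropping the $\exists$ is a projection, so it is harmless.
\item Restricting $\y^i$ to $\{v_1,\dots,v_m\}^n$ uses monotonicity of $T$. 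Any feasible $y^i_j\in(v_k,v_{k+1}]$ can be raised to $v_{k+1}$ without changing $H^{\ge}$ and without decreasing $T$. A value $y^i_j \le v_1$ can be raised to $v_1$, since $H^{\ge}=1$ there.
\item Encoding $y^i_j$ through the monotone binaries $w_{i,j,k}$ and replacing $H^{\ge}(\y^i)\ge z_i\U^i$ by $1-f_{k-1}\ge U^i_j(z_i-w_{i,j,k})$ is exact. Consider the index $k$ with $y^i_j=v_k$: it is the only one where $w=0$ at $k$ and the constraint is not slack. With $U^i_j\le 1$ and $f$ nondecreasing, the other constraints are implied.
\end{itemize}
Hence the MILP optimum of $1-\theta$ equals $r^*$.

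\textbf{Step 2 (gap translation).} A MIP solver terminating with relative gap $\gamma$ returns a dual bound $r$ and an incumbent $\hat r$ with $\hat r \le r^* \le r$ and $(r-\hat r)/\hat r \le \gamma$, so $r \le (1+\gamma)\hat r \le (1+\gamma) r^*$ (Appendix~\ref{app:MILP}). This yields
\[
\frac{r}{1+\gamma} \le r^* \le r .
\]
Substituting $u = 1-r$ and $s = 1-r^*$ gives $u \le s \le 1 - \frac{1-u}{1+\gamma}$. Subtracting $\Delta_m$ yields $\ell^{\text{MILP}} \le \ell^{\text{Sample}}$, which is the left inequality. For the right inequality I will write it in the paper's form $1-\frac{1-\ell^{\text{MILP}}}{1+\gamma}$, carrying the same $\Delta_m$ offset, and check the algebra carefully. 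The expression is not translation invariant, so I expect the stated form to correspond to the paper's convention in which both quantities are compared at the same shift. Infeasibility is handled separately: both sides are $\infty$, because the reformulation is exact.

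\textbf{Main obstacle.} I expect Step 1 to be the hardest, specifically the $z_i=0$ case and the $w$-encoding of $H^{\ge}$. Boundary indices ($k=1$, where $f_0$ must be read as $0$) and the direction of monotonicity of $w$ in $k$ need care. I would first verify these on a single coordinate with $m=2$.
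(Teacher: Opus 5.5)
Your plan is the paper's proof. The paper takes the MILP of Appendix~\ref{app:complete_algo} as an exact encoding of the sample-approximation problem and uses the solver guarantee $\frac{r}{1+\gamma}\le r^*\le r$ to get $1-r\le\theta^*\le 1-\frac{r}{1+\gamma}$; your Step~1 only makes explicit the exactness the paper asserts, and when you write it out you should add that the MILP forces $\y^i$ ascending and $\U^i$ descending, which is harmless because $\preceq$ is the sorted order and a monotonic $T$ is permutation invariant. Your shifted reading $\ell^{\text{Sample}}=s-\Delta_m$ is the right one: with the unshifted $s$ the right inequality already fails at $\gamma=0$ for the mean, the shift is what the comparison with $\ell^{\text{Uniform}}$ in Theorem~\ref{thm:analysis_extend} requires, and the algebra you deferred closes in one line, $s-\Delta_m\le 1-\frac{1-u}{1+\gamma}-\Delta_m\le 1-\frac{1-u+\Delta_m}{1+\gamma}$, using $\Delta_m\ge 0$ and $\gamma\ge 0$.
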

We will omit some of the inputs and subscript of  $ \ell^{\text{MILP}}_T(\x, \epsilon,m, N, \U, \gamma)$ when the context is clear. If we know a tighter set of support (for example $\cX = [0,1/2] \cup [3/4, 1]$ instead of $[0,1]$), we could add more constraints on $\f$ so that the CDF of $H_{\v \oplus 1, \f}(y^i_j)$ is flat in the region not in $\cX$ and therefore making the optimization tighter.

\subsection{Analysis}
\label{sec:analysis}

\begin{theorem}
\label{thm:analysis_extend}
Let $\F$ be the set of all distributions on support $ [0,1]$. Let $\theta$ be a parameter of distribution.  Suppose that $\theta(F)$ exists for all $F \in \F$. Suppose that $\theta$ satisfies: for any $2$ CDFs $F$ and $G$, if $~\forall x \in \R, F(x) \le G(x)$ then $\theta(F) \ge \theta(G)$. Let $T: \R^n \rightarrow \R$ be a monotonic statistic. Let $m \in \mathbb{N}, m \ge 1$. Let $\delta \in (0,\alpha)$. Let  $\epsilon \le \alpha-\delta$ be the significance level of the MILP and  $\gamma \ge 0$ be the MIP gap. Let $\Delta_m:=\sup_{\f \in [0,1]^m, f_m = 1} (\theta(H_{\v \oplus 1, \f})  - \theta(H_{\v \oplus 0, \f})).$ 

Let $\ell^{\text{MILP}}_T(\x, \epsilon,m, N,\U(s), \gamma) , 1 \le s \le M$ be the output of generating and solving MILP $s, ~1\le s \le M$ in Appendix~\ref{app:complete_algo} where the MILP maximizes $1-\theta$ and $\U(s) \in (0,1]^{N \times n}$. Let $\ell^{\text{Sample}}_T(\x, \epsilon,m, N,\U(s), \gamma) , 1 \le s \le M$ be the results of the corresponding problem in Lemma~\ref{lem:sample_approximation_bound} that MILP $s, ~1\le s\le M$ approximates. Let 
\begin{align}
&\underline\ell^{\text{MILP}}_T(\x, \epsilon,m, N,\U, M, \gamma) = \min_{s, 1 \le s \le M} \ell^{\text{MILP}}_T(\x, \epsilon,m, N,\U(s), \gamma) \\
&\underline\ell^{\text{Sample}}_T(\x, \epsilon,m, N,\U, M) = \min_{s, 1 \le s \le M} \ell^{\text{Sample}}_T(\x, \epsilon,m, N,\U(s)). 
\end{align}

Then
\begin{align}
& ~\forall s, 1 \le s \le M: \\&~~~~~~~~~~\ell^{\text{MILP}}_T(\x, \epsilon,m, N,\U(s), \gamma) \le \ell^{\text{Sample}}_T(\x, \epsilon, m , N,\U(s)) \le 1- \frac{1-\ell^{\text{MILP}}_T(\x, \epsilon,m, N,\U(s), \gamma)}{1+ \gamma} \\
&\P_{\U}(\underline\ell^{\text{Sample}}_T(\x, \epsilon, m , N,\U, M) \le {\ell}^{\text{Uniform}}_T(\x, \alpha-\delta,m)) \ge  1 - \delta \\
&  {\ell}^{\text{Uniform}}_T(\x, \alpha-\delta,m) =  \ell^{\text{Discrete}}_T(\x, \alpha-\delta,m)\\
& \ell^{\text{Discrete}}_T(\x, \alpha - \delta,m ) \le b^{\cF}_T(\x, \alpha - \delta) \le \ell^{\text{Discrete}}_T(\x, \alpha - \delta) + \Delta_m  \\
&\P_{\X}({b}^{\cF}_T(\X, \alpha - \delta) \le \theta) \ge  1 - (\alpha - \delta). 
\end{align}
From which we can conclude that
\begin{align}
    \P_{\U, \X} (\underline\ell^{\text{MILP}}_T(\X, \epsilon, m,  N, \U, M, \gamma) \le \theta) \ge 1 - \alpha. 
\end{align}
\end{theorem}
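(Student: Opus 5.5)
The proof assembles the chain of earlier results and then does a union-bound style argument over the two independent sources of randomness, $\X$ and $\U$. The five displayed intermediate claims are taken one at a time.

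The first line, for each $s$, is Lemma~\ref{lem:milp_bound} applied to the MILP built from $\U(s)$; $T$ is monotonic, so that lemma applies. The second line is the second option of Section~\ref{sec:sample_approximation}, proved in Appendix~\ref{app:sample_approximation}, with $M \ge \log_2(1/\delta)$ and $\epsilon \le \alpha-\delta$. The idea behind it: for a fixed feasible point $\f^*$ that nearly attains $\ell^{\text{Uniform}}_T(\x,\alpha-\delta,m)$, the empirical frequency in one replicate reaches $\lceil\epsilon N\rceil$ with probability at least $1/2$. This holds because the true frequency exceeds $\alpha-\delta \ge \epsilon$, and a binomial with mean above $\epsilon N$, where $\epsilon N$ is an integer, is at least $\epsilon N$ with probability at least $1/2$. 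So $\f^*$ is infeasible in all $M$ independent replicates with probability at most $2^{-M} \le \delta$. Otherwise some replicate has value at most $\theta(H_{\v\oplus1,\f^*})$, and letting $\f^*$ approach the infimum gives the claim. The third line is Lemma~\ref{lem:chance_constrained_bound} at level $\alpha-\delta$, using monotonicity of $T$. The fourth line is Lemma~\ref{lem:discrete_bound}; its hypothesis on $T$ holds because monotonic statistics satisfy the stochastic-dominance condition, as remarked after that lemma, and $u^{\text{Discrete}}=\ell^{\text{Discrete}}+\Delta_m$ by definition. The fifth line is Theorem~\ref{thm:correct} applied with significance level $\alpha-\delta$.

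For the conclusion, fix $\X=\x$. Since $\underline\ell^{\text{MILP}} = \min_s \ell^{\text{MILP}}(\U(s)) \le \min_s \ell^{\text{Sample}}(\U(s)) = \underline\ell^{\text{Sample}}$, we get, for every $\x$,
\begin{align*}
\P_{\U}\left(\underline\ell^{\text{MILP}}_T(\x,\epsilon,m,N,\U,M,\gamma) \le b^{\cF}_T(\x,\alpha-\delta)\right) \ge 1-\delta ,
\end{align*}
by chaining lines two through four. Now define the event
\begin{align*}
A=\{b^{\cF}_T(\X,\alpha-\delta)\le\theta\},
\end{align*}
and the event $B$ that $\underline\ell^{\text{MILP}}_T(\X,\epsilon,m,N,\U,M,\gamma) \le b^{\cF}_T(\X,\alpha-\delta)$.

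Since $\U$ is sampled independently of $\X$, conditioning on $\X$ gives $\P(B^c) = \E_{\X}[\P_{\U}(B^c\mid \X)] \le \delta$. Also $\P(A^c) \le \alpha-\delta$. On $A\cap B$ we have $\underline\ell^{\text{MILP}} \le \theta$. A union bound then gives probability at least $1-(\alpha-\delta)-\delta = 1-\alpha$.

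The main obstacle is the second line, the sample-approximation guarantee. There the infimum may not be attained, and infeasible ($\infty$) cases must be handled. If $\ell^{\text{Uniform}}=\infty$ the inequality is trivial. Otherwise, for each $\eta>0$ I would pick an $\eta$-optimal feasible $\f^*$; because the failure bound $2^{-M}$ does not depend on $\eta$, the limit preserves the bound. The remaining care point is measurability of the conditioning on $\X$, which I would treat as routine.
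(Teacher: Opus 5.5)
Your proposal is correct and is essentially the paper's proof: the five intermediate lines are cited from the earlier results, and the conclusion is the same union bound over the sample-approximation failure (probability at most $\delta$) and the validity failure at level $\alpha-\delta$. Your intermediate $b^{\cF}_T(\X,\alpha-\delta)$ plays the role of the paper's $\ell^{\text{Uniform}}_T(\X,\alpha-\delta)$. Your use of $\eta$-optimal feasible points and your explicit averaging over $\X$ via independence of $\U$ fill in details the paper leaves implicit; this matters because the infimum under the strict constraint need not be attained, whereas the paper simply assumes an optimal $\f^*_\alpha$. Note only that your line-two argument covers the second parameter option, while the $M=1$ option instead rests on the Hoeffding-based Lemma~\ref{lem:sample_approximation_bound1}.
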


Note that $\ell^{\text{MILP}}$ and $\ell^{\text{Discrete}}_T$ can be made arbitrarily close to the targets they approximate by decreasing $\gamma$ and increasing $m$. If we can make the sample-approximation result $\ell^{\text{Sample}}_T$ arbitrarily close to the chance-constrained result $\ell^{\text{Uniform}}_T$ (which we leave to future works) then it might be possible that $\ell^{\text{MILP}}$ will be arbitrarily close to ${b}^{\cF}_T(\x, \alpha - \delta)$ (Figure~\ref{fig:process}). We leave this extension to future works.

\section{Experiments}
\label{sec:exp}
\begin{figure}[htbp]
  \centering
  \begin{subfigure}[t]{0.45\textwidth}
    \includegraphics[width=\textwidth]{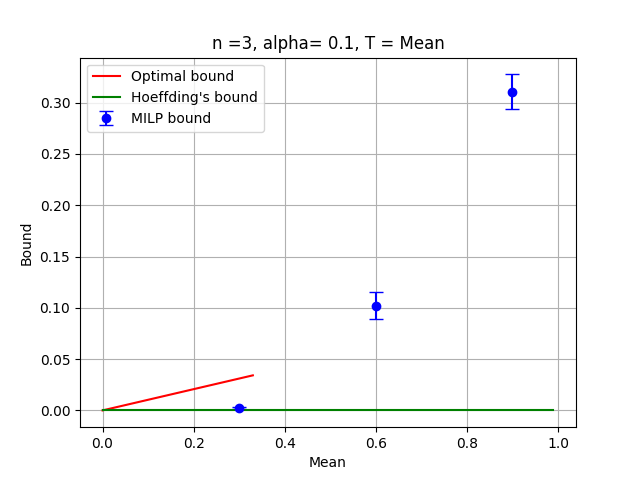}
    \caption{\label{fig:Tmean} $n=3$. The sample mean is the $x$-axis and the order function. The MILP bound is better than Hoeffding's bound. The MILP bound is a lower bound of the $T$-optimal bound when the sample mean is $0.3$. As discussed in Section~\ref{sec:analysis}, if the sample-approximation result converges to the chance-constrained result as $N$ increases (which we leave to future works), then increasing the parameters $N, m$ and decreasing $\gamma$ should make the MILP bound converge to the $T$-optimal bound (red line).  }
  \end{subfigure}
  \hfill
  \begin{subfigure}[t]{0.45\textwidth}
    \includegraphics[width=\textwidth]{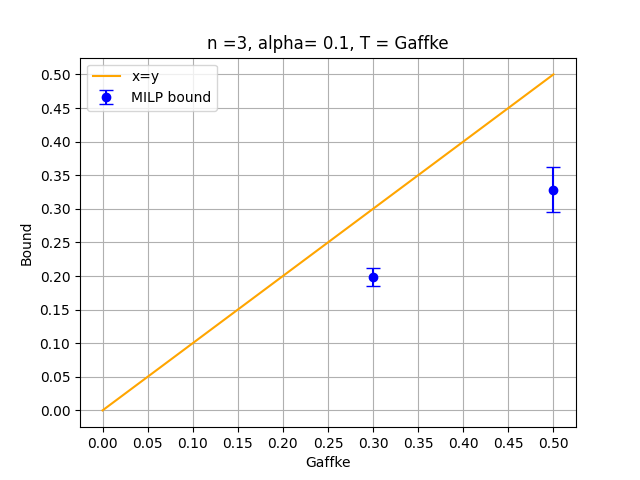}
    \caption{\label{fig:Tgaffke} $n=3$. The approximate Gaffke's bound with $N' = 100$ is the $x$-axis and the order function. Since the approximate Gaffke's bound is not yet proven to be valid, the optimal bound may or may not lie above the diagonal. }
  \end{subfigure}
    \hfill
  \begin{subfigure}[t]{0.45\textwidth}
    \includegraphics[width=\textwidth]{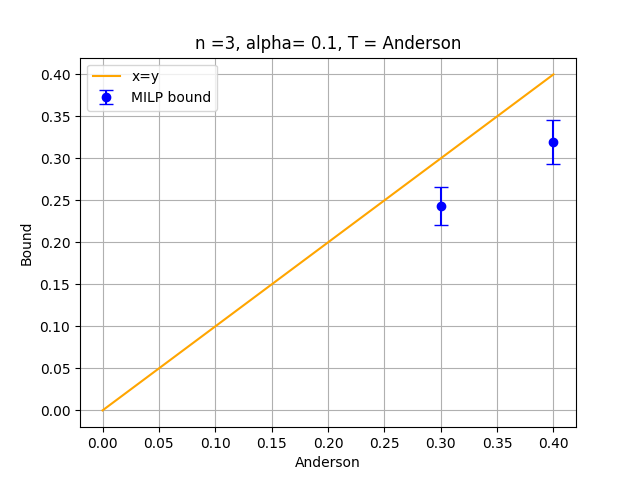}
    \caption{\label{fig:Tanderson} $n=3$. Anderson's bound is the $x$-axis and the order function.  Since Anderson's bound is valid, the $T$-optimal bound should be equal to or larger than Anderson's bound (at or above the diagonal $x=y$ line). As discussed in Section~\ref{sec:analysis}, if the sample-approximation result converges to the chance-constrained result as $N$ increases (which we leave to future works), then increasing the parameters $N, m$ and decreasing $\gamma$ should make the MILP bound converge to the $T$-optimal bound (at or above the diagonal line).  }
  \end{subfigure}
  \hfill  
  \begin{subfigure}[t]{0.45\textwidth}
    \includegraphics[width=\textwidth]{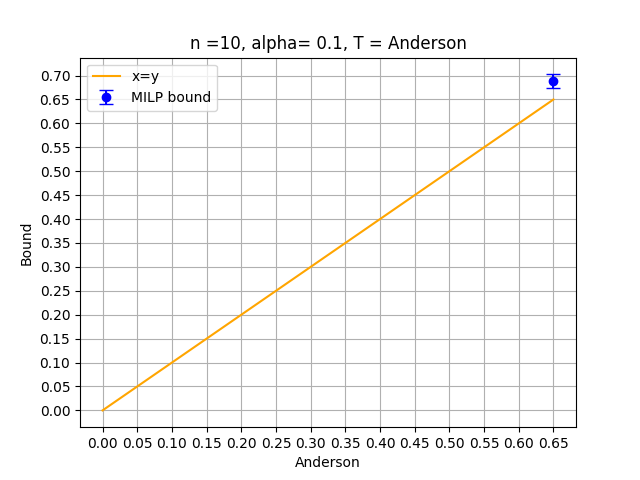}
    \caption{\label{fig:tune_anderson} Tuning Anderson's bound when $n=10$. Anderson's bound is the $x$-axis and the order function.  Our approximate MILP bound is better than Anderson's bound (above the diagonal $x=y$ line). Therefore it can be used to tune Anderson's bound.}
  \end{subfigure}
  \caption{Experiments with small parameters when the order function is the sample mean, Anderson's bound and Gaffke's bound.  We compute the bounds (denoted "MILP bound") for $n_{\text{exps}}=10$ times and and plot the mean and standard deviation. The standard deviations are significant, suggesting that increasing the size of $N$ will make the results over $10$ times more concentrated and the MILP bound increase.}
\end{figure}
In this section we compute lower confidence bounds of the distribution's mean using the algorithm in Appendix~\ref{app:complete_algo}.\footnote{Gemini generated and optimized parts of the code.} We perform small experiments with $n = 3, \alpha = 0.1,  \delta = 0.001, N=100, M=10, m = 50$ and the MIP gap  $\gamma = 0.02$. The details of the implementation is in the Appendix. With these parameters, we solve the MILP in CVXPY with the HIGHS solver, which takes minutes in a personal laptop. We compute the bound for $n_{\text{exps}} = 10$ times for each $T(\x)$, and plot the mean and standard deviations of $n_{\text{exps}}$ values. To compute the bound each time, we solve $M = 10$ MILPs with $N = 100$, $\epsilon = \frac{\lfloor N (\alpha - \delta) \rfloor}{N} = 0.09, m=100$ and the MIP gap $\gamma = 0.01$ to maximize $\frac{1}{m} \sum_{i=1}^{m-1}f_i$. Let $r_1, \cdots, r_M$ be the output of the $M =10$ MILPs.\footnote{This value of $M$ is conservative. It is enough that $M = \lceil \log_2\frac{1}{\alpha - \epsilon}\rceil = 7$.} maximizing $\frac{1}{m} \sum_{i=1}^{m-1}f_i$. Then the guaranteed lower bound is the minimum over $1 - r_1 - 1/m, \cdots, 1- r_M - 1/m $ of the $M$ MILP results. 

To speed up the solver, we add some additional constraints that do not change the problem including the constraint $\forall i, j, k: w_{i,j,k} \le z_i$. We also set $z_i$ to be continuous to reduce the number of discrete variables. Since we have the constraint $\forall i, j, k: w_{i,j,k} \le z_i$, if $w_{i,j,k} =1$ then $z_i = 1$ so this is not a significant relaxation. The details are in Appendix~\ref{app:exp}.

Note that when we increases $N$ and $m$ and decrease $\gamma$, the bound will increase but the computation time will be longer. Using a commercial solver such as Gurobi will also increase the speed. For each value of $T(\x)$, we only need to compute the bound once and save the result in a table. To use the bound with a new sample $\x$, we can compute $T(\x)$ and refer to the table to find the value of the bound. 

Solving a single MILP in all of the experiments in this section usually takes minutes on a personal laptop. When $T(\x)$ is large, the solver is much faster than when $T(\x)$ is small. 

The order functions in this section (the sample mean, Anderson's bound and Gaffke's bound) are monotonic.

\subsection{The Sample Mean as the Order Function}
We compute the bound when $T(\x)$ is the sample mean of $\x$. The constraint $T(\y^i) \ge T(\x)$ is a linear constraint. In Figure~\ref{fig:Tmean} we calculate the bound when $T(\x) = 0.3, 0.6$ and $0.9$. From Theorem \ref{thm:luczak_positive}, if $\frac{1}{n}\sum_i x_i \le \frac{1}{n}$ , $n=3$ and $ \alpha < 0.63$ then the $T$-optimal bound $b^{[0, 1]}(\x) = (\sum_{i=1}^n x_i) (1 - (1-\alpha)^{1/n}).$ We plot the $T$-optimal bound in red when $T(\x) \le 1/n$ and Hoeffding's for all $T(\x) \in [0,1]$. When $T(\x) =0.3$, our bound is a lower bound of the $T$-optimal bound. Our bound is better than Hoeffing's. 

Note that there are other tail bounds such as \citep{bentkus} which could be used to derive sample-mean-ordered bounds that could be tighter than Hoeffding's. We do not provide comparisons with these bounds here. 
\subsection{Gaffke's Bound as the Order Function}

 Gaffke's bound \citep{Gaffke2005} is an upper confidence bound of the mean of distributions on $[0,1]$ with good performance in experiments \citep{Learned-MillerThomas2019} but its validity has not been proven. The experiments on Gaffke's bound were performed using an approximation of the bound , which we call the approximate Gaffke's bound. We find the lower bound of the mean of $X \in [0,1]$ by finding the upper bound of the mean of $1-X \in [0,1]$. We sample $N'$ samples of $n$ i.i.d uniform random variables $\a^{i'} \in [0,1)^n$. The approximate Gaffke's bound $T_{\a}(\x)$ is computed through the procedure: 

\begin{enumerate}
    \item Compute $m_{i'}:= \mu(H_{(1-\x) \oplus 1,\a^{i'}})$ for every $i'$. 
    \item Sort the $N'$ value computed in the previous step so that $m_1 \le m_2 \le \cdots m_{N'}$. 
    \item Let $q:= \lceil N' (1-\alpha) \rceil$. Let $\hat{Q}_{\a}(1-\alpha,1- \x):= m_q$. 
\item Output $T_{\a}( \x) = 1- \hat{Q}_{\a}(1-\alpha, 1-\x)$.
\end{enumerate}

 We treat $\A = \a$ as a constant vector. At the beginning, we sample $N'$ samples of $n$ i.i.d. uniform random variables $\A^{i'} \in [0,1]^n$ and then use the sampled values $\a$ as constants to compute $T_{\a}(\x)$ for any sample $\x$. We use $N' = 100$ to sample $\a$ for the approximate Gaffke's bound. Note that the MILP bound is valid for any $N'$, but larger $N'$ will make the approximate Gaffke's bound closer to the Gaffke's bound. For the value of $\a$ we sampled, the largest $T(\x)$ when $\x =1$ is larger than $0.52$. Therefore we calculate the MILP bound when $T(\x) = 0.3$ and when $T(\x) = 0.5$. 
We plot the result in Fig.~\ref{fig:Tgaffke}. 
\subsection{Anderson's Bound as the Order Function} 
Anderson's bound \citep{Anderson1969} is a linear function of the sorted sample and is a valid bound proven to be tighter than Hoeffding's for any sample. Therefore the constraint $T(\y^i) \ge T(\x)$ is a linear constraint.  

Anderson's upper bound of the mean of distribution on $(-\infty, 1]$ is derived by derived a lower bound of the CDF $F$ using the property $\P(F(X) \ge y)  = 1 - \P(F(X) < y) \ge 1 - \P(F(X)\le y) \ge 1 - y =  \P(U \ge y).$ Since it's not always true that  $\P(F(X) \ge y)\le  \P(U \ge y)$, we cannot derive a lower bound of the mean by finding an upper bound of the CDF. Therefore, we derive a lower bound of the mean of $X \ge 0$ by finding an Anderson's upper bound of the mean of $1-X \le 1$. 

The valid bound in \citep{phan2021practical} is strictly tighter than Anderson's bound when applied to applied to the set of all distributions with support on $[0,1]$. Therefore the $T$-optimal bound with Anderson's as the order function will be strictly tighter than Anderson's bound when applied to applied to the set of all distributions with support on $[0,1]$. From Section~\ref{sec:analysis}, the chance-constrained bound $\ell^{\text{Uniform}}_T$ should be within $1/m$ of the $T$-optimal bound, and therefore should be strictly less than $1/m$ worse than Anderson's bound. However we are approximating the sample-approximation bound $\ell^{\text{Sample}}$ by $\ell^{\text{MILP}}$ and we do not know how close $\ell^{\text{Sample}}$ is to the chance-constrained bound $\ell^{\text{Uniform}}$ (as discussed in Section~\ref{sec:analysis} it is left to future works). 

We use the calculation of Anderson's bound through Monte Carlo simulations from \citep{phan2021practical}.  We compute the coefficients $\c$ only once at the beginning, and then use it as constant coefficients to compute $T(\x)$ for any sample $\x$. Let Anderson's bound be $T(\x) = \sum_{i=1}^n c_i x_i$ where $x_1 \le x_2 \le \cdots \le x_n$ is sorted. With $n= 3$ and $\alpha = 0.1$ we calculate $\c =[0.33333333, 0.10205606, 0.        ]$. The largest possible value of $T(\x)$ when $\x = 1$ is larger than $0.43$. Therefore we calculate the bound for when $T(\x) = 0.2$ and $T(\x) = 0.4$.  We plot the result in Figure~\ref{fig:Tanderson}. 

Note that our bound is valid for any linear coefficient $c_i$ of $T(\x)$. However if the Monte Carlo coefficient converges to Anderson's bound, then then the Monte Carlo approximation of Anderson's bound will be approximately valid, and the result in Theorem~\ref{thm:optimal} will be approximately applicable. 

  \textbf{Tuning Anderson's bound.}  
    We also perform another experiment in Figure~\ref{fig:tune_anderson} to see if our bound can be better than Anderson's bound. We let $n=10$ and other parameters stay the same. With $n = 10, \alpha = 0.1$ we calculate $\c = [0.1       , 0.1       , 0.1       , 0.1       , 0.1       ,
       0.1       , 0.07735639, 0.        , 0.        , 0.        ]$.   The maximum value of $T(\x) = \c^T \x$ when $\x = 1$ is larger then $0.67$. Therefore we can let $T(\x) = 0.65$. 
Our MILP bound is larger than $0.65$.

\section*{Acknowledgment}

We would like to thank Philip B. Stark and Jacob Spertus for the discussions. We used large language models to assist in  deriving proofs,  optimizing the formulation of the MILP and  generating and optimizing the codes (some of the specific instances are described in the footnotes). Large language models' suggestions have been double checked. 

\bibliography{mean_interval}
\bibliographystyle{plainnat}  
\clearpage
\appendix

In Section \ref{app:complete_algo} we show the complete algorithm and discuss the option to use inverting test hypotheses methods (e.g. \citep{waudbysmith2021estimating}) as the order function (Section~\ref{app:T_as_inverse}).

In Section~\ref{app:exp} we discuss experiment details. 

In Section~\ref{app:proof} we show the proofs, including the error and sample complexity analysis in Section~\ref{sec:analysis}.

In Section~\ref{app:connection} we show the connections to pivoting the CDF  and inverting hypothesis tests \citep{CaseBerg:01}.

In Section~\ref{app:related_works} we discuss related works.

\section{The Complete Algorithm}
\label{app:complete_algo}
\begin{enumerate}
\item Inputs: Sample size $n$, sample $\x \in \R^n$, monotonic function $T: \R^n \rightarrow \R$,  thresholds $\alpha, \epsilon, \delta$  such that $0 < \alpha < 1, ~0< \delta < \alpha,~ 0 < \epsilon \le \alpha - \delta$, MIP gap $ \gamma \ge 0$, natural positive numbers $m, N  \in \mathbb{N}, m, N \ge 1$. The meaning of the parameters are noted below: 
\begin{itemize}
\item $\delta$ is to account for the probability that the sample approximation solution is not a lower bound of the chance-constrained program. To account for this failure event, we find a $\alpha-\delta$ bound instead of $\alpha$, and use $\epsilon \le \alpha-\delta$ in the MILP. The bound is tighter if $\delta$ is small and $\epsilon$ is large and close to $\alpha-\delta$. $\delta$ should be chosen closer to $0$, such as $0.001$. 

    \item $N$ is the number of samples to approximate the chance-constrained programs. We use $N$ samples of $n$ i.i.d random variables $\U \in (0,1]^{N \times n}$ (the half-open interval). There are $2$ options to choose $\epsilon, \delta$ and $N$: 
    \begin{itemize}
   \item Solve the MILP only $M=1$ times with $\epsilon < \alpha - \delta$ and $N$ samples. If
\begin{align}
N \ge \frac{1}{2(\alpha- \delta-\epsilon)^2}\ln\frac{1}{\delta} 
\end{align}
then \begin{align}
\P_{\U}(\ell^{\text{Sample}}_T(\x, \epsilon, m, N, \U) \le\ell^{\text{Uniform}}_T(\x, \alpha- \delta, m)) \ge 1- \delta. 
\end{align}
For example, if $\alpha = 0.05, \delta = 0.001, \epsilon = 0.03$ then $N = 9568$. If $\alpha = 0.05, \delta = 0.001, \epsilon = 0.048$ then $N = 3.5$ millions.  Note that $N$ is derived from Hoeffding's inequality, so it could be overly conservative. A derivation of $N$ using a tighter inequality can make it smaller. 
\item Let  $\epsilon = \frac{\lfloor N (\alpha - \delta) \rfloor}{N}$. Generate and solve $M$ MILPs  with $N$ samples and $\epsilon$, and then take the smallest result among $M$ results.
Note that to generate MILP $s$ of the $M$ MILPs, we need to resample $N$ samples of $n$ i.i.d random variables separately, denoted $\U(s)$. Each $\U(s) \in (0,1]^{N\times n}, 1 \le s \le M$ is  $N$ vectors of $n$ i.i.d uniform random variables. We solve $M$ MILPs to get $M$ values $\ell^{\text{MILP}}_T(\x, \epsilon,m, N, \U(s)),  ~1\le s \le M$, which are lower bounds of $M$ values
$\ell^{\text{Sample}}_T(\x, \epsilon,m, N, \U(s)), ~1\le s \le M$. 
Let 
\begin{align}
\underline\ell^{\text{Sample}}_T( \x, \epsilon,m, N, \U, M )= \min_{s, 1 \le s \le M} \ell^{\text{Sample}}_T(\x, \epsilon,m, N, \U(s)). 
 \end{align}
 Then if  $M \ge \log_2\left( \frac{1}{\delta}\right) $ then 

\begin{align}
\P_{\U}(\underline\ell^{\text{Sample}}_T( \x, \epsilon,m, N, \U, M ) \le\ell^{\text{Uniform}}_T(\x, \alpha-\delta, m)) \ge 1 - \delta. 
\end{align}
To make $M$ as small as possible we can first choose $\delta$, calculate $\epsilon = \frac{\lfloor N (\alpha - \delta) \rfloor}{N}$ and then reset $\delta = \alpha - \epsilon$. 
In this option $N$ can have any value, but in practice we want $N$ to be large so that the bound is tighter. 
If $\delta = 0.001$ then $M =10$. 
\end{itemize}
To simplify notations, for either option, below we use $\U \in (0,1]^{N \times n}$ to denote $N$ samples of $n$ i.i.d. uniform random variables for a single MILP. If the user chooses option $2$ they need to repeat the process of generating and solving the MILP $M$ times. 
\item $\gamma$ is the MIP gap. 
\item Let $m \in \mathbb{N}, m \ge 1$ be the size of $\v$. Recall that
\begin{align}
{\ell}^{\text{Uniform}}_T(\x, \alpha - \delta, m) \le b^{\cF}_T(\x, \alpha - \delta) \le\ell^{\text{Uniform}}_T(\x, \alpha - \delta, m) + \Delta_m.   
\end{align}
For any $m$, ${\ell}^{\text{Uniform}}_T(\x, \alpha - \delta, m)$ is guaranteed to be a lower bound of $b^{\cF}_T(\x, \alpha - \delta)$ and therefore is a valid lower bound. However if $m$ is large then the bound can be tighter. 
For example, if $m=100$ and $\theta$ is the mean then the error is within $\Delta = 1/m = 0.01$. 
\end{itemize}
\item Let $v_k:= \frac{k}{m}$.
For each $i, 1\le i \le N$: 
\begin{itemize}
    \item Sample $n$ i.i.d uniform random variables $\U^i \in (0,1]^n$ 
    \item Sort $\U^i$ in non-increasing order so that $\forall j, 1\le j \le n-1: U^i_j \ge U^i_{j+1}$. 
\end{itemize}

\item 
Solve the following $M$ MILPs within MIP gap $\gamma$ with $ T, \epsilon,  m, N, \U, M$ chosen from previous steps: 
\begin{align}
u^{\text{MILP}}_T(\x, \epsilon,m, N, \U, \gamma )= \inf \theta(H_{\v \oplus 1 , \f}) 
\end{align}
such that: 
\begin{align}
&0 \le f_1 \le \cdots \le f_m = 1 \\
&\forall i, 1 \le i \le N:  z_i \in \{0,1\} \\
&\forall i,j,k, 1 \le i \le N, 1 \le j \le n, 1\le k \le m:  w_{i,j,k} \in \{0,1\} \\
&\forall i,j, k, 1 \le i \le N, 1 \le j \le n, 1 \le k \le m: w_{i,j,k} \le  w_{i,j,k+1}\\
&\forall i,j, k, 1 \le i \le N, 1 \le j \le n, 1 \le k \le m: w_{i,j,k} \ge  w_{i,j+1,k}\\
&\sum_{i=1}^N z_i \ge \lceil N \epsilon \rceil  \\
&\forall i, j, 1 \le i \le N, 1\le j \le n: y^i_j = 1 - \frac{1}{m}\sum_{k=1}^{m} w_{i,j,k}\\
&\forall i, 1 \le i \le N:  ~T(\y^i) \ge T(\x) z_i\\
&\forall i, 1 \le i \le N:   H^{\geq}_{\v \oplus 1, \f}(\y^i)  \ge z_i \U^i  \\
&=\begin{cases}
&\forall i,j,k, 1 \le i \le N, 1 \le j \le n, 1\le k \le m: \\
&~~~~~~~~~~~~~~~~~~1 - f_{k-1} \ge  U^i_j(z_i - w_{i,j,k} ) . 
 \end{cases} 
\end{align}

This is a MILP except for the objective $\theta(H_{\v \oplus 1 , \f}) $ and the constraint $T(\y^i) \ge T(\x)$ where $T(\x)$ is a constant. If $\theta$ is the mean then the objective is $1- \frac{1}{m} \sum_{i=1}^{m-1} f_i $. If $T$ is not a linear function we need to approximate the level set $T(\y^i) \ge T(\x)$ by a set of linear constraints. 

 \item 
 Let $u$ be the lower bound of the objective that the MILP solver output. Set $\ell^{\text{MILP}}:=  u - \Delta_m$ to be the output valid lower bound. 

 In the case when $\theta$ is the mean, for implementation convenience, we maximize $1-\theta$ instead of minimizing $\theta$. Let $r$ be the upper bound of the objective $1-\theta$ that the MILP solver output. Let $u:= 1 - r$. Set $\ell^{\text{MILP}}:=  u - \Delta_m$ to be the output valid lower bound. 
\end{enumerate}

\subsection{Non-monotonic Order Function}
\label{app:non_monotonic}

We discuss the case when the order function $T$ is not monotonic, which includes student's $t$ bound.  

The discretization step (Lemma~\ref{lem:discrete_bound}) requires the monotonic condition so that the bound calculated from the discretized distribution $\ell^{\text{Discrete}}_T$ is always a lower bound of the $T$-optimal bound $b_T$. When the condition is not satisfied, as noted in the discussion after Lemma~\ref{lem:discrete_bound}, $\ell^{\text{Discrete}}_T$ might still approximate $b_T$ well when $m$ is large, but it is not a lower bound. 

The conversion to chance-constrained programs in Lemma~\ref{lem:chance_constrained_bound} requires the monotonic condition so that the constraint are equal. When the monotonic condition is not satisfied, the bound $\ell^{\text{Uniform}}_T$ in Lemma~\ref{lem:chance_constrained_bound} is a lower bound of  the discretized bound $\ell^{\text{Discrete}}_T$ in Lemma~\ref{lem:discrete_bound}, but it might be looser.

The MILP in Lemma~\ref{sec:MILP_bound} requires the monotonic condition to reduce $\y^i \in (-\infty, 1]^n$ to $\y^i \in \{v_1, \cdots, v_m\}^n$. Without the monotonic condition, the constraints on $\y^i$ can still be written as linear functions but will contain strict inequalities, which MILP solvers cannot handle. 

We leave the case of non-monotonic $T$ to future works.

\subsection{ Inverting Test Hypotheses as the Order Function}
\label{app:T_as_inverse}
In this section we consider the optimization problem where the order function is from a bound obtained by inverting test hypotheses such as \citep{waudbysmith2021estimating}. 
Such bounds output a confidence set $\Theta(\x)$ where the lower bound is $T(\x) = \inf_{\theta \in \Theta(\x)} \theta$, or equivalently if $\theta \in \Theta(\x)$ then $T(\x) \le \theta$. We assume $\Theta(\x) = \{ \theta: f_{\x}(\theta) > 0\}$. Whether the order function $T$ satisfies the monotonic condition will depend on $f_{\x}$. We discussed the case of non-monotonic function $T$ in Section~\ref{app:non_monotonic}. 

We will derive the constraint $T(\y) \ge T(\x)$ in the MILP. $T(\x)$ is already computed so it is a constant. We make a grid $g_i = \frac{i}{10000}$ over the interval $[0,1]$ to find a lower bound $m$ for $T(\y)$. Using the big M method, let binary $c_i$ be such that if $f_{\y}(g_i) > 0$ then $c_i =1$. Let binary $c'_i$ be such that if $c'_i = 1$ then $m \le g_i$. We add $c_i \le c'_i$. Then if that $f_{\y}(g_i) > 0$ then $c_i =1$, $c'_i = 1$ and then $m \le g_i$. 

\begin{align}
&\forall i, f_{\y}(g_i) \le M c_i \\
&\forall i, g_i - m \ge - M(1- c'_i)\\
&\forall i, c_i, c'_i \in \{0,1\}, c_i \le c'_i \\
&m \ge T(\x). 
\end{align}

In \citep{waudbysmith2021estimating}, the confidence set is: 
\begin{align}
\Theta(\x) = \{ \theta \in [0,1]: \Pi_{j=1}^n f_j (x_j, \lambda_j(\theta), \theta) < 1/\alpha \}
\end{align}
and the set of constraints becomes: 
\begin{align}
&\forall i, 1/\alpha -  \Pi_{j=1}^n f_j (y_j, \lambda_j(g_i), g_i) \le M c_i  \label{eq:non_linear}\\
&\forall i, g_i - m \ge - M(1- c'_i)\\
&\forall i, c_i, c'_i \in \{0,1\}, c_i \le c'_i \\
&m \ge T(\x). 
\end{align}
Eq.~\ref{eq:non_linear} makes the problem non-linear. We leave the optimization with such constraints to future works. One direction is to approximate such constraints by piece-wise linear constraints.  
\section{Experiment Details}
\label{app:exp}

To make the solver faster, we add the following constraints to the MILP. The constraints do not affect the result, but will reduce the search space and make the solver faster. 
\begin{itemize}
\item $\forall i, j, k: w_{i,j,k} \le z_i$.
\item If $\U^{i} \preceq \U^{i'}$, then $z_i \ge z_{i'}$. 
\end{itemize}

To make the solver faster, when $T(\x) = \mathbf{c}^T \x$ (in both cases where $\x$ is sorted or unsorted) we add the following constraints to the MILP. The constraints do not affect the result, but will reduce the search space and make the solver faster. 
\begin{itemize}
\item 
$\forall i: T(\y^i) - \frac{1}{m}  \min_{j: c_j > 0} c_j \le T(\x) + (\sum_{j=1}^n c_j - - \frac{1}{m}  \min_{j: c_j > 0} c_j - T(\x)) (1-z_i)$. 
\item If $c_j = 0$, then $y_j = y_{j-1}$. If $c_1 = 0$, then $y_1 = 0$. 
\end{itemize}

We also set $z_i$ to be continuous to lower the number of discrete variables. Since we have the constraint $\forall i, j, k: w_{i,j,k} \le z_i$, if $w_{i,j,k} =1$ then $z_i = 1$ so this is not a significant relaxation. 
 To generate $\U$ in $(0,1]$, we first use Numpy to generate $\U$ in $[0,1)$ and then set $\U:=1- \U$. We set the Numpy random seed to be $42$.

For the HIGHS solver we set $\{$
        "solver": "ipm",
        'run\_crossover': 'on',
        'presolve': 'on',
        'parallel': 'on',
        'threads': 8,   
        'mip\_detect\_symmetry': 'on',  
        'mip\_heuristic\_effort': 0.8
    $\}$. 

      We set the objective of the MILP to maximize $\frac{1}{m} \sum_{i=1}^{m-1} f_i$.  Let $r$ be the upper bound of the objective output by the MILP solver. Set $\ell^{\text{MILP}}:= 1 - r - \frac{1}{m}$ to be the output valid lower bound. 

      Note that although we use $\epsilon$ as the significance level of the MILP, we use the original $\alpha$ to compute $T(\x, \alpha)$ and $T(\y^i, \alpha)$ for the constraint $T(\y^i, \alpha) \ge T(\x, \alpha)$ when $T$ is Anderson's bound and Gaffke's bound. 
    \subsection{Anderson's Bound}
We use the calculation of Anderson's bound through Monte Carlo simulations from \citep{phan2021practical}, summarized below.  In one instance of Anderson's bound,  let $ \u^{And} \in [0,1]^n$ be defined as
\begin{equation}
    \label{eq:And}
     u^\text{And}_j:= \max \left \{0,j/n-\beta(n) \right \}.
\end{equation}
where  $\beta(n)$ satisfies:
\begin{align}
\P(~\forall j, 1 \le j \le n: F(X_{(j)}) \ge j/n -  \beta(n)) &\ge 
    \P_{\U} (~\forall j, 1 \le j \le n: U_{(j)} \ge j/n -  \beta(n) ) \\ &= 1 - \alpha.
\end{align}
An upper bound of the mean is then the mean of the CDF $H_{\X \oplus 1, u^\text{And}}$ which lower bound the CDF $F$: 
\begin{align}
\P(m(H_{\X \oplus 1, u^\text{And}}) \ge \mu_F) &\ge \P(~\forall j, 1 \le j \le n: F(X_{(j)}) \ge u^\text{And}) \\
&= \P(~\forall j, 1 \le j \le n: F(X_{(j)}) \ge j/n -  \beta(n)) \text{ because } F(X_{(j)}) \ge 0\\
&\ge 1 - \alpha. 
\end{align}

$\P_X( F(X) \ge y) \ge 
    \P_{U} ( U \ge y )$ is true but   $\P_X( F(X) \ge y) \le 
    \P_{U} ( U \ge y )$ is not true. Therefore we cannot calculate a lower bound of the mean from an upper bound of the CDF. Instead, we calculate a lower bound of the mean of $X$ from Anderson's upper bound of the mean of $1-X$. Let $u^\text{And}_{0}:= 0$ and assume that $X_1 \le \cdots \le X_n$. Then 
    \begin{align}
    1 - \alpha &\le \P(m(H_{(1-\X) \oplus 1, u^\text{And}}) \ge 1-\mu_F) \\
& = \P(\sum_{j=1}^n (u^\text{And}_j - u^\text{And}_{j-1})(1-X_{n+1-j}) + (1- u^\text{And}_n)\cdot 1  \ge 1-\mu_F) \\
&= \P(\sum_{j=1}^n (u^\text{And}_j - u^\text{And}_{j-1})X_{n+1-j}   \le \mu_F). 
    \end{align}

    Therefore $\sum_{j=1}^n (u^\text{And}_j - u^\text{And}_{j-1})X_{n+1-j}  $ is a valid lower bound of the mean. We called it Anderson's lower bound of the mean and also use it as the order function $T$. We use $1,000,000$ samples of vector $\U$ to calculate Anderson's coefficient using the method in \citep{phan2021practical}. 
\subsection{Gaffke's Bound}
\label{app:gaffke}

 Let $\mu(H_{\x\oplus 1,\A})$ denote the mean of the CDF $H_{\x \oplus 1,\A}$ computed by: 
\begin{itemize}
    \item Sort $\x$ and $\A$ so that $x_1 \le \cdots \le x_n \le 1$, $A_1 \le \cdots \le A_n$. 
    \item Let $x_{n+1}=1$ and $A_{n+1}=1$. Output $\mu(H_{\x,\A}):= \sum_{j=1}^{n+1} x_j (A_j - A_{j-1})$. 
\end{itemize}

We perform the following step to sample $\A$: 
\begin{enumerate}
\item Choose $N'$ to be number of samples to compute the random Gaffke's bound. The output of the MILP is a valid bound for any value of $N'\ge 1$, but the larger $N'$ is, the output will be more likely be better than Gaffke's bound. 
\item For each $i', 1\le i' \le N'$: 
\begin{itemize}
    \item Sample $n$ i.i.d uniform random variables $\A^{i'} \in [0,1]^n$. 
    \item Sort $\A^{i'}$ in non-decreasing order so that $\forall j, 1\le j \le n-1: A^{i'}_j \le A^{i'}_{j+1}$. 
\end{itemize}
 Let $\a$ denote the sample value of $\A$. We sample $\a$ just once at the beginning and use $T_{\a}$ for all samples $\x$. The bound will be valid since $T_{\a}$ is just a function. 
\end{enumerate}

 Define the function $T_{\a}(\x)$ to be: 
\begin{enumerate}
    \item Computing $m_{i'}:= \mu(H_{(1-\x) \oplus 1,\a^{i'}})$ for every $i'$. 
    \item Sort the $N'$ value computed in the previous step so that $m_1 \le m_2 \le \cdots m_{N'}$. 
    \item Let $q:= \lceil N' (1-\alpha) \rceil$. Let $\hat{Q}_{\a}(1-\alpha, 1-\x):= m_q$
\item Output $T_{\a}( \x) = 1 - \hat{Q}_{\a}(1-\alpha,1- \x)$.
\end{enumerate}

Then we perform the algorithm in Appendix~\ref{app:complete_algo} with $T_{\a}( \y^i) \ge z_i T_{\a}(\x)$. The constraint $T_{\a}(\y^i) \ge z_i T_{\a}(\x)$ is equivalent to $1 - z_i + z_i \hat{Q}_{\a}(1-\alpha, 1-\x) \ge \hat{Q}_{\a}(1-\alpha, 1-\y^i)$. When $z_i = 1$, $\hat{Q}_{\a}(1-\alpha, 1-\x) \ge \hat{Q}_{\a}(1-\alpha, 1-\y^i)$ is equivalent to $\hat{Q}_{\a}(1-\alpha, 1-\x) \ge \mu(H_{(1-\y^i) \oplus 1,\a^{i'}})$ for $
\lceil N'(1-\alpha)\rceil$ values of $i'$. Using the big M method for counting where if $c_{i,i'} = 1$ then $ \hat{Q}_{\a}(1-\alpha, 1-\x) \ge \mu(H_{(1-\y^i) \oplus 1,\a^{i'}})$: 
\begin{align}
&\forall i, ~1 \le i \le N: \\
&~~~~~\forall i', 1 \le i' \le N':  1 + c_{i, i'}(\hat{Q}_{\a}(1-\alpha, 1-\x) - 1)  \ge  \mu(H_{(1-\y^i) \oplus 1,\a^{i'}})\\
&~~~~~\sum_{i' = 1}^{N'} c_{ i,i'} \ge \lceil N'(1-\alpha) \rceil z_i  \\
&~~~~~\forall i', 1 \le i' \le N' : c_{i,i'} \le z_i \\
&~~~~~\forall i', 1 \le i' \le N' : c_{i,i'} \in \{0, 1\},
\end{align}

which is equivalent to: 
\begin{align}
&\forall i, ~1 \le i \le N: \\
&~~~~~\forall i', 1 \le i' \le N':  1 - c_{i, i'}T_{\a}(\x)  \ge  \mu(H_{(1-\y^i) \oplus 1,\a^{i'}})\\
&~~~~~\sum_{i' = 1}^{N'} c_{ i,i'} \ge \lceil N'(1-\alpha) \rceil z_i . \\
&~~~~~\forall i', 1 \le i' \le N' : c_{i,i'} \le z_i \\
&~~~~~\forall i', 1 \le i' \le N' : c_{i,i'} \in \{0, 1\}.
\end{align}
Here, when $z_i =0 $ then $c_{i,i'} =0$, so all the constraints are trivially true. 

The value of $\a$ we sampled and used in the experiment is in Table~\ref{tab:u_gaffke}. 
\begin{table}[ht]
\centering
\begin{tabular}{ccc c ccc c ccc c ccc}
\hline
1 & 2 & 3 && 1 & 2 & 3 && 1 & 2 & 3 && 1 & 2 & 3 \\
\hline
0.3745 & 0.9507 & 0.7320 & & 0.7290 & 0.7713 & 0.0740 & & 0.9083 & 0.2396 & 0.1449 & & 0.2440 & 0.9730 & 0.3931 \\
0.5987 & 0.1560 & 0.1560 & & 0.3585 & 0.1159 & 0.8631 & & 0.4895 & 0.9857 & 0.2421 & & 0.8920 & 0.6311 & 0.7948 \\
0.0581 & 0.8662 & 0.6011 & & 0.6233 & 0.3309 & 0.0636 & & 0.6721 & 0.7616 & 0.2376 & & 0.5026 & 0.5769 & 0.4925 \\
0.7081 & 0.0206 & 0.9699 & & 0.3110 & 0.3252 & 0.7296 & & 0.7282 & 0.3678 & 0.6323 & & 0.1952 & 0.7225 & 0.2808 \\
0.8324 & 0.2123 & 0.1818 & & 0.6376 & 0.8872 & 0.4722 & & 0.6335 & 0.5358 & 0.0903 & & 0.0243 & 0.6455 & 0.1771 \\
0.1834 & 0.3042 & 0.5248 & & 0.1196 & 0.7132 & 0.7608 & & 0.8353 & 0.3208 & 0.1865 & & 0.9405 & 0.9539 & 0.9149 \\
0.4319 & 0.2912 & 0.6119 & & 0.5613 & 0.7710 & 0.4938 & & 0.0408 & 0.5909 & 0.6776 & & 0.3702 & 0.0155 & 0.9283 \\
0.1395 & 0.2921 & 0.3664 & & 0.5227 & 0.4275 & 0.0254 & & 0.0166 & 0.5121 & 0.2265 & & 0.4282 & 0.9667 & 0.9636 \\
0.4561 & 0.7852 & 0.1997 & & 0.1079 & 0.0314 & 0.6364 & & 0.6452 & 0.1744 & 0.6909 & & 0.8530 & 0.2944 & 0.3851 \\
0.5142 & 0.5924 & 0.0465 & & 0.3144 & 0.5086 & 0.9076 & & 0.3867 & 0.9367 & 0.1375 & & 0.8511 & 0.3169 & 0.1695 \\
0.6075 & 0.1705 & 0.0651 & & 0.2493 & 0.4104 & 0.7556 & & 0.3411 & 0.1135 & 0.9247 & & 0.5568 & 0.9362 & 0.6960 \\
0.9489 & 0.9656 & 0.8084 & & 0.2288 & 0.0770 & 0.2898 & & 0.8773 & 0.2579 & 0.6600 & & 0.5701 & 0.0972 & 0.6150 \\
0.3046 & 0.0977 & 0.6842 & & 0.1612 & 0.9297 & 0.8081 & & 0.8172 & 0.5552 & 0.5297 & & 0.9901 & 0.1401 & 0.5183 \\
0.4402 & 0.1220 & 0.4952 & & 0.6334 & 0.8715 & 0.8037 & & 0.2419 & 0.0931 & 0.8972 & & 0.8774 & 0.7408 & 0.6970 \\
0.0344 & 0.9093 & 0.2588 & & 0.1866 & 0.8926 & 0.5393 & & 0.9004 & 0.6331 & 0.3390 & & 0.7025 & 0.3595 & 0.2936 \\
0.6625 & 0.3117 & 0.5201 & & 0.8074 & 0.8961 & 0.3180 & & 0.3492 & 0.7260 & 0.8971 & & 0.8094 & 0.8101 & 0.8671 \\
0.5467 & 0.1849 & 0.9696 & & 0.1101 & 0.2279 & 0.4271 & & 0.8871 & 0.7799 & 0.6420 & & 0.9132 & 0.5113 & 0.5015 \\
0.7751 & 0.9395 & 0.8948 & & 0.8180 & 0.8607 & 0.0070 & & 0.0841 & 0.1616 & 0.8986 & & 0.7983 & 0.6500 & 0.7020 \\
0.5979 & 0.9219 & 0.0885 & & 0.5107 & 0.4174 & 0.2221 & & 0.6064 & 0.0092 & 0.1015 & & 0.7958 & 0.8900 & 0.3380 \\
0.1960 & 0.0452 & 0.3253 & & 0.1199 & 0.3376 & 0.9429 & & 0.6635 & 0.0051 & 0.1608 & & 0.3756 & 0.0940 & 0.5783 \\
0.3887 & 0.2713 & 0.8287 & & 0.3232 & 0.5188 & 0.7030 & & 0.5487 & 0.6919 & 0.6520 & & 0.0359 & 0.4656 & 0.5426 \\
0.3568 & 0.2809 & 0.5427 & & 0.3636 & 0.9718 & 0.9624 & & 0.2243 & 0.7122 & 0.2372 & & 0.2865 & 0.5908 & 0.0305 \\
0.1409 & 0.8022 & 0.0746 & & 0.2518 & 0.4972 & 0.3009 & & 0.3254 & 0.7465 & 0.6496 & & 0.0373 & 0.8226 & 0.3602 \\
0.9869 & 0.7722 & 0.1987 & & 0.2848 & 0.0369 & 0.6096 & & 0.8492 & 0.6576 & 0.5683 & & 0.1271 & 0.5222 & 0.7700 \\
0.0055 & 0.8155 & 0.7069 & & 0.5027 & 0.0515 & 0.2786 & & 0.0937 & 0.3677 & 0.2652 & & 0.2158 & 0.6229 & 0.0853 \\
\hline
\end{tabular}
\caption{The value of $\a$ to calculate the Gaffke's bound in Section~\ref{app:gaffke}.}
  \label{tab:u_gaffke}
\end{table}

\section{Proofs}
\label{app:proof}
To simplify notations, we omit some inputs, superscripts and subscript of $b_T, \ell^{\text{Discrete}}_T, \ell^{\text{Uniform}}_T, \ell^{\text{Sample}}_T$ and $\ell^{\text{MILP}}_T$ when the context is clear. 
\subsection{Proofs of Section~\ref{sec:sub_def} and Section~\ref{sec:opt}}
We use the following lemma from~\citep{phan2021practical}:
\begin{lemma}[Probability Integral Transform]
\label{lem:cdf_vs_U}
Let $X$ be a random variable with cdf $F$ and $Y = F(X)$, known as the probability integral transform of $X$.  Then for any $0 \le y \le 1, x \in \R$, 
\begin{align}
    \P(Y \le y) &\le y.
\end{align}
\end{lemma}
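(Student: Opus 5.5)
We prove Lemma~\ref{lem:cdf_vs_U} using the generalized inverse of $F$ together with right-continuity of $F$. The case $y=1$ is trivial, so fix $y \in [0,1)$. The plan is to show that the event $\{F(X) \le y\}$ is contained in an event of the form $\{X < t\}$ or $\{X \le t\}$ whose probability we can compute directly from $F$ and show is at most $y$.

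Define $t := \sup\{x \in \R : F(x) \le y\}$. Since $F(x) \to 1 > y$ as $x \to \infty$, this set is bounded above. If the set is empty, then $F(X) > y$ always and $\P(Y \le y) = 0 \le y$. Otherwise $t$ is finite, or the set is all of $\R$ when $t = -\infty$; the latter cannot happen when $y < 1$, since that would force $F \le y$ everywhere. Because $F$ is nondecreasing, the set $\{x : F(x) \le y\}$ is an interval of the form $(-\infty, t)$ or $(-\infty, t]$. By right-continuity of $F$, the set $\{F \le y\}$ is closed from the right at $t$, so it equals $(-\infty, t]$ exactly when $F(t) \le y$.

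There are two cases.

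\textbf{Case 1: $F(t) \le y$.} Then $\{F(X) \le y\} = \{X \le t\}$, so
\[
\P(Y \le y) = \P(X \le t) = F(t) \le y.
\]

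\textbf{Case 2: $F(t) > y$.} Then $\{F(X) \le y\} = \{X < t\}$, so
\[
\P(Y \le y) = \P(X < t) = \lim_{s \uparrow t} F(s) \le y,
\]
since every $s < t$ satisfies $F(s) \le y$.

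In both cases $\P(Y \le y) \le y$, which is the claim.

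The main obstacle is handling atoms of $F$ correctly, which is exactly what the case split at the boundary point $t$ does. An alternative route is to use the randomized PIT: let $V = F(X^-) + U\,(F(X) - F(X^-))$ with $U$ uniform and independent of $X$. Then $V$ is exactly uniform on $[0,1]$ and $V \le F(X)$, so $\P(F(X) \le y) \le \P(V \le y) = y$. I would present the direct argument, and mention this coupling as a remark.
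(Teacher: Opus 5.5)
The paper gives no proof of this lemma. It is imported from \citep{phan2021practical} and used as a black box, so your argument supplies a proof rather than competing with one, and it is correct. Taking $t=\sup\{x:F(x)\le y\}$ and splitting on whether $t$ lies in the down-set $\{F\le y\}$ gives either $\P(Y\le y)=F(t)\le y$ or $\P(Y\le y)=\lim_{s\uparrow t}F(s)\le y$. This handles atoms, which the paper needs: in the proof of Theorem~\ref{thm:correct} the lemma is applied to $Z=-T(\X)$, whose law has atoms whenever $F$ does.

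Your closing remark deserves more weight than a footnote. Write the distributional transform as $V=F_-(X)+U\,(F(X)-F_-(X))$, with $F_-$ the left limit of $F$; avoid $F(X^-)$, since the paper uses $X^-:=-X$. It gives a pathwise coupling $V\le F(X)$ with $V$ exactly uniform. That is the form the paper actually needs in the proof of Lemma~\ref{lem:constraint_equal_helper}. There the lemma is cited to bound $\P(\exists \y\in\S(\x): G(\X^-)\preceq F^{\ge}(\y))$ by the same probability with $\U$ in place of $G(\X^-)$. That event is a down-set for $\preceq$, so coupling the i.i.d.\ coordinates one at a time gives the inequality immediately. The scalar bound $\P(Y\le y)\le y$ alone needs an extra step that the paper leaves unstated: the bound means $Y$ first-order stochastically dominates a uniform, and hence admits such a coupling. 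In short, your direct proof gives an elementary, self-contained scalar statement, and the coupling gives the multivariate form used downstream.

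There are two wording slips to fix.

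First, right-continuity does not make $\{F\le y\}$ closed at $t$. For a point mass at $0$ and $y=1/2$ the set is $(-\infty,0)$, which is exactly your Case 2. What right-continuity gives is $F(t)=\lim_{s\downarrow t}F(s)\ge y$, so Case 1 actually holds with equality. Your case split never uses the false claim, so just delete it.

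Second, the sentence about $t=-\infty$ is garbled. A nonempty set has $t>-\infty$. The set being all of $\R$ would mean $t=+\infty$, which you already ruled out via $F(x)\to 1>y$.
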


\begin{proof}[Proof of Theorem~\ref{thm:correct}]
If $F \in \F$ and $F^{\ge}_{T(\X)}(T(\x))  > \alpha$, then the optimization problem is feasible with input $\x$ and $b^{\F}_T(\x, \alpha) \le \theta(F)$ by definition. 
Let $Z:= -T(\X)$ and $Y: = F_{Z}(Z)$ be the probability integral transform of $Z$. Then $Y =F^{\ge}_{T(\X)}(T(\X))$ and $\P(Y \le \alpha) \le \alpha$ by  Lemma~\ref{lem:cdf_vs_U}. 
For any $F \in \F$ we have:
\begin{align}
    \P_{F}(b^{\F}_T(\X, \alpha) \le \theta(F)) &\ge \P_{F}(F^{\ge}_{T(\X)}(T(\X)) > \alpha) \\
    &=1-  \P_{F}( Y \le \alpha)) \\
    &\ge 1 - \alpha \text{ by Lemma~\ref{lem:cdf_vs_U}}.
\end{align}

\end{proof}
\begin{proof}[Proof of Lemma~\ref{lem:helper}]
Suppose $T(\x) \le T(\y)$. Then if $G$ satisfies $\P_G(T(\X) \ge T(\y)) > \alpha $ then $G$ satisfies $\P_G(T(\X) \ge T(\x)) > \alpha $. Therefore $b^{\F}_T(\x, \alpha) \le b^{\F}_T(\y, \alpha)$. 
\end{proof}
\begin{proof}[Proof of Theorem~\ref{thm:optimal}]
Suppose there exists $\x \in \R^n$:  
$$
g(\x) > b^{\F}_T(\x, \alpha).
$$
Then the optimization problem is feasible with input $\x$ and there exists $\epsilon> 0 $ such that $g(\x) > \epsilon + b^{\F}_T(\x, \alpha)$. By Definition~\ref{def:optimal_bound}, there exists a distribution $G \in \mathcal{F}$ such that: 
\begin{itemize}
    \item $\P_G(T(\X) \ge T(\x)) > \alpha$
    \item $b^{\F}_T(\x, \alpha) \ge \theta(G)-  \epsilon$
\end{itemize}

We have $g(\x) > b^{\F}_T(\x, \alpha) + \epsilon \ge \theta(G) $. Therefore for any $\y$ such that $T(\y) \ge T(\x)$, $g(\y) \ge g(\x) > \theta(G)$. Therefore: 
\begin{align}
    \P_G (g(\X) > \theta(G)) \ge \P_G(T(\X) \ge T(\x)) > \alpha, 
\end{align}
which is equivalent to:
\begin{align}
    \P_G (g(\X) \le \theta(G))  < 1- \alpha, 
\end{align}
a contradiction. 
\end{proof}

\begin{proof}[Proof of Corollary~\ref{cor:proof_equiv}]
If $T(\x)$ is a valid lower confidence bound, then from Thm.~\ref{thm:optimal}, $\forall \x \in \R^n, T(\x) \le b^{\F}_T(\x)$. If $\forall \x\in \R^n, T(\x) \le b^{\F}_T(\x)$, since $b^{\F}_T(\x)$ is a valid lower confidence bound, $T(\x)$ is a valid lower confidence bound. 
\end{proof}
\subsection{Proofs of Section~\ref{sec:sample_mean_order}}
The result we use from \citep{luczak2016maximaltailprobabilitysums} is the following: 
\begin{lemma}[From  \citep{luczak2016maximaltailprobabilitysums}]
\label{lem:tail_bound}
Let $\mathcal{F}^{[0,\infty)} (v)$ be the set of distributions $G$ on the interval $[0, \infty)$ such that $\mu(G) \le v$. Let $c > 0$. If (i) $n=2$ and $\frac{v}{c} < x_0(2)$ or (ii) $n = 3$ and $\frac{v}{c} < x_0(3)$ or (iii) $n\ge 5$ and $\frac{v}{c}  < x_0(n)$ and $\frac{v}{c} < \frac{1}{2n-1}$:
\begin{align}
    \sup_{G \in \mathcal{F}^{[0,\infty)}(v)} \P_G(X_1 + \cdots X_n \ge c)  = f_n(\frac{v}{c}) 
\end{align}
where:
\begin{align}
    f_n(x) =  1 - (1-x)^n 
\end{align}
and $x_0(n)$ is the root of the polynomial $p_n(x) = (n x)^n + (1-x)^n -1$ in $(0, 1/n)$. When $n=2$, $x_0(n) = 2/5$ and when $n = 3, x_0(n) = \frac{-3 + \sqrt{321}} {52}$. 

A worst case distribution $G$ is a distribution on $2$ points $\{0,c\}$:
\begin{align}
    \P_G(X_i = c) &= \frac{v}{c} \\
    \P_G(X_i = 0) &= 1- \frac{v}{c}.
\end{align}
\end{lemma}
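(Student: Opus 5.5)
Since this is imported from \citep{luczak2016maximaltailprobabilitysums}, the cleanest route is to cite it directly. If I were to reprove it, I would proceed as follows.

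\textbf{Normalization and lower bound.} By scaling $X_i \mapsto X_i/c$ I may assume $c=1$ and write $x=v/c$. For the lower bound I take the stated two-point distribution with $\P_G(X_i=1)=x$ and $\P_G(X_i=0)=1-x$. Its mean is exactly $x\le v/c$, so it is feasible. Since each $X_i\in\{0,1\}$, the event $X_1+\cdots+X_n\ge 1$ is the event that at least one $X_i$ equals $1$. Its probability is $1-(1-x)^n=f_n(x)$, so the supremum is at least $f_n(x)$.

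\textbf{Reduction for the upper bound.} Fix $G$ with mean at most $x$. I would restrict the support in two steps.
\begin{itemize}
\item Any mass above $1$ can be moved down to $1$. This does not decrease $\P(\sum_i X_i\ge1)$ and does not increase the mean, so I may assume $G$ is supported on $[0,1]$.
\item Fix $X_2,\dots,X_n$. The target is then linear in the law of $X_1$, namely $\E_{X_1}[\phi(X_1)]$ with $\phi(y)=\P(\sum_{i\ge2}X_i\ge 1-y)$ nondecreasing. Only one linear moment constraint is present, so a Carathéodory / extreme-point argument suggests that near-optimal laws are supported on $0$ plus few atoms.
\end{itemize}
Because the $X_i$ are i.i.d., the objective is a degree-$n$ polynomial in $G$ and is not linear. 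I would therefore use a symmetrization/local-improvement argument: perturb $G$ by moving mass between atoms while keeping the mean fixed, and show the objective can only be improved unless $G$ has the special form. The special form I expect is: an atom at $0$ plus atoms at points $1/k$, $k\in\{1,\dots,n\}$, with rounding up to such thresholds being free.

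\textbf{Comparing candidates.} Next I would compare candidate distributions with mass $q=kx$ (capped at $1$) on $1/k$ and the rest on $0$. Such a candidate gives
\begin{align*}
\P\bigl(\mathrm{Bin}(n,kx)\ge k\bigr).
\end{align*}
The case $k=1$ gives $f_n(x)$, and the case $k=n$ gives $(nx)^n$. The condition $x<x_0(n)$ is exactly $p_n(x)<0$, that is,
\begin{align*}
(nx)^n<1-(1-x)^n,
\end{align*}
so $k=1$ beats $k=n$. For $n\ge5$, the extra condition $x<1/(2n-1)$ is what I would use to dominate the intermediate $k$. I would show that $k\mapsto \P(\mathrm{Bin}(n,kx)\ge k)$ is first decreasing and then possibly increasing up to $k=n$, so that only the endpoints matter.

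\textbf{Main obstacle.} The hard step is the reduction to these candidate distributions. The i.i.d. structure breaks the linear-programming argument, and mixtures of several atom positions must be ruled out. This is precisely the technical core of Luczak et al. I would first try a coordinate-wise argument: apply the linear extreme-point argument to the law of $X_1$ with the others fixed, then iterate using symmetry. If that fails, I would fall back on citing their theorem.
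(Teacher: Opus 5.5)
The paper gives no proof of this lemma: it is imported verbatim from \citet{luczak2016maximaltailprobabilitysums} and used only as a black box in the proof of Theorem~\ref{thm:luczak_positive}, so your primary route of citing it directly is exactly the paper's approach. Your two-point lower bound is correct and self-contained, but, as you acknowledge, your upper-bound sketch is not a proof on its own: the coordinate-wise extreme-point step leaves the i.i.d.\ class, and the reduction to $\{0,1/k\}$-type laws is precisely the content of the cited result.
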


\begin{proof}[Proof of Theorem~\ref{thm:luczak_positive}]
If $\sum_{i=1}^n x_i = 0$, then $ b^{[0, \infty)}(\x) =0$. Now we assume $\sum_{i=1}^n x_i > 0$. 
From Lemma~\ref{lem:tail_bound}, given a sample $\x$, we have  if (i) $n=2$ and $\frac{v}{\sum_i x_i} < x_0(2)$ and or (ii) $n = 3$ and $\frac{v}{\sum_i x_i} < x_0(3)$ or (iii) $n\ge 5$ and $\frac{v}{\sum_i x_i}  < x_0(n)$ and $\frac{v}{\sum_i x_i} < \frac{1}{2n-1}$:

\begin{align}
    \sup_{G \in \mathcal{F}^{[0,\infty)}(v)} \P_G(X_1 + \cdots X_n \ge \sum_{i=1}^n x_i ) = f_n(\frac{v}{\sum_{i=1}^n x_i} ). \label{eq:sup_constraint}
\end{align}

If $\alpha < 1-(1- x_0(n))^n$ then let $v_{\alpha} := 
    (\sum_{i=1}^n x_i) (1 - (1-\alpha)^{1/n}).$
Then $\frac{v_{\alpha}}{\sum_{i=1}^n x_i} = 1 - (1-\alpha)^{1/n} < x_0(n)$. 

Since $x_0(n)$ cannot be easily calculated for any $n \ge 4$, we derived a simpler bound on $\alpha$. \citet{LUCZAK2014178} note that $x_0(n) > \frac{1}{n} - \frac{1}{2n^2}$. Therefore if  
$\alpha < 1 - \left( 1- \frac{1}{n} + \frac{1}{2n^2}\right)^n$ then $\alpha < 1-(1- x_0(n))^n$. 
It can be check that $1 - \left( 1- \frac{1}{n} + \frac{1}{2n^2}\right)^n > 0.3$ for all $n \in \mathbb{N}, n \ge 1$, therefore if $\alpha \le 0.3$ then $\alpha < 1-(1- x_0(n))^n$. 

 Now we show that if $n \le 10,000$ and $\alpha \le 0.3$ then $\frac{v_{\alpha}}{\sum_{i=1}^n x_i} < \frac{1}{2n-1}$. When $n \le 10,000$ we have $ 0.3 < 1 - (1- \frac{1}{2n-1})^n$. Therefore $\alpha \le 0.3 < 1 - (1- \frac{1}{2n-1})^n$, which is equivalent to $1 - (1-\alpha)^{1/n}  < \frac{1}{2n-1}$. Therefore $\frac{v_{\alpha}}{\sum_{i=1}^n x_i} < \frac{1}{2n-1}$. 
 
 So if (i) $n=2$ and $\alpha < 1-(1-2/5)^2 = 16/25$ or (ii) $n=3$ and $\alpha < 0.63 < 1-(1-\frac{-3 + \sqrt{321}}{52})^3$ or (iii) $5 \le n\le 10000$ and $\alpha \le 0.3$ we apply Eq.~\ref{eq:sup_constraint} with $v = v_{\alpha}$ and have:
\begin{align}
    \sup_{G \in \mathcal{F}^{[0, \infty)}(v_{\alpha})} \P_G(\frac{X_1 + \cdots X_n}{n} \ge \frac{\sum_{i=1}^n x_i}{n}) &= f_n(\frac{v_{\alpha}}{\sum_{i=1}^n x_i} ) \\&= \alpha. \label{eq:equal_constraint}
\end{align}
Recall that:  
$$
b^{[0, \infty)}(\x)=  \inf_{G \in \mathcal{F}^{[0,\infty)}} \; \mu(G)  
$$
subject to 
\begin{align}
     \P_G(\frac{X_1 + \cdots X_n}{n} \ge \frac{\sum_{i=1}^n x_i}{n})  >\alpha. 
\end{align}

From Eq.~\ref{eq:equal_constraint}, 
if $\P_G(T(\X) \ge T(\x)) >\alpha$ then $\mu(G) > v_{\alpha}$. Therefore $b^{[0, \infty)}(\x) \ge v_{\alpha}$. 

It remains to show that $b^{[0, \infty)}(\x) = v_{\alpha}$. For any $\epsilon > 0$, let $\delta = \frac{\epsilon}{\sum_{i=1}^n x_i}$ and construct the following distribution $G$: 
\begin{align}
   &\P_G(X = \sum_{i=1}^n x_i) = \frac{v_{\alpha}}{\sum_{i=1}^n x_i} + \delta \\
    &\P_G(X = 0) = 1 - \frac{v_{\alpha}}{\sum_{i=1}^n x_i} - \delta. 
\end{align}
Then: 
\begin{align}
\mu(G) &= (\frac{v_{\alpha}}{\sum_{i=1}^n x_i} + \delta) (\sum_{i=1}^n x_i) \\
&= v_{\alpha}  + \epsilon. \\
    \P_G(\sum_{i=1}^n X_i \ge \sum_{i=1}^n x_i) &= 1 - (1 - \frac{v_{\alpha}}{\sum_{i=1}^n x_i} - \delta)^n \\
    &> 1 - (1 - \frac{v_{\alpha}}{\sum_{i=1}^n x_i})^n \\
    &=\alpha. 
\end{align}
Since we can always construct $G$ on $[0, \infty)$ such that $ \P_G(\sum_{i=1}^n X_i \ge \sum_{i=1}^n x_i) > \alpha$ and $\mu(G)$ is $\epsilon$ close to $v_{\alpha} $, we have $b^{[0, \infty)}(\x) = v_{\alpha}$.

We now consider the case when $\F$ is the set of all distributions on $[0,1]$. 
If $c \le 1$, the worst case distribution on $\{0, c\}$ satisfies the condition that the distribution has support on $[0,1]$. Therefore it can only give us lower confidence bound when $\F$ is the set of all distributions on $[0,1]$ for samples with $\sum_{i=1}^n x_i  \le 1$. 
\end{proof}

\subsection{Proofs of Section~\ref{sec:MILP}}

\subsubsection{Proofs of Section~\ref{sec:discrete_bound}}

\begin{proof}[Proof of Lemma~\ref{lem:discrete_bound}]
Let $\cC:= \{ F \in \F: C(F)\}$ be the constraint on the CDF $F$ to find $b^{\F}_T$. Then:
\begin{align}
b^{\F}_T(\x) = \inf_{F \in \cC} \theta(F). 
\end{align}

Let
$\cV:= \{ 0 \le f_1 \le \cdots \le f_m = 1: C(H_{\v \oplus 1, \f})\}$ be the constraint on the CDF $H_{\v \oplus 1, \f}$ to find $\ell_T$. Then:
\begin{align}
\ell^{Discrete}_T(\x, m) = \inf_{\f \in \cV} \theta(H_{\v \oplus 1 , \f}) - \Delta_m. 
\end{align}
If $\cC$ is not empty, then there exists $F \in \F$ such that $\P_F(T(\X) \ge T(\x) > \alpha$. Since $H_{\v \oplus 1, F(\v)}$ first order stochastically dominates $F$, $C(H_{\v \oplus 1, F(\v)})$ is true. Therefore $\cV$ is not empty. If $\cC$ is empty, then $\cV$ will also be empty. So the optimization problem to find $b_T$ is feasible if and only if the optimization problem to find $u_T$ is feasible.

If the problem to find $b_T$ is infeasible, then all bounds $b_T, u_T$ and $\ell_T$ are $\infty$ adn the conclusion is trivial. So we now assume that the problem is feasible. 

\begin{enumerate}
\item 
To compute $u_T(\x, m)$ we are searching over a subset of $\F$. Therefore $u_T(\x, m) \ge b^{\F}_T(\x)$. 
\item 
We will now show that $b^{\F}_T(\x) \ge \ell^{Discrete}_T(\x, m)$.

For any CDF $F \in \cC$, we will show that $ F(\v) \in \cV$ and the CDF $H_{\v \oplus 1, F(\v)}$ satisfies: 
\begin{align}
\theta(H_{\v \oplus 1, F(\v)}) - \Delta_m \le \theta(F).  
\end{align}
\begin{itemize}
\item Since the support is lower bounded by $0$, the CDF $F$ first order stochastically dominates $H_{\v \oplus 0, F(\v)}$. Therefore $\theta(F) \ge \theta(H_{\v \oplus 0, F(\v)})$. We have: 
\begin{align}
    \theta(H_{\v \oplus 1, F(\v)}) - \theta(F)&\le   \theta(H_{\v \oplus 1, F(\v)})  - \theta(H_{\v \oplus 0, F(\v)}) \\
    &\le \Delta_m. 
\end{align}
When $\theta$ is the mean we have: 
\begin{align}
\Delta_m &= \sup_{\f} (\theta(H_{\v \oplus 1, F(\v)})  - \theta(H_{\v \oplus 0, F(\v)})) \\
 &= \sup_{\f} \left(1- \frac{1}{m}\sum_{i=1}^{m-1} f_i\right)- \left(1- \frac{1}{m}\sum_{i=1}^{m} f_i \right)\\
  &= \frac{1}{m}. 
  \end{align}
\item Now we will show that if $F \in \cC$ then $F(\v) \in \cV$. Since the support  is upper bounded by $1$, the CDF $H_{\v \oplus 1, F(\v)}$ first-order stochastically dominates the CDF $F$. Then $\P_F(T(\X) \ge T(\x)) > \alpha$ implies $\P_{H_{\v \oplus 1, F(\v)}}(T(\X) \ge T(\x)) > \alpha$.

Then $C(F)$ implies $C(H_{\v \oplus 1, F(\v)})$. Therefore if $F \in \cC$ then $F(\v) \in \cV$. 
\end{itemize}

Therefore $\ell^{Discrete}_T(\x, m) \le b^{\F}_T(\x)$. 
\item 
It remains to show that $u^{\text{Discrete}}_T(\x, m) - b^{\F}_T(\x) \le \Delta_m$ and $
b^{\F}_T(\x) - \ell^{\text{Discrete}}_T(\x, m) \le \Delta_m$. The gap between the upper bound $u^{\text{Discrete}}_T(\x, m)$ and the lower bound $\ell^{\text{Discrete}}_T(\x, m)$ can be bounded by $\Delta_m$. Therefore the gap between $b^{\F}_T(\x)$ and the upper bound (or the lower bound) is at most $\Delta_m$. 
\end{enumerate}
\end{proof}

\subsubsection{Proofs of Section~\ref{sec:chance_constraint}}
Now we will prove Lemma~\ref{lem:constraint_equal} and Lemma~\ref{lem:chance_constrained_bound}. 
Let $\S(\x)  = \{\y \in \R^n: T(\y) \ge T(\x)\}$. Let $X^-:= -X$, $\y^-:= -\y$ and $\S^-(\x)  = \{\y^- \in \R^n: T(\y) \ge T(\x)\}$.

\begin{lemma}\label{lem:constraint_equal_helper}
 Let $T: \R^n \rightarrow \R$ be a statistic of a sample of size $n$. Let $\U$ be $n$ i.i.d samples from the closed interval $[0,1]$. Then for any distribution $F$ and any $\x \in \R^n$:
\begin{align}
\P_{F}(T(\X) \ge T(\x)) \le\P_{\U} (\exists \y \in \R^n, T(\y) \ge T(\x):  \U \leq F^{\geq}(\y)) .
\end{align}

If $T$ is monotonic then for any distribution $F$ and any $\x \in \R^n$:
\begin{align}
\P_{F}(T(\X) \ge T(\x)) =\P_{\U} (\exists \y \in \R^n, T(\y) \ge T(\x):  \U \leq F^{\geq}(\y)) .
               \end{align}
\end{lemma}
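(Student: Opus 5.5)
We realize $\X$ as a function of $\U$ through a quantile transform built from the survival function $F^{\geq}$. Then $\{T(\X)\ge T(\x)\}$ is contained in the event on the right-hand side, with equality when $T$ is monotonic.

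\textbf{Coupling.} For $u\in[0,1]$ define the generalized inverse
\[
Q(u):=\sup\{y\in\R: F^{\geq}(y)\ge u\},
\]
and set $X_j:=Q(U_j)$. Since $F^{\geq}(y)=\P_F(X\ge y)$ is non-increasing and left-continuous, the set $\{y: F^{\geq}(y)\ge u\}$ is a closed-on-the-right half-line $(-\infty,Q(u)]$, at least for $u\in(0,1]$. This gives the Galois relation
\[
y\le Q(u)\iff u\le F^{\geq}(y).
\]
From it, $\P(Q(U)\ge y)=\P(U\le F^{\geq}(y))=F^{\geq}(y)$, so $Q(U)$ has law $F$. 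The endpoints need care: $u=0$ gives $Q=+\infty$, but this is a null event, and I will remove it by working with $U\in(0,1]$ almost surely. Since the $U_j$ are independent, $\X=(Q(U_1),\dots,Q(U_n))$ consists of $n$ i.i.d.\ draws from $F$. Hence
\[
\P_F(T(\X)\ge T(\x))=\P_{\U}\big(T(Q(\U))\ge T(\x)\big),
\]
where $Q$ is applied coordinatewise.

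\textbf{Inequality.} Suppose $T(Q(\U))\ge T(\x)$. Take $\y:=Q(\U)$; it is finite almost surely. Each coordinate satisfies $y_j\le Q(U_j)$, so the Galois relation gives $U_j\le F^{\geq}(y_j)$. Thus $\y$ witnesses the existential event, and the event on the left is contained (up to a null set) in the event on the right. This proves the first claim with no assumption on $T$.

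\textbf{Equality under monotonicity.} Conversely, suppose some $\y$ satisfies $T(\y)\ge T(\x)$ and $\U\le F^{\geq}(\y)$ coordinatewise. The Galois relation gives $y_j\le Q(U_j)$ for every $j$. Note that $\preceq$ is defined via order statistics, but coordinatewise $y_j\le z_j$ implies $y_{(i)}\le z_{(i)}$ for all $i$. So $\y\preceq Q(\U)$, and monotonicity of $T$ yields $T(Q(\U))\ge T(\y)\ge T(\x)$. The two events therefore coincide almost surely, and their probabilities are equal.

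\textbf{Main obstacle.} The delicate step is establishing the Galois relation rigorously, which requires left-continuity of $F^{\geq}$ and the edge cases $u=0$ and $u=1$. For $u=1$, the supremum is the essential infimum of the support, which is finite if $F$ is bounded below. If it is not finite, I would extend $Q$ to take values in $[-\infty,\infty)$ and note that $\{U_j=1\}$ is a null event anyway. I will first verify left-continuity directly from $\P(X\ge y)=\lim_{y'\uparrow y}\P(X\ge y')$ via continuity from above of measures. Everything else is bookkeeping on null sets.
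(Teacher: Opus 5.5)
Your argument is correct. For the equality half it is the paper's proof in different notation: your $Q(u)=\sup\{y:F^{\geq}(y)\ge u\}$ is exactly $-G^{-1}(u)$, where $G$ is the CDF of $-X$ and $G^{-1}(u)=\inf\{t:G(t)\ge u\}$ is the generalized inverse the paper uses. The paper's step that $\u\le G(\y^-)$ forces $G^{-1}(\u)\preceq\y^-$ is one direction of your Galois relation.

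The inequality half is where you differ. The paper does not use the quantile coupling there. It chains event inclusions ($\X=\y$ gives $\X^-\preceq\y^-$, hence $G(\X^-)\preceq G(\y^-)=F^{\geq}(\y)$). It then replaces $G(\X^-)$ by $\U$, citing the probability integral transform $\P(G(X^-)\le t)\le t$. That route avoids generalized inverses and left-continuity. However, its final step tacitly passes from a one-dimensional stochastic-dominance statement to an $n$-dimensional event. Making that rigorous needs a coupling with $G(X_j^-)\ge U_j$, together with the fact that the event is a lower set in $\U$. Your construction supplies exactly this coupling, because $F^{\geq}(Q(U_j))\ge U_j$ pointwise. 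So one coupling proves both directions, and the equality case becomes an almost-sure identity of events. The cost is the left-continuity and endpoint bookkeeping you already flagged.

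One small correction: in the paper, $\U\le F^{\geq}(\y)$ denotes the order-statistics relation $\preceq$, not a coordinatewise one. In the inequality half this is harmless, since coordinatewise comparison implies $\preceq$. In the converse, though, you may only assume $U_{(i)}\le F^{\geq}(y_{(n+1-i)})$, which is the $i$-th smallest entry of $F^{\geq}(\y)$ because $F^{\geq}$ is non-increasing. The Galois relation then gives $y_{(n+1-i)}\le Q(U_{(i)})=(Q(\U))_{(n+1-i)}$, using that $Q$ is also non-increasing. Hence $\y\preceq Q(\U)$, and monotonicity of $T$ finishes the argument exactly as you wrote.
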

\begin{proof}[Proof of Lemma~\ref{lem:constraint_equal_helper}]

The proof of the first statement follows closely the proof in ~\citep{phan2021practical} with minor modifications.  Let $G$ denote the CDF  of $X^-$. Then $G(-y) = \P(- X \le -y) = \P(X \ge y) = F^{\ge}(y)$.

\begin{align}
     \P_{\X} (T(\X) \ge T(\x)) 
    & = \P_{\X}(\exists \y \in \S(\x): \X  = \y) 
    \\ 
    & \le \P_{\X} (\exists {\y \in \S(\x)}: \X^- \preceq \y^-) \text{ because $\X = \y$ implies $\X^- \preceq \y^-$ } \\
    &\le \P_{\X} (\exists {\y \in \S(\x)}: G(\X^-) \preceq G(\y^-)) \\
     &\le \P_{\X} (\exists {\y \in \S(\x)}: G(\X^-) \preceq F^{\ge}(\y)) \\
    &\le \P_{\U} (\exists {\y \in\S(\x)}: \U \preceq F^{\ge}(\y)  ) \text{ by Lemma~\ref{lem:cdf_vs_U}. }
    \end{align}

Now we prove the second statement by showing that if $T$ is monotonic then $\P_{\X} (T(\X) \ge T(\x))  \ge \P_{\U} (\exists {\y \in\S(\x)}: \U \preceq F^{\ge}(\y)  )$\footnote{Gemini contributed to the derivation of the following part of the proof.}. 
The proof uses some ideas from the proof in \citep{phan2021practical}. 
Let $G^{-1}(u) := \inf \{x \in \R:  G(x) \ge u\}$ be the inverse of the CDF $G$. 
From \citep{Embrechts2013}, if $U$ is an uniform random variable in $[0,1]$, $G^{-1}(U)$ has CDF $G$. 

  \begin{align}
    \P_{F}(T(\X) \ge T(\x)) 
    =&\P_{F}(\X \in \R^n, T(\X) \ge T(\x)) \\ 
    =&\P_{G}(\X^- \in \R^n, T(\X) \ge T(\x)) \\ 
    =& \P_{G}(\X^- \in \S^-(\x)) \\ 
    = &\P_{\U} (G^{-1}(\U) \in \S^-(\x) ) .
    \end{align}

    We will show that if $\exists \y^- \in \S^-(\x): \u \leq G(\y^-)$ then $G^{-1}(\u) \in \S^-(\x)$.
Suppose $\exists \y^- \in \S^-(\x): \u \le G(\y^-)$, denoted ${\y^-}^*$ and let $\y^*:= - {\y^-}^*$. 

By definition $G^{-1}(\u) = \inf \{\y^- \in \R:  \u \leq G(\y^-)\}$ so $G^{-1}(\u) \preceq {\y^-}^*$. Then $\y^* \preceq - G^{-1}(\u)$, and $T(\y^*) \le T(- G^{-1}(\u))$ because $T$ is monotonic. 

Since ${\y^-}^* \in \S^-(\x)$, we have $T(\y^*) \ge T(\x)$. Therefore  $T(- G^{-1}(\u)) \ge T(\y^*) \ge T(\x)$, and $G^{-1}(\u) \in \S^-(\x)$.

Therefore:
\begin{align}
\P_{\X}(T(\X) \ge T(\x)) =& \P_{\U} (G^{-1}(\U) \in \S^-(\x) ) \\
\ge &
\P_{\U} (\exists \y^- \in \S^-(\x):  \U \preceq G(\y^-)) \\
=&\P_{\U} (\exists \y \in \S(\x):  \U \preceq F^{\geq}(\y)) .
\end{align}

\end{proof}

\begin{proof}[Proof of Lemma~\ref{lem:constraint_equal}]
When $\U$ is sampled from $(0,1]$, $\U \succ 0$. Since $P_F(X\le 1) = 1$, we have if $P(X \ge y) > 0$ then $y \le 1$. 
Let $P_{\U[0,1]}$ denote the probability when $\U$ is sampled from $[0,1]$ and $P_{\U(0,1]}$  denote the probability when $\U$ is sampled from $(0,1]$. 

Therefore from Lemma~\ref{lem:constraint_equal_helper}: 
\begin{align}
\P_{\X}(T(\X) \ge T(\x)) 
\le &\P_{\U[0,1]} (\exists \y \in \R^n:  T(\y) \ge T(\x), \U \preceq F^{\geq}(\y)) \\
=&\P_{\U(0,1]} (\exists \y \in \R^n:  T(\y) \ge T(\x),   \U \preceq F^{\geq}(\y)) \\
=&\P_{\U(0,1]} (\exists \y \in (-\infty, 1]^n:  T(\y) \ge T(\x),   \U \preceq F^{\geq}(\y)) .
\end{align}

If $T$ is monotonic then 
\begin{align}
\P_{\X}(T(\X) \ge T(\x)) 
=&\P_{\U[0,1]} (\exists \y \in \R^n:  T(\y) \ge T(\x), \U \preceq F^{\geq}(\y)) \\
=&\P_{\U(0,1]} (\exists \y \in \R^n:  T(\y) \ge T(\x),   \U \preceq F^{\geq}(\y)) \\
=&\P_{\U(0,1]} (\exists \y \in (-\infty, 1]^n:  T(\y) \ge T(\x),   \U \preceq F^{\geq}(\y)) 
\end{align}
because $F^{\geq}(\y) \succeq \U \succ 0$ implies $\y \preceq 1$. 
\end{proof}

\begin{proof}[Proof of Lemma.~\ref{lem:chance_constrained_bound}]
The proof follows directly from Lemma~\ref{lem:constraint_equal}. 
\end{proof}

\subsubsection{Proofs of Section~\ref{sec:sample_approximation}}
\label{app:sample_approximation}
We approximate the chance-constrained program $\ell^{\text{Uniform}}_T(\x, m)$ by sample approximation \citep{luedtke2007sample}. We analyze how the chance-constrained program can be lower bounded or upper bounded by the sample approximation. 

\begin{itemize}
    \item \textbf{Lower bound.} We show that the sample approximation $\ell^{\text{Sample}}_T(\x, \epsilon)$ is a lower bound of ${\ell}^\text{Uniform}_T(\x, \alpha-\delta)$ with high probability. We prove the following results with $\alpha$ and $\epsilon$ and then substitute $\alpha$ with $\alpha - \delta$ and $\epsilon$ with $\epsilon$.  Recall that $C^{\text{Uniform}}(H^{\geq}_{\v \oplus 1, \f}(\y), \alpha)$ denote the constraint: 
\begin{align}
\P_{\U(0,1]} (\exists \y \in (-\infty, 1]^n: T(\y) \ge T(\x),  \U \leq H^{\geq}_{\v \oplus 1, \f}(\y))  > \alpha
\end{align}
and $C^{\text{Sample}}(H^{\geq}_{\v \oplus 1, \f}, \epsilon, N, \U )$ define the constraint:  
    \begin{align}
  \frac{1}{N}\sum_{i=1}^N \I\left[  \exists \y^i \in  (-\infty, 1]: T(\y^i) \ge T(\x),      \U^i \leq H^{\geq}_{\v \oplus 1, \f}(\y^i)  \right] \ge  \epsilon . 
\end{align}
    
    Following \citep{luedtke2007sample}, we define the event $\text{Fail}(\f, \u):= \{\exists \y \in (-\infty, 1]^n: T(\y) \ge T(\x),  \u \leq H^{\geq}_{\v \oplus 1, \f}(\y) \}$. 
Let
\begin{align}
\cV_{\alpha} &:= \{\f \in \R^n: \P_{\U}(\text{Fail}(\f, \U)) > \alpha\}\\
\hat \cV_{\epsilon} &:= \{\f \in \R^n: \frac{1}{N} \sum_{i=1}^N \mathbb{I}(\text{Fail}(\f, \U^i)) \ge \epsilon\}. 
\end{align}
Let $\f^*_{\alpha}$ be the optimal solution that results in ${\ell}^{\text{Uniform}}_T(\x, \alpha)$ if the problem is feasible. Then $\P(\text{Fail}(\f^*_{\alpha}, \U^i)) >  \alpha$. 
If $\f^*_{\alpha} \in \hat \cV_{\epsilon}$ then ${\ell}^{\text{Uniform}}_T(\x, \alpha) \ge \ell^{\text{Sample}}_T(\x, \epsilon)$. We prove Lemma~\ref{lem:sample_approximation_bound} by proving Lemma~\ref{lem:sample_approximation_bound1}  and Lemma~\ref{lem:sample_approximation_bound2} 

\begin{lemma}[Minor adjustment from Theorem 3 in \citep{luedtke2007sample}]
\label{lem:sample_approximation_bound1}
For $N$ samples of $\U$ and probability $\epsilon <  \alpha$: 
\begin{align}
\P_{\U}(\ell^{\text{Sample}}_T(\x, \epsilon) \le\ell^{\text{Uniform}}_T(\x, \alpha)) \ge 1 - \exp(-2N(\alpha - \epsilon)^2).
\end{align}
Therefore if
\begin{align}
N \ge \frac{1}{2(\alpha-\epsilon)^2}\ln\frac{1}{\delta}
\end{align}
then
\begin{align}
\P_{\U}(\ell^{\text{Sample}}_T(\x, \epsilon) \le\ell^{\text{Uniform}}_T(\x, \alpha)) \ge  1 - \delta. 
\end{align}
\end{lemma}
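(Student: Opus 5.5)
The plan is to reduce the claim to a single fixed-point statement about the optimizer of the chance-constrained program, as in the discussion just before the lemma. The event $\ell^{\text{Sample}}_T(\x, \epsilon) \le \ell^{\text{Uniform}}_T(\x, \alpha)$ holds whenever some near-optimal feasible point of the Uniform problem is also feasible for the sample problem. If the Uniform problem is infeasible, $\ell^{\text{Uniform}}_T = \infty$ and the inequality is trivial, so we assume feasibility.

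Since the infimum need not be attained, fix $\eta > 0$ and pick $\f_\eta \in \cV_\alpha$ with
\begin{align}
\theta(H_{\v \oplus 1, \f_\eta}) \le \ell^{\text{Uniform}}_T(\x, \alpha) + \eta
\end{align}
(up to the common $\Delta_m$ shift, which appears identically in both bounds). The point $\f_\eta$ is deterministic, chosen before $\U$ is drawn. Let $p := \P_{\U}(\text{Fail}(\f_\eta, \U))$, so $p > \alpha$. The indicators $\I(\text{Fail}(\f_\eta, \U^i))$, $1 \le i \le N$, are i.i.d. Bernoulli$(p)$.

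By Hoeffding's inequality,
\begin{align}
\P_{\U}\Big(\frac{1}{N}\sum_{i=1}^N \I(\text{Fail}(\f_\eta,\U^i)) < \epsilon\Big) \le \exp(-2N(p-\epsilon)^2) \le \exp(-2N(\alpha-\epsilon)^2),
\end{align}
using $p - \epsilon > \alpha - \epsilon > 0$. On the complementary event, $\f_\eta \in \hat\cV_\epsilon$, so $\ell^{\text{Sample}}_T(\x,\epsilon) \le \ell^{\text{Uniform}}_T(\x,\alpha) + \eta$.

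The main obstacle is removing $\eta$, because the good event depends on $\f_\eta$. I would try two routes. The first is to take $\eta_k \downarrow 0$ and use monotonicity: the events $A_k := \{\ell^{\text{Sample}} \le \ell^{\text{Uniform}} + \eta_k\}$ decrease to $\{\ell^{\text{Sample}} \le \ell^{\text{Uniform}}\}$. Each $A_k$ has probability at least $1 - \exp(-2N(\alpha-\epsilon)^2)$, and continuity of probability from above passes this bound to the intersection. The second, more direct route is to argue that the infimum is attained. The feasible set in $\f$ is a finite-dimensional set, and $\text{Fail}(\f,\u)$ is monotone in $\f$ with $y$ restricted to the grid $\{v_1,\dots,v_m\}$. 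Its closure properties are delicate because of the strict inequality $> \alpha$, so I would rely on the first route.

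Finally, setting $\exp(-2N(\alpha-\epsilon)^2) \le \delta$ and solving gives
\begin{align}
N \ge \frac{1}{2(\alpha-\epsilon)^2}\ln\frac{1}{\delta},
\end{align}
which yields the second statement.
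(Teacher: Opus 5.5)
Your proposal is correct and follows the paper's proof: both apply Hoeffding's inequality to the i.i.d.\ fail/success indicators of a fixed, deterministic feasible point of the chance-constrained problem, so that this point lies in $\hat\cV_\epsilon$ except with probability at most $\exp(-2N(\alpha-\epsilon)^2)$. The only difference is that the paper assumes an exact optimizer $\f^*_\alpha$ exists, which generally fails because the constraint $\P_{\U}(\text{Fail}(\f,\U)) > \alpha$ is strict; your $\eta$-optimal points combined with continuity from above handle that case correctly.
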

\begin{proof}[Proof of Lemma~\ref{lem:sample_approximation_bound1}]
If $\ell^{\text{Uniform}}$ is infeasible, the claim is trivially true. So we assume that $\ell^{\text{Uniform}}$ is feasible and $\f^*_{\alpha}$ exists.

Theorem $3$ in \citep{luedtke2007sample} considers the case where the chance-constrained program has the form $\P(g(\f, \U) \le 0) \ge 1 - \alpha$ where $g(\f, \U)$ could be a vector and $\le$ is applied component-wise. Our program has the form:
\begin{align}
\P_{\U(0,1]} (\exists \y \in (-\infty, 1]^n: T(\y) \ge T(\x),  \U \leq H^{\geq}_{\v \oplus 1, \f}(\y))  > \alpha. 
\end{align}

The proof of Theorem $3$ still work for this case. We reproduce it with minor adjustment for our case  for completeness. 

 Let $Z_i := 1 - \mathbb{I}( \text{Fail}(\f^*_{\alpha}, \U^i))$ be indicator variables of success. Then $\E[Z_i] = 1- \P(\text{Fail}(\f^*_{\alpha}, \U^i)) < 1-  \alpha$. We have: 
\begin{align}
\P(\ell^{\text{Uniform}}_T(\x, \alpha) <  \ell^{\text{Sample}}_T(\x, \epsilon)) &\le \P ( \f^*_{\alpha} \notin \hat \cV_{\epsilon}) \\
&= \P \left( \frac{1}{N} \sum_i Z_i > 1-  \epsilon \right) \\
&\le \P \left( \frac{1}{N} \sum_i Z_i \ge 1-  \epsilon \right) \\
&\le \P \left( \frac{1}{N} \sum_i Z_i  - \E[Z_i] \ge 1- \epsilon - (1-\alpha) \right) \\
&\le \P \left( \frac{1}{N} \sum_i Z_i  - \E[Z_i] \ge \alpha- \epsilon \right)  \\
&\le \exp\left\{ -2 N(\alpha - \epsilon)^2\right\} \text{ by Hoeffding's inequality.}
\end{align}
 
Therefore: 
\begin{align}
\P(\ell^{\text{Uniform}}_T(\x, \alpha) \ge  \ell^{\text{Sample}}_T(\x, \epsilon)) \ge 1 - \exp\left\{ -2 N(\alpha - \epsilon)^2\right\}.
\end{align}
\end{proof}

We now consider the case where the value of the above $N$ is too large for the MILP. We use another result from Theorem $4$ in \citep{luedtke2007sample} with a modification\footnote{Gemini contributed to the following derivation.}.
We take $\epsilon = \frac{\lfloor N \alpha\rfloor}{N}$, so that $\epsilon  N$ is an integer. We solve the MILP $M$ times each with $\epsilon$ and $N$ samples (note that for each MILP $s$ of the $M$ MILPs, we need to resample $N$ samples of $n$ i.i.d random variables separately, denoted $\U(s)$)  to get $M$ values $\ell^{\text{Sample}}_T(\x, \epsilon,m, N, \U(1)),  \cdots , \ell^{\text{Sample}}_T(\x, \epsilon, m, N, \U(M))$. Each $\U(s) \in (0,1]^{N\times n}, 1 \le s \le M$ is  $N$ vectors of $n$ i.i.d uniform random variables and $\U(s)^{i} \in (0,1]^{n}, 1 \le s \le M, 1 \le i \le N$ is a vector of $n$ i.i.d. uniform random variables. Then we take the smallest value (denoted $\underline \ell^{\text{Sample}}_T( \x, \epsilon,m, N, \U, M )$).  Define
\begin{align}
\rho(N, \alpha, \epsilon):= \sum_{i = 0}^{\lfloor N (1-\epsilon) \rfloor}  {\binom{N}{i}}  (1-\alpha)^i \alpha^{N-i}
\end{align}
to be the probability of having at most $\lfloor N (1-\epsilon)\rfloor$ successes where the probability of success in $1$ trial is $1-\alpha$. 
\begin{lemma}[Minor modifications from Theorem $4$ in  \citep{luedtke2007sample}]
\label{lem:helper_analysis}
Let $\alpha, \epsilon, \delta \in (0,1)$ and $N, M \ge 1$ be positive integers.
If
\begin{align}
(1 - \rho(N, \alpha, \epsilon))^M \le \delta
\end{align}
then 
\begin{align}
\P_{\U}(\underline \ell^{\text{Sample}}_T(\x, \epsilon, m, N, \U, M) \le\ell^{\text{Uniform}}_T(\x, \alpha)) \ge 1 - \delta. 
\end{align}
\end{lemma}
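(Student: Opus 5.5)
The argument reduces the statement to the single–problem argument of Lemma~\ref{lem:sample_approximation_bound1}, combined with independence across the $M$ problems. If $\ell^{\text{Uniform}}_T(\x,\alpha)$ is infeasible (equal to $\infty$) the claim is trivial. Otherwise we fix an optimal (or, if the infimum is not attained, $\eta$-optimal) feasible point $\f^*_\alpha$ of the chance-constrained program, so that $p:=\P_{\U}(\text{Fail}(\f^*_\alpha,\U))>\alpha$. As in the proof of Lemma~\ref{lem:sample_approximation_bound1}, for each problem $s$ the event $\f^*_\alpha\in\hat\cV_\epsilon(\U(s))$ implies $\ell^{\text{Sample}}_T(\x,\epsilon,m,N,\U(s))\le \theta(H_{\v\oplus 1,\f^*_\alpha})-\Delta_m$, which equals (or is within $\eta$ of) $\ell^{\text{Uniform}}_T(\x,\alpha)$. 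Hence the minimum over $s$ fails to be a lower bound only if \emph{every} one of the $M$ problems excludes $\f^*_\alpha$.

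Next I bound the probability that a single problem excludes $\f^*_\alpha$. Let $Z_i:=1-\I(\text{Fail}(\f^*_\alpha,\U(s)^i))$ be i.i.d.\ Bernoulli with success probability $1-p<1-\alpha$. Exclusion means $\sum_i Z_i> N(1-\epsilon)$, i.e.\ the number of successes is at least $\lfloor N(1-\epsilon)\rfloor+1$. Because the success probability $1-p$ is at most $1-\alpha$, stochastic monotonicity of the binomial distribution in its success parameter gives
\begin{align}
\P\Bigl(\sum_i Z_i\le \lfloor N(1-\epsilon)\rfloor\Bigr)\ge \rho(N,\alpha,\epsilon).
\end{align}
So each problem excludes $\f^*_\alpha$ with probability at most $1-\rho(N,\alpha,\epsilon)$.

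Since $\U(1),\dots,\U(M)$ are resampled independently, the $M$ exclusion events are independent. All of them occur with probability at most $(1-\rho(N,\alpha,\epsilon))^M\le\delta$. On the complementary event some $s$ has $\ell^{\text{Sample}}_T(\x,\epsilon,m,N,\U(s))\le\ell^{\text{Uniform}}_T(\x,\alpha)$ (up to $\eta$), and hence so does the minimum $\underline\ell^{\text{Sample}}_T$.

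The main obstacle is the case where the infimum defining $\ell^{\text{Uniform}}_T$ is not attained. I would handle it by running the argument above with an $\eta$-optimal $\f_\eta$, which gives $\P(\underline\ell^{\text{Sample}}_T\le \ell^{\text{Uniform}}_T+\eta)\ge 1-\delta$ for every $\eta>0$. Letting $\eta\downarrow 0$ along a sequence, continuity from above of probability yields the claim. The only other delicate point is matching the strict inequality $>\alpha$ in $\cV_\alpha$ and the non-strict $\ge\epsilon$ in $\hat\cV_\epsilon$ with the integer floor in $\rho$; I would check this via the identity $N-\lceil N\epsilon\rceil=\lfloor N(1-\epsilon)\rfloor$.
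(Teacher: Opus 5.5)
Your proposal is correct and follows the paper's proof essentially step for step. You fix a feasible (near-)optimal $\f^*_\alpha$, bound each problem's probability of excluding it from $\hat\cV_{\epsilon,s}$ by $1-\rho(N,\alpha,\epsilon)$ using monotonicity of the binomial CDF in the success probability, and multiply over the $M$ independent resamplings. Your $\eta$-optimal argument with continuity from above is a genuine tightening: the paper simply assumes the infimum over the strictly-constrained set $\cV_\alpha$ is attained, which need not hold.
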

\begin{proof}
If $\ell^{\text{Uniform}}$ is infeasible, the claim is trivially true. So we assume that $\ell^{\text{Uniform}}$ is feasible and $\f^*_{\alpha}$ exists. 

First we will prove that for $1 \le s \le M$, $\P_{\U(s)}({\ell^{\text{Sample}}_T(\x, \epsilon, N, \U(s))} \le\ell^{\text{Uniform}}_T(\x, \alpha)) \ge \rho(N, \alpha, \epsilon)$ (Lemma $1$ in \citep{luedtke2007sample}). $\rho(N, \alpha, \epsilon)$ is the probability of having at most $\lfloor N (1-\epsilon)\rfloor$ successes where the probability of success in $1$ trial is $1-\alpha$. Let
\begin{align}
\hat \cV_{\epsilon, s} &:= \{\f \in \R^n: \frac{1}{N} \sum_{i=1}^N \mathbb{I}(\text{Fail}(\f, \U(s)^i)) \ge \epsilon\}.
\end{align}
If $\f^*_{\alpha} \in \hat \cV_{\epsilon, s}$ then $\ell^{\text{Sample}}_T(\x, \epsilon, N, \U(s)) \le\ell^{\text{Uniform}}_T(\x, \alpha)$. Therefore
\begin{align}
\P_{\U(s)}({\ell^{\text{Sample}}_T(\x, \epsilon, N, \U(s))} \le\ell^{\text{Uniform}}_T(\x, \alpha)) \ge  
\P_{\U(s)}(\f^*_{\alpha} \in \hat \cV_{\epsilon, s}).
\end{align}
$\f^*_{\alpha} \in \hat \cV_{\epsilon, s}$ if there are at most $\lfloor N(1-\epsilon)\rfloor$ successes when the probability of success in $1$ trial is $1 - \P(\text{Fail}(\f^*_{\alpha}, \U(s)^i)) < 1 - \alpha$. Then for all $s, 1 \le s \le M$: 
\begin{align}
&\P_{\U(s)}({\ell^{\text{Sample}}_T(\x, \epsilon, N, \U(s))} \le\ell^{\text{Uniform}}_T(\x, \alpha)) \\
&\ge
\P_{\U(s)}(\f^*_{\alpha} \in \hat \cV_{\epsilon, s}) \\&\ge\rho(N, \P(\text{Fail}(\f^*_{\alpha}, \U(s)^1)), \epsilon) \\ &\ge  \rho(N, \alpha, \epsilon) \text{ because }1 - \P(\text{Fail}(\f^*_{\alpha}, \U(s)^1)) < 1 - \alpha.
\end{align}

Now we have: 
\begin{align}
&\P_{\U}(\underline\ell^{\text{Sample}}_T(\x, \epsilon, m, N, \U, M) \le\ell^{\text{Uniform}}_T(\x, \alpha)) \\
&= 1 - \P_{\U}(\underline\ell^{\text{Sample}}_T(\x, \epsilon, m, N, \U, M) > \ell^{\text{Uniform}}_T(\x, \alpha)) \\
&= 1 - \P_{\U}({\ell^{\text{Sample}}_T(\x, \epsilon, N, \U(1))} > \ell^{\text{Uniform}}_T(\x, \alpha), \cdots, {\ell^{\text{Sample}}_T(\x, \epsilon, N, \U(M))} >\ell^{\text{Uniform}}_T(\x, \alpha)) \\
&= 1 - \Pi_{s=1}^M \P_{\U(s)}({\ell^{\text{Sample}}_T(\x, \epsilon, N, \U(s))} > \ell^{\text{Uniform}}_T(\x, \alpha)) \\
&= 1 - \Pi_{s=1}^M (1- \P_{\U(s)}({\ell^{\text{Sample}}_T(\x, \epsilon, N, \U(s))} \le \ell^{\text{Uniform}}_T(\x, \alpha)) \\
&\ge 1 - (1-\rho(N,\alpha, \epsilon))^M\\
&\ge 1 - \delta. 
\end{align}
\end{proof}
\begin{lemma}
\label{lem:sample_approximation_bound2}
Let $\epsilon, \delta \in (0,1)$ and $N, M \ge 1$ be positive integers such that $M \ge \log_2\frac{1}{\delta}$ and $N \epsilon$ is an integer. Then if $\epsilon \le \alpha$: 
\begin{align}
\P_{\U}(\underline\ell^{\text{Sample}}_T(\x, \epsilon, m, N, \U, M) \le\ell^{\text{Uniform}}_T(\x, \alpha)) \ge 1 - \delta. 
\end{align}
\end{lemma}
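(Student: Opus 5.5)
The plan is to reduce the claim to Lemma~\ref{lem:helper_analysis}. That lemma gives the desired conclusion as soon as $(1-\rho(N,\alpha,\epsilon))^M \le \delta$. Since $M \ge \log_2\frac{1}{\delta}$ means exactly $(1/2)^M \le \delta$, it suffices to show
\begin{align}
\rho(N,\alpha,\epsilon) \ge \tfrac12 \quad\text{whenever } \epsilon \le \alpha \text{ and } N\epsilon \in \mathbb{N}.
\end{align}
Given this, $(1-\rho)^M \le (1/2)^M \le \delta$, and Lemma~\ref{lem:helper_analysis} finishes the proof.

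To prove $\rho \ge 1/2$, interpret $\rho(N,\alpha,\epsilon)$ as $\P(B \le \lfloor N(1-\epsilon)\rfloor)$, where $B \sim \text{Binomial}(N, 1-\alpha)$.

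\begin{enumerate}
\item Because $N\epsilon$ is an integer, $K := N(1-\epsilon) = N - N\epsilon$ is an integer, so $\lfloor N(1-\epsilon)\rfloor = K$.
\item Because $\epsilon \le \alpha$, we have $K \ge N(1-\alpha) = \E[B]$. Since $K$ is an integer, this gives $K \ge \lceil N(1-\alpha)\rceil$.
\item By monotonicity of the CDF, $\rho \ge \P\bigl(B \le \lceil N(1-\alpha)\rceil\bigr)$.
\end{enumerate}

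So the remaining task is the classical fact that a binomial random variable puts at least half its mass at or below the ceiling of its mean. Equivalently, every median of $\text{Binomial}(N,p)$ lies in $[\lfloor Np\rfloor, \lceil Np\rceil]$; this is the Kaas--Buhrman result. I would cite it directly.

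This step is the main obstacle, because it is the only place where something beyond bookkeeping is needed. If a self-contained argument is preferred, I would try the following.
\begin{itemize}
\item If $Np$ is an integer, use the known result that the mean equals the median for binomials in that case.
\item Otherwise, compare $\P(B \le \lceil Np\rceil)$ with $\P(B \ge \lceil Np\rceil)$ by pairing terms $k$ and $2\lceil Np\rceil - k$ and using log-concavity of the binomial pmf. I expect this to be fiddly, so citing the median result is the cleaner route.
\end{itemize}

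The edge cases are harmless. If $\epsilon = \alpha$, then $K = N(1-\alpha)$ exactly, and the median fact still applies. The strict inequality $1 - \P(\text{Fail}) < 1-\alpha$ was already absorbed inside Lemma~\ref{lem:helper_analysis}, so no extra care is needed here.
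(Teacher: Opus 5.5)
Your proof is correct, but it is organized differently from the paper's. You apply Lemma~\ref{lem:helper_analysis} directly at level $\alpha$. That forces you to show $\rho(N,\alpha,\epsilon)\ge 1/2$ for a binomial whose mean $N(1-\alpha)$ need not be an integer, so you need the general Kaas--Buhrman median bound or the log-concavity pairing.

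The paper instead applies Lemma~\ref{lem:helper_analysis} with $\alpha$ replaced by $\epsilon$. It then only needs $\rho(N,\epsilon,\epsilon)\ge 1/2$, which is the integer-mean case: $N(1-\epsilon)\in\mathbb{N}$, so mean equals median. It transfers the conclusion from $\epsilon$ to $\alpha$ using $\ell^{\text{Uniform}}_T(\x,\epsilon)\le \ell^{\text{Uniform}}_T(\x,\alpha)$ for $\epsilon\le\alpha$. This holds because any $\f$ with $\P(\text{Fail})>\alpha$ also satisfies $\P(\text{Fail})>\epsilon$, so the feasible sets are nested.

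Your route keeps the argument inside one probabilistic estimate at the target level. The paper's route needs only the weaker integer-mean median fact, at the cost of a structural monotonicity property of the optimization problem.

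You can have both advantages without Kaas--Buhrman. When $\epsilon\le\alpha$, $\text{Binomial}(N,1-\alpha)$ is stochastically dominated by $\text{Binomial}(N,1-\epsilon)$, so $\rho(N,\alpha,\epsilon)\ge\rho(N,\epsilon,\epsilon)\ge 1/2$ by the integer-mean case alone. This is the paper's monotonicity step moved from the optimization problem to the binomial tail, and it makes your self-contained fallback unnecessary.
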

\begin{proof}
When $(1-\epsilon) N$ is an integer, the mean and median of binomial random variables are the same \citep{Lord_2010}. Therefore $\rho(N, \epsilon, \epsilon) \ge 1/2$. 
If  $M \ge \log_2\left( \frac{1}{\delta}\right)$ then $(1 - \rho(N, \epsilon, \epsilon))^M \le \delta$. So by Lemma~\ref{lem:helper_analysis}: 
\begin{align}
\P_{\U}(\underline\ell^{\text{Sample}}_T(\x, \epsilon, m, N, \U, M) \le\ell^{\text{Uniform}}_T(\x, \epsilon)) \ge 1 - \delta. 
\end{align}
Since $\epsilon \le \alpha$, ${\ell}^{\text{Uniform}}_T(\x, \epsilon) \le\ell^{\text{Uniform}}_T(\x, \alpha)$. Therefore: 
\begin{align}
\P_{\U}(\underline\ell^{\text{Sample}}_T(\x, \epsilon, m, N, \U, M) \le\ell^{\text{Uniform}}_T(\x, \alpha)) \ge 1 - \delta. 
\end{align}
\end{proof}

\item \textbf{Upper bound.} We leave finding a sample-approximation ${u}^{\text{Sample}}_T(\x, \epsilon, m,  N, \U )$ which is an upper bound of the chance-constrained program $u^{\text{Uniform}}_T(\x, \alpha, m)$ to future works. 
\end{itemize}

\subsubsection{Proofs of Section~\ref{sec:MILP_bound}}
\label{app:MILP}
We consider the case when we use the MILP to maximize $1-\theta$. Let $r$ be the upper bound of the objective $1-\theta$ that the MILP solver output. Then we calculate $u^{\text{MILP}} = 1- r$, and $ \ell^{\text{MILP}} = u^{\text{MILP}} - \Delta_m$. 

\begin{proof}[Proof of Lemma~\ref{lem:milp_bound}]
The proof follows directly from the conversion to the MILP. Given MIP gap $\delta > 0$, solvers  can find a upper bound $ r$ of a maximization MILP program with true objective $r^*:= 1- \theta^*$ such that: 
\begin{align}
\frac{r}{1+\delta} \le r^* \le r. 
\end{align}
Therefore: 
\begin{align}
1 - \frac{r}{1+\delta} \ge \theta^* \ge 1- r. 
\end{align}
\end{proof}

\subsubsection{Proofs of Section~\ref{sec:analysis}}

\begin{proof}[Proof of Theorem~\ref{thm:analysis_extend}]
The first $4$ results are from previous sections. Let 
\begin{align}
s^*  = \arg\min_{s, 1 \le s \le M} \ell^{\text{Sample}}_T(\X, \epsilon, m,  N, \U(s),  \gamma). 
\end{align}
We will omit some inputs of the bound $\ell $ and $b$ when the context is clear to simplify notations. We have: 
\begin{align}
&\P_{\U, \X} (\underline \ell^{\text{MILP}}_T(\X, \epsilon, m,  N, \U, M, \gamma) \le \theta)\\ 
&\ge \P_{\U, \X} (\ell^{\text{MILP}}_T(\X, \epsilon, m,  N, \U(s^*),  \gamma) \le \theta)\\ 
&\ge \P_{\U, \X} (\underline \ell^{\text{Sample}}_T(\X, \epsilon, m,  N, \U, M, \gamma) \le \theta)  \\
&\ge \P_{\U, \X}(\underline\ell^{\text{Sample}}_T(\X, \epsilon) \le\ell^{\text{Uniform}}_T(\X, \alpha - \delta) \text{ AND } \ell^{\text{Uniform}}_T(\X, \alpha - \delta) \le \theta) \\
&= 1 - \P_{\U, \X}(\underline\ell^{\text{Sample}}_T(\X, \epsilon) > \ell^{\text{Uniform}}_T(\X, \alpha - \delta) \text{ OR } \ell^{\text{Uniform}}_T(\X, \alpha - \delta) >  \theta) \\
&\ge 1 - \P_{\U, \X}(\underline\ell^{\text{Sample}}_T(\X, \epsilon) >\ell^{\text{Uniform}}_T(\X, \alpha - \delta)) - \P_{\X}({\ell}^{\text{Uniform}}_T(\X, \alpha - \delta) > \theta) \text{ by union bound}\\
&\ge 1  -\delta - (\alpha - \delta)\\
&= 1 - \alpha .
\end{align}
\end{proof}

\section{Connections to Pivoting the CDF and Inverting Hypothesis Tests}
\label{app:connection}
\subsection{Connection to Pivoting the CDF}
\label{app:pivoting_the_CDF}
In this section we will show that our bound is a generalization of the method of pivoting the CDF. The method of pivoting the CDF \citep{CaseBerg:01} is defined as follow. 
\begin{lemma}[Pivoting a continuous (discrete) CDF \citep{CaseBerg:01}, with minor modifications]
\label{lem:pivot_cdf}
Let $T$ be a statistic with continuous (discrete) cdf $F_T(t|\theta)$. Let $0 < \alpha < 1$. Suppose that for each $t \in \mathcal{T}$, the sample space of $T$, the function $\theta_L(t)$ can be defined as follows: 
\begin{itemize}
    \item If $F_{-T}( -t |\theta)$ is an increasing function of $\theta$ for each $t$, $\theta_L(t)$ is such that:
    \begin{align}
       \P(T \ge t | \theta_L(t)) = \alpha. 
    \end{align}
    \item If $F_T(t|\theta)$ is an increasing function of $\theta$ for each $t$, $\theta_L(t)$ is such that:
    \begin{align}
        \P(T \le t | \theta_L(t)) = \alpha. 
    \end{align}
\end{itemize}
Then $\theta_L(T)$ is a $(1-\alpha)$ lower confidence bound for $\theta$. 
\end{lemma}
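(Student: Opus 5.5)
The plan is to prove Lemma~\ref{lem:pivot_cdf} by showing that the pivoting construction is a special case of Definition~\ref{def:optimal_bound}, with a direct argument through the probability integral transform (Lemma~\ref{lem:cdf_vs_U}). I treat the first case in detail and obtain the second by applying the first to $-T$.

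\textbf{Case 1: $F_{-T}(-t\mid\theta)$ is increasing in $\theta$.} Fix the true parameter $\theta_0$. Write $p(t,\theta):=\P(T\ge t\mid\theta)=F_{-T}(-t\mid\theta)$. By hypothesis $p(t,\cdot)$ is increasing, and $p(t,\theta_L(t))=\alpha$. Monotonicity then gives the key implication
\begin{align}
\theta_L(t) > \theta_0 \;\Longrightarrow\; p(t,\theta_0) \le p(t,\theta_L(t)) = \alpha .
\end{align}
Hence
\begin{align}
\P_{\theta_0}\bigl(\theta_L(T) > \theta_0\bigr) \le \P_{\theta_0}\bigl(p(T,\theta_0)\le \alpha\bigr).
\end{align}

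To bound the right-hand side, set $Z:=-T$. Then $p(T,\theta_0)=F_Z(Z)$ is the probability integral transform of $Z$ under $\theta_0$. Lemma~\ref{lem:cdf_vs_U} gives $\P(F_Z(Z)\le\alpha)\le\alpha$, exactly as in the proof of Theorem~\ref{thm:correct}. This yields $\P_{\theta_0}(\theta_L(T)\le\theta_0)\ge 1-\alpha$.

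Lemma~\ref{lem:cdf_vs_U} holds for arbitrary CDFs. Consequently the continuous and discrete cases need no separate treatment: in the discrete case the inequality is simply not tight.

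\textbf{Case 2: $F_T(t\mid\theta)$ is increasing in $\theta$.} Apply the same argument to the statistic $-T$. The defining equation $\P(T\le t\mid\theta_L(t))=\alpha$ reads $\P(-T\ge -t\mid\theta_L)=\alpha$. This reduces Case 2 to Case 1, with $F_T(t\mid\theta)$ playing the role of the increasing survival function. The event $\theta_L(T)>\theta_0$ again implies $F_T(T\mid\theta_0)\le\alpha$, and the probability integral transform applied to $T$ bounds this event's probability by $\alpha$.

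\textbf{Connection to the paper's bound.} To match the section's purpose, I would also remark that $\theta_L(t)$ coincides with the infimum in Definition~\ref{def:optimal_bound} for the family $\F=\{F_\theta\}$ with order function $T$. By monotonicity, $\{\theta : p(t,\theta)>\alpha\}=(\theta_L(t),\infty)$, possibly with an endpoint issue, and its infimum is $\theta_L(t)$. Validity then also follows from Theorem~\ref{thm:correct}.

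\textbf{Main obstacle.} The main difficulty is the direction of the monotonicity. I read ``increasing'' as non-decreasing, which gives $p(t,\theta_0)\le\alpha$. That weak inequality is precisely what the probability integral transform bound needs, since it controls $\P(Y\le\alpha)$ and not only $\P(Y<\alpha)$. No strictness is required, which also covers plateaus in the discrete case.
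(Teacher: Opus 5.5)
Your proof is correct. The paper gives no proof of this lemma; it is quoted from Casella--Berger. The closest argument in the paper is the theorem in Appendix~\ref{app:pivoting_the_CDF}, which asserts $b^{\F}_T(\x)=\theta_L(T(\x))$ for the parametric family. Validity and optimality of $\theta_L$ would then follow from Theorems~\ref{thm:correct} and~\ref{thm:optimal}.

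Your main route is more direct. You use only the one-sided implication $\theta_L(t)>\theta_0\Rightarrow \P(T\ge t\mid\theta_0)\le\alpha$, and then apply Lemma~\ref{lem:cdf_vs_U} to $Z=-T$. This is the proof of Theorem~\ref{thm:correct} specialised to $\F=\{F_\theta\}$, with monotonicity replacing the infimum characterisation.

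What your route buys is robustness. The paper's equality needs $\P(T\ge t\mid\cdot)$ to be strictly increasing and the parameter set to have no gap just above $\theta_L(t)$. For example, if $\P(T\ge t\mid\theta)=\alpha$ for all $\theta\in[\theta_1,\theta_2]$ and one picks $\theta_L(t)=\theta_1$, then the infimum in Definition~\ref{def:optimal_bound} is at least $\theta_2$. Your argument works for merely non-decreasing dependence on $\theta$, any parameter set, and any $T$, not only continuous or discrete ones. What the paper's identification buys, when its extra hypotheses hold, is more than validity: it shows that $\theta_L$ is the $T$-optimal bound.

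Your side remark should be adjusted accordingly. State it as the inequality $\theta_L(T(\x))\le\inf\{\theta:\P(T\ge T(\x)\mid\theta)>\alpha\}=b^{\F}_T(\x)$, not as a coincidence, since "possibly with an endpoint issue" understates the plateau case. The inequality follows immediately from monotonicity, and it is all your appeal to Theorem~\ref{thm:correct} requires.
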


We show below that when certain conditions are satisfied, the bound defined in Definition~\ref{def:optimal_bound} is equal to the bound defined by pivoting the CDF in Lemma~\ref{lem:pivot_cdf}. 
\begin{theorem}

Let $b^{\mathcal{F}}_T$ be the bound defined in Definition~\ref{def:optimal_bound} and $\theta_L$ be defined in Lemma~\ref{lem:pivot_cdf}. 

Let $\mathcal{F}$ be the set of  distributions $F$ parameterized by $\theta$ such that $F_{-T}(-t|\theta)$ is an increasing function of $\theta$ for each $t$. Then for any sample $\x$:
\begin{align}
    b^{\mathcal{F}}_{T}(\x) = \theta_L(T(\x)). 
\end{align}

Let $\mathcal{F}$ be the set of distributions $F$ parameterized by $\theta$ such that $F_T(t|\theta)$ is an increasing function of $\theta$ for each $t$. Then for any sample $\x$:
\begin{align}
    b^{\mathcal{F}}_{-T}(\x) = \theta_L(T(\x)). 
\end{align}
\end{theorem}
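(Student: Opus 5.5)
The plan is to prove the first claim directly from Definition~\ref{def:optimal_bound}, and then deduce the second claim by a sign flip.

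\textbf{First claim.} In the family $\F$, each distribution is indexed by $\theta$. So the infimum over $F\in\F$ becomes an infimum over parameter values:
\begin{align*}
b^{\F}_T(\x)=\inf\{\theta : \P(T\ge T(\x)\mid\theta)>\alpha\}.
\end{align*}
Set $t:=T(\x)$ and $g(\theta):=\P(T\ge t\mid\theta)=F_{-T}(-t\mid\theta)$. By hypothesis $g$ is increasing in $\theta$, so the feasible set $\{\theta: g(\theta)>\alpha\}$ is an up-set, i.e.\ an interval unbounded above. Its infimum is the threshold where $g$ crosses $\alpha$.

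Lemma~\ref{lem:pivot_cdf} defines $\theta_L(t)$ by $g(\theta_L(t))=\alpha$. I will then show two inequalities.
\begin{enumerate}
\item Any feasible $\theta$ has $g(\theta)>\alpha=g(\theta_L(t))$. Since $g$ is increasing, this forces $\theta>\theta_L(t)$, and hence $b\ge\theta_L(t)$.
\item For any $\theta'>\theta_L(t)$, strict monotonicity gives $g(\theta')>g(\theta_L(t))=\alpha$. So $\theta'$ is feasible, and letting $\theta'\downarrow\theta_L(t)$ gives $b\le\theta_L(t)$.
\end{enumerate}
Together these give $b^{\F}_T(\x)=\theta_L(T(\x))$.

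\textbf{Second claim.} Apply the definition with order function $-T$. The constraint becomes
\begin{align*}
\P(-T(\X)\ge -T(\x)\mid\theta)=\P(T\le t\mid\theta)=F_T(t\mid\theta)>\alpha.
\end{align*}
This is again increasing in $\theta$ by hypothesis, and Lemma~\ref{lem:pivot_cdf} (second bullet) gives $F_T(t\mid\theta_L(t))=\alpha$. The same two-inequality argument then yields $b^{\F}_{-T}(\x)=\theta_L(T(\x))$.

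\textbf{Main obstacle.} The delicate point is the meaning of ``increasing'' and the edge cases.
\begin{itemize}
\item If ``increasing'' means only non-decreasing, step 2 fails, because $g$ could stay equal to $\alpha$ on an interval. In that case the infimum is the right end of the plateau, not $\theta_L(t)$, and $\theta_L(t)$ is not even uniquely defined. I will therefore read ``increasing'' as strictly increasing. With that reading, $\theta_L(t)$ is unique and step 2 goes through.
\item I will also assume that the parameter space is an interval, so that values $\theta'$ slightly above $\theta_L(t)$ actually index members of $\F$.
\item Lemma~\ref{lem:pivot_cdf} presupposes that $\theta_L(t)$ exists. Under that assumption the problem is feasible, so the $\infty$ convention in Definition~\ref{def:optimal_bound} never applies.
\end{itemize}
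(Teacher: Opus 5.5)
Your proof is correct and follows the same route as the paper. Both rewrite the infimum over $\F$ as an infimum over $\theta$ of the set where $\P_\theta(T(\X)\ge T(\x))>\alpha$, use monotonicity in $\theta$ to identify that infimum with the crossing point $\theta_L(T(\x))$, and get the second claim by replacing $T$ with $-T$. Your two-inequality argument and your explicit assumptions (strict monotonicity, existence of $\theta_L(t)$, parameter values just above $\theta_L(t)$) make precise what the paper's terse proof leaves implicit. One small refinement: the parameter space being an interval is not quite enough, because if $\theta_L(t)$ is its right endpoint the program is infeasible and $b^{\F}_T(\x)=\infty\neq\theta_L(t)$; you should instead assume that $\theta_L(t)$ is not a maximal parameter value.
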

\begin{proof}

We have $\P_{\theta} (T(\X) \ge T(\x)) = \P_{\theta} ( - T(\X) \le -T(\x) ) = F_{-T}(-t|\theta)$ is an increasing  function of $\theta$ for each $t$. By Definition~\ref{def:optimal_bound}, $b^{\mathcal{F}}_{T}(\x)$ is the solution of: 
\begin{align}
    \inf_{\theta} \P_{\theta} (T(\X) \ge T(\x)) \gneq \alpha, 
\end{align}
which is equal to $\theta_L(T(\x))$ because $\P_{\theta} (T(\X) \ge T(\x))$ is an increasing function of $\theta$. 

Suppose $\P_{\theta} (T(\X) \le T(\x))$ is an increasing  function of $\theta$ for each $t$. By Definition~\ref{def:optimal_bound}, $b^{\mathcal{F}}_{-T}(\x)$ is the solution of: 
\begin{align}
    \inf_{\theta} \P_{\theta} (- T(\X) \ge - T(\x)) \gneq \alpha, 
\end{align}
which is equal to $\theta_L(T(\x))$  when $\P_{\theta} (T(\X) \le  T(\x))$ is an increasing function of $\theta$. 
\end{proof}

\subsection{Connection to Inverting Test Hypotheses}
\label{app:generalization}

In this section we show that Definition~\ref{def:optimal_bound}  is a minor modification to the method of inverting test hypothesis. Inverting test hypotheses constructs a high-confidence set of the parameter $\theta$, but our method constructs a high-confidence set of the CDF, and then find the lower bound of $\theta$ in the high-confidence set of the CDF.  
We generalize the Definition~\ref{def:optimal_bound} to the case when there is a set $\cC(\X) \subseteq \mathcal{F}$ which is a high confidence set for distribution $F$. We recover Definition~\ref{def:optimal_bound} when 
\begin{align}
    \cC(\x) =\{ F \in \F:  F^{\ge}_{ T(\X)}(T(\x)) > \alpha \}, 
\end{align}
or equivalently the acceptance region of the test is defined as: 
\begin{align}
\cA(F):= \{ \x \in \R^n : F^{\ge}_{ T(\X)}(T(\x)) > \alpha \}.
\end{align}
 \begin{definition}
 \label{def:extended_optimal_bound}
Let $\F$ be a set of distributions and suppose that $\theta(F)$ exists for all $F \in \F$. Given a sample $\x \in \R^n$ and a set of samples  $\cA(F)$, define: 
$$
h^{\F}_{\cA}(\x)=\inf_F \; \theta(F)$$
subject to 
\begin{enumerate}
\item $F \in \cC(\x)$
\end{enumerate}
and let $h^{\F}_{\cA}(\x):=\infty$ if the problem is infeasible. 
\end{definition}
We will show in Theorem~\ref{thm:extended_correct} that if for any distribution $F \in \mathcal{F}$, $\P_F(  \X \in \cA(F)) \ge 1 - \alpha$ then  $h^{\F}_{\cA}(\X)$ is smaller than $\theta(F)$ with probability at least $1-\alpha$. 

\begin{theorem}[Validity]
\label{thm:extended_correct}
Let $\F$ be a set of distributions and suppose that $\theta(F)$ exists for all $F \in \F$.
For any $F \in \F$, if $\P_F( \X \in \cA(F)) \ge 1 - \alpha$ then the  bound produced by Definition~\ref{def:extended_optimal_bound} satisfies:
\begin{align*}
\P_F(h^{\F}_{\cA}(\X) \le \theta(F)) \ge 1-\alpha. 
\end{align*}
\end{theorem}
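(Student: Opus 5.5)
The plan is to reduce the probabilistic statement to a deterministic inclusion of events, as in the first part of the proof of Theorem~\ref{thm:correct}. Fix $F \in \F$ and let $\X$ be $n$ i.i.d.\ samples from $F$. Here $\cC(\x)$ is the confidence set induced by the acceptance region, i.e.\ $\cC(\x) = \{G \in \F : \x \in \cA(G)\}$. This is the reading under which Definition~\ref{def:optimal_bound} is recovered from the displayed choices of $\cC$ and $\cA$.

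\textbf{Step 1 (event inclusion).} I will show that for every sample $\x$ with $\x \in \cA(F)$ we have $h^{\F}_{\cA}(\x) \le \theta(F)$. If $\x \in \cA(F)$, then $F \in \cC(\x)$, because $F \in \F$ by assumption. So the optimization problem in Definition~\ref{def:extended_optimal_bound} is feasible, and $F$ is one of the candidates in the infimum. Hence $h^{\F}_{\cA}(\x) = \inf_{G \in \cC(\x)} \theta(G) \le \theta(F)$. The convention $h = \infty$ for infeasible problems never applies here. This yields the inclusion of events
\begin{align}
\{\X \in \cA(F)\} \subseteq \{h^{\F}_{\cA}(\X) \le \theta(F)\}.
\end{align}

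\textbf{Step 2 (probability).} Monotonicity of $\P_F$ together with the hypothesis gives
\begin{align}
\P_F(h^{\F}_{\cA}(\X) \le \theta(F)) \ge \P_F(\X \in \cA(F)) \ge 1-\alpha,
\end{align}
and $F \in \F$ was arbitrary.

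\textbf{Main obstacle.} The only real subtlety is interpretive and measure-theoretic rather than combinatorial. First, we must make sure $\cC(\x)$ is exactly $\{G\in\F : \x \in \cA(G)\}$; if it were defined otherwise, Step 1 would need that relation as an assumption. Second, the event $\{h^{\F}_{\cA}(\X) \le \theta(F)\}$ must be measurable, or else the inequality has to be read with inner probability. I would handle the second point the same way Theorem~\ref{thm:correct} implicitly does: the lower bound comes from the measurable event $\{\X \in \cA(F)\}$, which is measurable because its probability appears in the hypothesis. As a sanity check, Theorem~\ref{thm:correct} should follow as a special case. With $\cA(F) = \{\x : F^{\ge}_{T(\X)}(T(\x)) > \alpha\}$, the hypothesis $\P_F(\X\in\cA(F)) \ge 1-\alpha$ is exactly what Lemma~\ref{lem:cdf_vs_U} supplies.
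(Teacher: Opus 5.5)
Your proposal is correct and takes essentially the same route as the paper. The paper likewise sets $\cC(\x)=\{F\in\F:\x\in\cA(F)\}$, notes that $\x\in\cA(F)$ makes $F$ feasible so that $h^{\F}_{\cA}(\x)\le\theta(F)$, and then chains $\P_F(h^{\F}_{\cA}(\X)\le\theta(F))\ge\P_F(F\in\cC(\X))\ge\P_F(\X\in\cA(F))\ge 1-\alpha$.
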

\begin{proof}
Let $\cC(\x):=\{F \in \F: \x \in \cA(F)\}.$
If $F \in \F$ and $\x \in \cA(F)$, then $F \in \cC(\x)$   and $h^{\F}_{\cA}(\x) \le \theta(F)$ by definition. Therefore for any $F \in F$: 
\begin{align*}
    \P_{F}(h^{\F}_{\cA}(\X) \le \theta(F)) 
     &\ge  \P_{F}( F \in \cC(\X)) \\
    &\ge  \P_{F}( \X \in \cA(F)) \\
    &\ge 1 - \alpha. 
\end{align*}
\end{proof}

We give some examples of $\mathcal{F}$ and $\cA(F)$ below: 
\begin{itemize}
  \item $\mathcal{F}$ is the set of all continuous distributions. Let $F^{\x}_n(y)$ denoted the empirical cdf created from sample $\x$: 
\begin{align}
    F^{\x}_n(y) = \frac{1}{n} \sum_{i=1}^n \mathbb{I}_{x_i \le y}, ~\forall y \in \R. 
\end{align}
$\cA(F)$ is then defined as:
\begin{align}
    \cA(F) = \{ \x \in \R^n :  \sup_{y \in \R} | F(y) - F^{\x}_n(y) | \le \sqrt{\frac{1}{2n} \log \frac{2}{\alpha}}\}. 
\end{align}
The by DKW inequality \citep{Massart1990}, for any $F \in \F$, we have: 
\begin{align}
    \P_F (\X \in \cA(F)) &= \P_F \left( \sup_{y \in \R} | F(y) - F^{\X}_n(y) | \le \sqrt{\frac{1}{2n} \log \frac{2}{\alpha}} \right) \\
    &\ge 1 - \alpha. 
\end{align}
\item Let $\mathcal{F}$ be the set of all distributions and $\cA(F) =\{ \x \in \R^n: F^{\ge}_{T(\X)}(T(\x)) > \alpha \}.$ Then: 
\begin{align}
        \P_F( \X \in \cA(F) ) &=  \P( F^{\ge}_{T(\X)}(T(\X)) > \alpha ) \\
        &= 1 - \P( F^{\ge}_{T(\X)}(T(\X)) \le \alpha ) \\
        &\ge 1 - \alpha \text{ by Lemma~\ref{lem:cdf_vs_U}}. 
    \end{align}
    \item $\mathcal{F}$ is the set of all distributions on $[0,1]$. We define: 
    \begin{align}
        \cA(F) =\left\{\x \in \R^n: \mu(F) >\bar{\x} - \sqrt{\frac{\ln(1/\alpha)}{2n}} \right\}, 
    \end{align}
     where $\bar{\x}_n$ denote the sample mean of $\x$.
    From Hoeffding's inequality for any $F \in \mathcal{F}$:
    \begin{align}
        \P_F( \X \in \cA(F) ) &=  \P_F\left( \mu(F) > \bar{\X} - \sqrt{\frac{\ln(1/\alpha)}{2n}} \right) \\ &\ge 1- \alpha. 
    \end{align}
\item Let $\mathcal{F}$ be the set of all distributions on $[0,1]$. We define
\begin{align}
&\cA_1(F) =\{ \x \in \R^n: F^{\ge}_{T(\X)}(T(\x)) > \alpha/2 \} \\
&\cA_2(F) =\left\{\x \in \R^n: \mu(F) >\bar{\x}_n - \sqrt{\frac{\ln(2/\alpha)}{2n}} \right\} \\
&\cA(F) = \cA_1(F) \cup \cA_2(F).
\end{align}
We have $\P( \X \notin \cA_1(F) ) \le \alpha/2$ and $\P( \X \notin \cA_2(F) ) \le \alpha/2$. Therefore:
\begin{align}
    &\P( \X \in \cA(F) ) =  \P( \X \in \cA_1(F) \text{ and } \X \in \cA_2(F) ) \\
    &= 1 - \P( \X \in \cA_1(F) \text{ or } \X \in \cA_2(F) ) \\
    &\ge 1 - \P( \X \in \cA_1(F)) - \P( \X \in \cA_2(F) ) \\
    &= 1 - \alpha/2 - \alpha/2 \\
    &= 1- \alpha.
\end{align}
  
\end{itemize}

\section{Related Works}
\label{app:related_works}

\citep{phan2021practical} is the only previous work we are aware of that can be used implicitly to automate the relaxation for many order functions $T$ for non-parametric confidence bounds, as proposed by our work. \cite{phan2021practical} do not explicitly present the result as automating the relaxation, but it can be used as such. The bound in \citep{phan2021practical} is valid for many order functions $T$ but does not have the guarantee to be arbitrarily close to the $T$-optimal bound as ours partially has (Section~\ref{sec:analysis}), and therefore can be strictly looser than the $T$-optimal bound.

We discuss previous works that try to find to solve Definition~\ref{def:optimal_bound} on a case-by-case basis. It can be done by finding an upper bound of: 
\begin{align}
\label{eq:tail}
    \P_G(T(\X) \ge T(\x)).
\end{align}

Below we discuss the results regarding the upper bound of Eq.~\ref{eq:tail} for some specific functions $T$ (Section~\ref{sec:specific_T}), and the results regarding finding a good function $T$ that will result in tight bounds for Definition~\ref{def:optimal_bound} (Section~\ref{sec:good_T}). 
\subsection{Results for Some Order Functions \texorpdfstring{$T$}{T}}
\label{sec:specific_T}
Let $\F$ be the set of all distributions on $[0,1]$ and $T$ be the sample mean. When $\X$ are i.i.d. samples, \citet{luczak2016maximaltailprobabilitysums} find a tight upper bound when $T(\x) \le 1/n$. When $\X$ are independent (and not necessarily identical) samples with the same mean, \citet{bentkus} finds an upper bound for any $\x$ such that $T(\x) > 0$. When $\X$ are i.i.d. samples, \citet{Bentkus01aninequality} provides a tighter bound than \citep{bentkus} for the special case when $\E[X_i] = 1/2$. 

Let $\F$ be the set of all distributions on $[0,1]$ and $T$ be a linear function. When $\X$ are independent (and not necessarily identical) samples with the same mean, \cite{pinelis2006inequalities} also provides an upper bound of the tail bound in Eq.~\ref{eq:tail} for the special case when $\mu(G) = 1/2$.     

Anderson's bound \citep{Anderson1969} is a linear function of the order statistics of the samples, so if we prove a generalization of the result in \citep{luczak2016maximaltailprobabilitysums} that involves a linear function of the order statistics of the samples instead of the sum, we can use Anderson's bound as function $T$ and have a bound for $\F^{[0,\infty)}$ that is better than or equal to Anderson for any sample.  \cite{phan2021practical} also provide an upper bound for any $T$, which leads to results proven to be better than or equal to Anderson's when the support is $[0,1]$ and equal to Anderson's when the support is $[0, \infty)$.  Anderson's bound   is shown to be tighter than Hoeffding's for any sample \citep{Learned-MillerThomas2019}.

\subsection{Evaluating the Performance of the Order Function \texorpdfstring{$T$}{T}}
\label{sec:good_T}

When the set of distributions $\F$ is a family of distribution with unknown parameter $\theta$ such that the family of probability density functions or probability mass functions $\{g(t|\theta): \theta \in \Theta\}$ has a monotone likelihood ratio and $T$ is a sufficient statistic for $\theta$, then Karlin-Rubin theorem states that the test that use $T$ is a uniformly most powerful (UMP) test \citep{CaseBerg:01}. From the Ghost-Pratt identity, using $T$ will result in the bound with the shortest quantity $\E_{\theta_0} [b^{\mathcal{F}, \alpha}_T(\x) - \theta_0]$ \citep{casella1996ghosh, madansky1962more} (which is called the expected excess by \citet{madansky1962more}).

\end{document}